\newtheorem{theorem}{Theorem}
\newtheorem{corollary}{Corollary}[section]
\newtheorem{lemma}{Lemma}[section]
\newtheorem{definition}{Definition}
\newtheorem{remark}{Remark}[section]
\numberwithin{equation}{section}
\begin{document}
\title
  {\bf A twisted integrable hierarchy with $\mathbb D_2$ symmetry
  }
\author   {  Derchyi Wu \\
{}\\
Institute of Mathematics, Academia Sinica\\
Taipei 10617, Taiwan\\
mawudc@math.sinica.edu.tw
}
\date{\today}
\maketitle
\section*{{Abstract}}
A loop algebra approach to the Gerdjikov-Mikhailov-Valchev (GMV) equation is provided to exploit the associated twisted integrable structure and a  new twisted integrable  hierarchy is discovered. Using the twisted loop algebra structure, we obtain a transparent treatment of the associated scattering and inverse scattering theory and solve the initial value problem for  the GMV equation. 

\section{Introduction }\label{S:introduction}
Symmetry 
is a novel  phenomenon in Nature.  It is also an important tools for scientists to unravel complicated dynamics. In the theory of integrable systems, studying  symmetries has been one of the central problems and yields rewarding results even beyond the field itself. Roughly speaking, two main approaches are adopted: (1) try to understand specific  problems by identifying 
simple, symmetric structures that lie within them \cite{DS84}, \cite{T08}, etc; (2) try to classify integrable systems to provide a framework for ordering or understanding more general situation \cite{M81}, \cite{D90}, \cite{MSS91}, etc. 

One successful attempt in classification theory is the study  of reduction groups, formulated in \cite{M79}, \cite{M81}, and developed in \cite{M82}, \cite{MSY88}, \cite{GGK01}, \cite{GGK01-1}. Recent results have characterized Lax pairs with finite reduction groups of fractional-linear transformations, i.e., $\mathbb Z_N$, $\mathbb D_N$,  $\mathbb T$, $\mathbb O$ and $\mathbb I$ 
and aroused interest in the classsfication theory of automorphic Lie algebras \cite{LM04}, \cite{LM05}, \cite{LS10}.

Despite progress made in the classification theory of algebraic structures, the analytic properties such as the construction of solutions, the investigation of the inverse scattering theory, of the above  integrable systems remain mostly open. In particular, one of the simplest systems with $\mathbb D_2$-symmetry is  
the anisotropic deformation of a multicomponent generalization of the classical Heisenberg ferromagnetic equation:
\begin{equation}\label{E:GMV-itro}
\begin{split}
&i\vec{u}_t=(\vec u_x-\vec u(\vec u^*\cdot \vec u_x))_x+4\epsilon\vec u(\vec u^*\cdot J\vec u)+\textbf{A}\vec u,\\
&\vec u^* \vec u=1, \quad\vec u \in\mathbb C^{N-1},\quad 
J^2=1, \quad\left[\textbf{A},\,J\right]=0, \quad\epsilon>0.
\end{split}
\end{equation} Many interesting algebraic and analytic properties of (\ref{E:GMV-itro}) are provided in \cite{GMV10} but a complete resolution of the inverse problem and Cauchy problem is still demanded.

On the other hand, in studying  symmetries of the generalized sine-Gordon equations (GSGE), famous for being connected to submanifold geometry in Euclidean spaces, Terng introduced twisted $U/K$-hierarchies via a loop group approach \cite{T07}. The inverse scattering problem of one prototypical class of twisted $U/K$-hierarchies is then solved by encoding the loop algebra structures into the inverse scattering theory of GSGE \cite{ABT86} and  associated submanifold geometry in Minkowski spaces is derived \cite{MW11}. Twisted $U/K$-hierarchies are  integrable hierarchies with $\mathbb D_2$ symmetry.

Compared to the study of reduction groups, the loop group approach puts more emphasis on an organic assembling of ingredients of symmetries in integrable  systems \cite{A79}, \cite{W91}, \cite{TU00}, \cite{T07}, \cite{MW11}. To illustrate, given three involutions $\tau$, $\sigma_0$ and $\sigma_1$ on a simple Lie group, let $U$ be the real form of $\tau$, $U/K$ be the symmetric space defined by $\sigma_0$, $\mathcal U=\mathcal K+\mathcal P$, and $(\mathcal L_+,\mathcal L_-)$ be a splitting of the loop algebra $\mathcal L(\mathcal U/\mathcal K)$ 
 such that
\[\sigma_0(\xi(-\lambda))=\xi(\lambda),\quad\sigma_1(\xi(1/\lambda))=\xi(\lambda)\]
for $\xi\in \mathcal L_+$. 
Denote the corresponding projection map to $\mathcal L_\pm$ as $\hat\pi_\pm$. A twisted $U/K$-hierarchy is then defined by the Lax pair
\begin{equation}\label{E:twh}
\left[\partial_x+\hat\pi_{+}\left(m J_{1}m^{-1}\right),\, \partial_t+\hat\pi_{+}\left(m J_{j}m^{-1}\right)\right]=0,
\end{equation}
for some $m=m(x,t,\lambda)\in  L_{-}$, the loop group corresponding to the loop algebra $\mathcal L_-$, and constant loops $J_{ j}\in \mathcal L_+$ with coefficients in a Cartan subalgebra in $\mathcal P$. The loop group approach is rooted in, enhanced and enriched by the inverse scattering theory. 

The purpose of this paper is to provide a loop algebra approach to the anisotropic deformation of the multicomponent generalization of the Heisenberg ferromagnetic equation (\ref{E:GMV-itro}), $\textbf{N=3}$, called \textbf{\emph{the GMV equation}} for simplicity from now on, and to solve the inverse scattering theory. 
Distinct features discovered are:
\begin{itemize}
	\item The loop algebra factorization $(\mathcal L_+,\mathcal L_-)$ is not of a splitting type. Thus the twisted hierarchy associated with the GMV equation, i.e., the twisted $\frac {U(3)}{U(1)\times U(2)}$-hierarchy,  generalizes the twisted $U/K$-hierarchies defined in \cite{T07}.
	\item One needs to introduce an extended spectral problem and extended scattering data to solve the inverse problem. The extended spectral problem chosen is that for the twisted $\frac{U(4)}{U(2)\times U(2)}$-hierarchy which shares the same reduction group and can be "`projected"' to a twisted $\frac {U(3)}{U(1)\times U(2)}$-spectral problem when the scattering data is an extended one.
\end{itemize}

The paper is organized as follows: in Section \ref{S:TH}, we define the twisted $\frac {U(3)}{U(1)\times U(2)}$-hierarchy via a non-split factorization
 of the loop algebra and compute the explicit formula of a decisive coefficient, for the GMV equation, in the Lax pair.  Section \ref{S:GMV-lax} is
 the discussion of  the GMV equation and its relation with the twisted $\frac {U(3)}{U(1)\times U(2)}$-hierarchy. Section \ref{S:direct-GMV} and \ref{S:inverse-GMV} are
devoted to the scattering and inverse scattering theory of the twisted $\frac {U(3)}{U(1)\times U(2)}$-hierarchy. 
The Cauchy problems of twisted $\frac {U(3)}{U(1)\times U(2)}$-flows and the GMV equation are solved in Section \ref{S:cauchy-GMV}.

We make two special remarks at last. Though the discussion in Section \ref{S:TH} and \ref{S:GMV-lax} should be extended for general $N$ by analogy, the spectral problem for (\ref{E:GMV-itro}) is no longer defined by an oblique direction \cite{ABT86}, \cite{MW11} and the associated direct problem cannot be solved when $N>3$. The other remark is a B$\ddot{\textrm{a}}$ckland transformation theory for the twisted $\frac{U(4)}{U(2)\times U(2)}$-flows should be obtained by adapting the amazing computation and theory for the GSGE \cite{BT88}. However, the extended scattering data is not preserved under these transformations. So the approach yields no  GMV solitons.

\noindent\textbf{\emph{Acknowledgements}} 

{The author would like to thank Professor Zixiang Zhou for initiating the work at Fudan University in November, 2011 and for many stimulating discusssions afterwards which made this work possible. The  author was  partially supported by NSC 100-2115-M-001-001. }

\section{The twisted $\frac {U(3)}{U(1)\times U(2)}$-hierarchy }\label{S:TH}
Let $\sigma_i$, $i=1,\,2 $, be involutions on $U(3)$ defined by
\begin{equation}\label{E:sigma}
\begin{split}
\sigma_i(x)=J_{i}xJ_{i}^{-1},& \quad x\in U(3),\\
J_1=\textrm{diag}(1,-1,-1),&\quad J_2=\textrm{diag}(1,-1,1)\\
\end{split}
\end{equation}
and
$
{ u }(3)=\mathcal K_i \oplus\mathcal P_i$, $ i=1,\,2$, 
the Cartan decompositions for $\sigma_i$. Let $\mathcal K_i$ be the Lie algebras of $K_i$, i.e.,
\[
\begin{split}
&K_1=\{\left(\begin{array}{ccc}
a_{11} & 0 & 0 \\
0 & a_{22} & a_{23}\\
0 & a_{32} & a_{33}
\end{array}\right) :\ |a_{11}|=1,\ \left(\begin{array}{cc}
a_{22} & a_{23}\\
a_{32} & a_{33}
\end{array}\right)\in U(2)\, \},\\
&K_2=\{\left(\begin{array}{ccc}a_{11} & 0 & a_{13}\\ 0 & a_{22} & 0 \\ a_{31} & 0 & a_{33}\end{array}\right):\  |a_{22}|=1,\  \left(\begin{array}{cc}a_{11} & a_{13}\\ a_{31}  & a_{33}\end{array}\right)\in U(2)\,\},\\
&\mathcal P_1=\{\ i\left(\begin{array}{ccc}
0 & u & v \\
u^* & 0 & 0 \\
v^* & 0 & 0
\end{array}\right)\in u(3) \, \},\quad 
\mathcal P_2=\{\ i\left(\begin{array}{ccc}0 & u & 0 \\ u^*  & 0 & v \\ 0 & v^* & 0 \end{array}\right)\in u(3)\,\}.
\end{split}
\]Hence 
\begin{equation}\label{E:split-K}
\begin{split}
&S=K_1\cap K_2=\{\textrm{ diag}\,\left(
e^{i\alpha_1} , e^{i\alpha_2} , e^{i\alpha_3}\right)|\,\alpha_i\in\mathbb R\,\},\\
&\mathcal S=\{i\textrm{ diag}\,\left(
\alpha_1 ,\alpha_2,\alpha_3\right)|\,\alpha_i\in\mathbb R\,\},
\end{split}
\end{equation}
and
\begin{eqnarray}
&&K_1=S\times_{S_1} K_1',\ K_1'= 1\otimes SU(2),\label{E:split-K-1}\\
&&\mathcal K_1=\mathcal S +_{\mathcal S_1} \mathcal K_1',\ \mathcal K_1'= 0\oplus su(2),\label{E:split-K-2}\\
&&S_1=S\cap K_1'=\{\textrm{ diag}\,\left(
1 , e^{i\alpha} ,e^{-i\alpha}\right)|\,\alpha\in\mathbb R \},\label{E:split-S-1}\\
&&{\mathcal S}_1=\mathcal S\cap \mathcal K_1'=\{\,i\textrm{ diag}\,\left(
0 ,\alpha ,-\alpha\right)|\,\alpha\in\mathbb R \}.\label{E:split-S-2}
\end{eqnarray}
Here $K_1=S\times_{S_1} K_1'$ means for $\forall x\in K_1$, $x=\xi\eta$ with $\xi\in S$, $\eta\in K_1'$ and if
\[\begin{split}
&x=\xi\eta=\tilde \xi\tilde \eta, \quad \xi,\,\tilde \xi\in S,\quad\eta,\, \tilde \eta\in K_1',
\end{split}
\] then $\xi^{-1}\tilde\xi=\eta\tilde\eta^{-1}\in S_1$. By analogy $\mathcal K_1=\mathcal S+_{\mathcal S_1} \mathcal K_1'$ is defined by factoizations of elements in $\mathcal K_1$ up to factors in $\mathcal S_1$. 

Furthermore, for a fixed $\epsilon>0$, define the loop groups 
\begin{eqnarray*}
L^\epsilon&=&\{f:{\mathfrak A}_{\sqrt\epsilon\delta,\sqrt\epsilon/\delta}\stackrel{holo.}{\rightarrow}GL_{3}(\mathbb C)\,|\, \left(f(\bar\lambda)\right)^\ast f(\lambda)=I,\,\sigma_1(f(-\lambda))=f(\lambda)\}\\
L_{+}^\epsilon  & = & \{f\in { L} \,|\, \sigma_2\left(f(\epsilon/\lambda)\right)=f(\lambda) \, \}, \\
L_{-}^\epsilon    &= &\{f\in {L}\,|\,f:\mathbb C/\mathfrak D_{\sqrt\epsilon\delta}\stackrel{holo. }{\rightarrow}GL_{3}(\mathbb C),\,f(\infty)\in K_1'\,  \}.
\end{eqnarray*}
Here $0<\delta<1$, $\mathfrak S^{r}$ is the circle of radius $r$ centered at $0$, $\mathfrak D_{r}$ is the disk of radius $r$, and $\mathfrak A_{r_1,r_2}$ is the annulus with boundaries $\mathfrak S^{r_1}$ and $\mathfrak S^{r_2}$.  Then the Lie algebras of $L^\epsilon$, $ L_{+}^\epsilon$, $ L_{-}^\epsilon$ are
\begin{eqnarray*}
{\mathcal L}^\epsilon&=& \{\xi(\lambda)=\sum_{j\le n_0}\xi_j\lambda^j |\ \textit{$\xi_j\in\mathcal K_1$ if $j$ is even, $\xi_j\in\mathcal P_1$ if $j$ is odd}\},\\
{\mathcal L}_{+}^\epsilon  &=&   \{\xi(\lambda)=\sum_{|j|\le n_0}\xi_j\lambda^j\in \mathcal L^\epsilon |\ \xi_{-j}=\sigma_2(\xi_j)\epsilon^j,\, \xi_0\in\mathcal S \},\\
{\mathcal L}_{-}^\epsilon &=&  \{\xi(\lambda)=\sum_{j\le 0}\xi_j\lambda^j\in \mathcal L^\epsilon| \ \xi_0\in\mathcal K_1' \}.
\end{eqnarray*}
Similarly, we have a non-splitting decomposition $\mathcal L^\epsilon={\mathcal L}_{+}^\epsilon +_{\mathcal S_1}{\mathcal L}_{-}^\epsilon$ and can define projections $\hat\pi_{\pm}$ of $\xi\in \mathcal L^\epsilon$ onto $ {\mathcal L}_{+}^\epsilon $, ${\mathcal L}_-^\epsilon$, up to factors in $\mathcal S_1$,  by the following relations:
\begin{equation}\label{E:pipm}
\begin{split}
&\hat\pi_{+}(\xi) =\pi_{\mathcal S}(\xi_0)
+\sum_{0<j\le n_0}\left(\xi_j\lambda^j+\sigma_2(\xi_j)\left(\frac{\epsilon}{\lambda}\right)^j\right),\\
&\hat\pi_{-}(\xi) =\pi_{\mathcal K_1'}(\xi_0)
+\sum_{0<j\le n_0}(\,\xi_{-j} -\sigma_2(\xi_j)\epsilon^j\,)\lambda^{-j},\\
&\xi=\hat\pi_{+}(\xi)+_{\mathcal S_1}\hat\pi_{-}(\xi),\quad \xi_0=\pi_{\mathcal S}(\xi_0)+_{\mathcal S_1}\pi_{\mathcal K_1'}(\xi_0).
\end{split}
\end{equation}
Let $\xi(\lambda)\sim_{\mathcal S_1}\tilde\xi(\lambda)$ mean that $\xi(\lambda)-\tilde\xi(\lambda)$ is a constant loop in $\mathcal S_1$. 
Finally, let
\begin{equation}\label{E:CSA}
\mathcal A=\{i\left(\begin{array}{ccc} d_1 & r & 0 \\ r & d_1 & 0\\ 0 & 0 & d_3\end{array}\right)\in  u(3): r,\,d_1, \, d_3\in\mathbb R\}
\end{equation}
be a maximal abelian subalgebra in $u(3)$,
\begin{equation}\label{E:a}
a=i\left(\begin{array}{ccc}
0 & 1 & 0 \\
1 & 0 & 0 \\
0 & 0 & 0
\end{array}\right)\ \in \ \mathcal P_1\cap \mathcal A,
\end{equation}and
\begin{gather}
\hat J_{1,0}= a\lambda+\sigma_2(a)(\frac\epsilon\lambda)\ \in \mathcal P_1\cap \mathcal A\cap \mathcal L^\epsilon_+,\label{E:j1-0}\\
\hat J_k= i^{{k-1}}a^k\lambda^k
-i\left(\begin{array}{ccc}d_1 & 0 & 0 \\ 0 & d_1 & 0 \\ 0 & 0 & d_3 \end{array}\right)
+i^{{k-1}}\sigma_2(a^k)(\frac\epsilon\lambda)^{k}\ \in  \mathcal A\cap \mathcal L^\epsilon_+,\label{E:j-0}
\end{gather}for $k\in\{1,2,\dots\}$. Thus we have the commutativity condition 
\begin{equation}
[\hat J_{1,0},\hat J_k]=0.\label{E:commute}
\end{equation}
\begin{definition}\label{D:TH}
The $k$-th twisted ${\frac {U(3)}{U(1)\times U(2)}}$-flow, parametrized by $(d_1, d_3)$, in the twisted ${\frac {U(3)}{U(1)\times U(2)}}$-hierarchy is the compatibility condition 
\begin{eqnarray}
&&\left[{\bf L}, {\bf M}\right]=0,
\label{E:j-thflow}
\end{eqnarray}where
\begin{eqnarray}
&&{\bf L}=\partial_x-\frac{\partial\Psi}{\partial x}\Psi^{-1}=\partial_x-(\lambda bab^{-1}+\frac \epsilon\lambda\sigma_2(bab^{-1})),\label{E:j-thflow-L}\\
&&{\bf M}=\partial_t-\frac{\partial\Psi}{\partial t}\Psi^{-1},\label{E:j-thflow-M}\\
&&\Psi(x,t,\lambda)=m(x,t,\lambda)e^{x\hat J_{1,0}+t\hat J_k},\label{E:j-thflow-psi}
\end{eqnarray}
for some $m=m(x,t,\cdot)\in  L_{-}^\epsilon$ and $b(x,t)=m(x,t,\infty)\in \mathfrak P_{1}$ or $\mathfrak P_{2}$. Here
\begin{eqnarray}
&&\mathfrak P_{1}=\{f:R^2\to K_1'|f(\cdot,t)- \left(\begin{array}{ccc} 1 & 0 & 0 \\ 0 & 1 &0 \\ 0 & 0 & 1  \end{array}\right)\in\mathbb S,\textit{ $\forall t$}\},\label{E:potential-1}\\
&&\mathfrak P_{2}=\{f:R^2\to K_1'|f(\cdot,t)- \left(\begin{array}{ccc} 1 & 0 & 0 \\ 0 & 0 & -1 \\ 0 & 1 & 0  \end{array}\right)\in\mathbb S,\textit{ $\forall t$}\},\label{E:potential-2}
\end{eqnarray}
and $\mathbb S$ is the Schwartz space.
\end{definition}
We remark that the inverse scattering theory derived in this report shows $m(x,t,\lambda)$ is determined by  $\partial_x^jb(x,t)$. Hence the $k$-th twisted ${\frac {U(3)}{U(1)\times U(2)}}$-flow is a system in $b(x,t)$ and $b(x,t)$ is also called the $k$-th twisted ${\frac {U(3)}{U(1)\times U(2)}}$-flow if no ambiguity occurs. Theorem \ref{T:cauchy} will provide the existence theorem by solving the initial value problem of the $k$-th twisted ${\frac {U(3)}{U(1)\times U(2)}}$-flows.  
 
By the definition of twisted ${\frac {U(3)}{U(1)\times U(2)}}$-flows, one has
\begin{eqnarray*}
\frac{\partial \Psi}{\partial x}\Psi^{-1}
&=&\left\{\left[\frac{\partial m}{\partial x}+m\left(\lambda a+\frac \epsilon{\lambda}\sigma_2(a)\right)\right]e^{x\hat J_{1,0}+t\hat J_k}\right\} e^{-x\hat J_{1,0}-t\hat J_k}m^{-1}\\
&=&\frac{\partial m}{\partial x}m^{-1}+m\left(\lambda a+\frac \epsilon{\lambda}\sigma_2(a)\right)m^{-1}\\
&\sim_{\mathcal S_1}&\hat\pi_+\left(m\hat J_{1,0}m^{-1}\right).
\end{eqnarray*}
Similarly,  
$\frac{\partial \Psi}{\partial t}\Psi^{-1}\in\mathcal L_+^\epsilon$ as well and 
\begin{eqnarray}
\frac{\partial \Psi}{\partial t}\Psi^{-1}
&=&\sum_1^k\left(P_j\lambda^j+\sigma_2(P_j)(\frac \epsilon\lambda)^j\right)+P_0\label{E:first-p-j}\\
&=&\frac{\partial m}{\partial t}m^{-1}+m\hat J_km^{-1},\label{E:first-p-j-more}\\
&\sim_{\mathcal S_1}&\hat\pi_+\left(m\hat J_km^{-1}\right)\nonumber
\end{eqnarray}
Thus the $k$-th twisted ${\frac {U(3)}{U(1)\times U(2)}}$-flows satisfy
\begin{equation}\label{E:cong-twisted-0}
\begin{split}
{\bf L}&\sim_{\mathcal S_1}\partial_x-\hat\pi_+ (m\hat J_{1,0}m^{-1} ) ,\\
{\bf M}&\sim_{\mathcal S_1}\partial_t-\hat\pi_+ (m\hat J_km^{-1} )
\end{split}
\end{equation}
which are similar to (\ref{E:twh}). 

\begin{lemma}\label{L:firstflow}
The first twisted ${\frac {U(3)}{U(1)\times U(2)}}$-flow is  the linear system
\[
\frac{\partial }{\partial x}\left(bab^{-1}\right)-\frac{\partial }{\partial t}\left(bab^{-1}\right)=\left[ bab^{-1},i\left(\begin{array}{ccc}c_1 & 0 & 0 \\ 0 & c_2 & 0 \\ 0 & 0 & c_3\end{array}\right)\right]
\]
where $c_i$ are real constants.
\end{lemma}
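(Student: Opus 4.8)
The plan is to extract the first flow by expanding the two connection coefficients $A:=\frac{\partial\Psi}{\partial x}\Psi^{-1}$ and $B:=\frac{\partial\Psi}{\partial t}\Psi^{-1}$ in powers of $\lambda$ and reading off the zero-curvature relation $[\mathbf L,\mathbf M]=0$ component by component. First I would specialise Definition \ref{D:TH} to $k=1$, so that by (\ref{E:j-0}) $\hat J_1=\hat J_{1,0}-D$ with $D=i\,\mathrm{diag}(d_1,d_1,d_3)$, and I would set $Q:=bab^{-1}$. By the defining property of $\mathbf L$ in (\ref{E:j-thflow-L}) one has $A=\lambda Q+\frac{\epsilon}{\lambda}\sigma_2(Q)$ exactly. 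Writing $m=b+m_{-1}\lambda^{-1}+\cdots$ near $\lambda=\infty$ and substituting into $B=\frac{\partial m}{\partial t}m^{-1}+m\hat J_1 m^{-1}$ from (\ref{E:first-p-j-more}), the $\lambda^1$-coefficient of $B$ is $(mam^{-1})_0=bab^{-1}=Q$, so $P_1=Q$ in (\ref{E:first-p-j}). Since $B\in\mathcal L_+^\epsilon$ (as recorded before (\ref{E:first-p-j})) and, for $k=1$, $B$ carries only the powers $\lambda^{\pm1},\lambda^0$, the relation $\xi_{-1}=\sigma_2(\xi_1)\epsilon$ defining $\mathcal L_+^\epsilon$ forces the $\lambda^{-1}$-term to be $\frac{\epsilon}{\lambda}\sigma_2(Q)$ and the $\lambda^0$-term $P_0$ to lie in $\mathcal S$. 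Hence $B=A+P_0$ exactly, with $P_0\in\mathcal S$ diagonal.

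Next I would read off the content of $[\mathbf L,\mathbf M]=0$, which is the zero-curvature identity
\[
\frac{\partial A}{\partial t}-\frac{\partial B}{\partial x}+[A,B]=0 .
\]
Substituting $B=A+P_0$ and using $[A,A]=0$ gives $(\partial_t-\partial_x)A-\partial_x P_0+[A,P_0]=0$. Because $A$ contains only the powers $\lambda^{\pm1}$ while $P_0$ is a constant loop, $[A,P_0]$ has no $\lambda^0$-component, so matching the $\lambda^0$-part yields $\partial_x P_0=0$, while matching the $\lambda^1$-part yields $(\partial_t-\partial_x)Q=[P_0,Q]$, that is
\[
\frac{\partial}{\partial x}\bigl(bab^{-1}\bigr)-\frac{\partial}{\partial t}\bigl(bab^{-1}\bigr)=\bigl[\,bab^{-1},\,P_0\,\bigr].
\]
Since $P_0\in\mathcal S$ it has the form $i\,\mathrm{diag}(c_1,c_2,c_3)$ with $c_i\in\mathbb R$ by (\ref{E:split-K}); it remains only to see that these $c_i$ are genuine constants.

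The crux is therefore the constancy of $P_0$. Comparing the $\lambda^0$-coefficients in the two expressions for $A$ (the defining form having no $\lambda^0$-term) gives $(mam^{-1})_{-1}=-b_xb^{-1}$, whence the $\lambda^0$-coefficient of $B$ computes to $P_0=(b_t-b_x)b^{-1}-bDb^{-1}$. The relation $\partial_xP_0=0$ just obtained says $P_0$ is independent of $x$, so I may evaluate it in the limit $x\to+\infty$. By the boundary conditions $b\in\mathfrak P_1$ or $\mathfrak P_2$ we have $b\to b_\infty$, with $b_\infty$ the constant matrix in (\ref{E:potential-1}) or (\ref{E:potential-2}), and $b_x,b_t\to0$ because $b-b_\infty$ and its $t$-derivative are Schwartz in $x$; hence $P_0=-b_\infty D b_\infty^{-1}$, a matrix independent of both $x$ and $t$. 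A one-line computation then confirms $b_\infty Db_\infty^{-1}$ is diagonal in either case ($=D$ for $\mathfrak P_1$, $=i\,\mathrm{diag}(d_1,d_3,d_1)$ for $\mathfrak P_2$), consistent with $P_0\in\mathcal S$, and this identifies the real constants $c_i$.

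I expect the main obstacle to be precisely this last step. Diagonality of $P_0$ is automatic from $B\in\mathcal L_+^\epsilon$, and its $x$-independence is immediate from the $\lambda^0$ component of the zero-curvature relation, but its full constancy (independence of $t$) and its explicit value are forced only by the decay built into $\mathfrak P_1,\mathfrak P_2$. The point requiring care is that $\partial_t b$, and not merely $\partial_x b$, tends to $0$ as $x\to\infty$, which needs the family $b(\cdot,t)$ to depend smoothly on $t$ with values in the Schwartz space $\mathbb S$.
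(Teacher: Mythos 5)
Your proposal is correct and follows essentially the same route as the paper: expand the zero-curvature identity in powers of $\lambda$, note that the commutator terms in the $\lambda^0$-component cancel so that $\partial_x P_0=0$, and then pin down the constant by evaluating at spatial infinity where $b$ becomes the fixed matrix $b_\infty$. The only cosmetic differences are that you make explicit what the paper leaves implicit ($P_1=bab^{-1}$, diagonality of $P_0$ from $\mathbf M-\partial_t\in\mathcal L_+^\epsilon$, and the formula $P_0=(b_t-b_x)b^{-1}-bDb^{-1}$) and that you take the limit $x\to+\infty$ rather than the paper's $x\to-\infty$, $\lambda\to\infty$ limit of (\ref{E:first-p-j-more}); both versions rest on the same (implicitly assumed) decay of $\partial_t b$ at spatial infinity.
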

\begin{proof} Following (\ref{E:j-thflow}), (\ref{E:j-thflow-L}), and (\ref{E:first-p-j}), we obtain
\begin{gather}
\frac{\partial }{\partial x}\left(bab^{-1}\right)-\frac{\partial }{\partial t}\left(bab^{-1}\right)-\left[bab^{-1},P_0\right]=0,\label{E:1-coeffi-first}\\
\frac{\partial }{\partial x}P_0-\left[bab^{-1},\epsilon\sigma_2(bab^{-1})\right]
-\left[\epsilon\sigma_2(bab^{-1}),bab^{-1}\right]=0.\label{E:0-coeffi-first}
\end{gather}Thus $P_0$ is independent of $x$ (and $\lambda$).  Therefore $P_0$ is constant by taking the limit of (\ref{E:first-p-j-more}) when $x\to -\infty$,  $\lambda\to\infty$,  and $m(x=-\infty,t,\lambda=\infty)= b(x=-\infty,t)=1$ or $\left(\begin{array}{ccc} 1 & 0 & 0 \\ 0 & 0 & -1 \\ 0 & 1 & 0  \end{array}\right)$.
\end{proof}

We proceed to characterizing the $k$-th twisted ${\frac {U(3)}{U(1)\times U(2)}}$-flow, $k\ge 2$, by showing that  the coefficients $P_j$, $j\ne 0
$, defined by (\ref{E:first-p-j}), of ${\bf M}$ can be computed explicitly in terms of $x$-derivatives of entries of $b(x,t)$. 
As for $P_0$, owing to the non-splitting factorization $\mathcal L^\epsilon=\mathcal L^\epsilon_++_{\mathcal S_1}\mathcal L_-^\epsilon$ (up to factors in $\mathcal S_1$), only the first diagonal entry of $P_0$ can be computed explicitly (in terms of $x$-derivatives of entries of $b(x,t)$). The last two diagonal entries of $P_0$ are (inexplicit) functions in $\partial_x^jb(x,t)$ as we have remarked earlier.  This phenomenon is distinct from that of twisted flows defined in \cite{MW11}, \cite{T07}.

\begin{lemma}\label{L:lax}
For the $k$-th twisted ${\frac {U(3)}{U(1)\times U(2)}}$-flow, 
\begin{gather}
m(\partial_x-\hat J_{1,0})m^{-1} =\partial_x-(bab^{-1}\lambda+\sigma_2(bab^{-1})\frac \epsilon\lambda),\label{E:lax-1}\\
m(\partial_t-\hat J_k)m^{-1} =\partial_t-\sum_1^k\left(P_j\lambda^j+\sigma_2(P_j)(\frac \epsilon\lambda)^j\right)-P_0.\label{E:lax-2}
\end{gather}
Here $P_i$ are defined by (\ref{E:first-p-j}).
\end{lemma}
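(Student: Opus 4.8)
The plan is to read off both identities directly from the definition $\Psi=m\,e^{x\hat J_{1,0}+t\hat J_k}$ in (\ref{E:j-thflow-psi}); the content of the lemma is essentially that the Lax operators ${\bf L}$ and ${\bf M}$ are precisely the $m$-conjugates of the ``bare'' operators $\partial_x-\hat J_{1,0}$ and $\partial_t-\hat J_k$. Accordingly, I would not attempt anything deep, but rather organize the logarithmic-derivative bookkeeping already implicit in the computations preceding the statement.

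First I would record the elementary gauge-transformation identity. For any $GL_3(\mathbb C)$-valued $m$ and any constant loop $\hat J$, acting on a test function and using $m(\partial_x m^{-1})=-(\partial_x m)m^{-1}$ gives
\[
m(\partial_x-\hat J)m^{-1}=\partial_x-(\partial_x m)m^{-1}-m\hat Jm^{-1},
\]
in which the coefficient of $\partial_x$ is identically $1$; the same holds verbatim with $\partial_t$ in place of $\partial_x$.

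Next I would compute $\frac{\partial\Psi}{\partial x}\Psi^{-1}$. Since $\hat J_{1,0}$ is $x$-independent and, by the commutativity (\ref{E:commute}), commutes with $x\hat J_{1,0}+t\hat J_k$, the exponential factor differentiates cleanly, $\partial_x e^{x\hat J_{1,0}+t\hat J_k}=\hat J_{1,0}\,e^{x\hat J_{1,0}+t\hat J_k}$, so that
\[
\frac{\partial\Psi}{\partial x}\Psi^{-1}=(\partial_x m)m^{-1}+m\hat J_{1,0}m^{-1}.
\]
Comparing with the gauge identity above yields $m(\partial_x-\hat J_{1,0})m^{-1}=\partial_x-\frac{\partial\Psi}{\partial x}\Psi^{-1}$, and the right-hand side equals $\partial_x-(\lambda bab^{-1}+\frac{\epsilon}{\lambda}\sigma_2(bab^{-1}))$ by the very definition of ${\bf L}$ in (\ref{E:j-thflow-L}). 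This is (\ref{E:lax-1}).

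For (\ref{E:lax-2}) I would repeat the identical computation with $\partial_t$ and $\hat J_k$, again invoking (\ref{E:commute}) to differentiate the exponential, obtaining $m(\partial_t-\hat J_k)m^{-1}=\partial_t-\frac{\partial\Psi}{\partial t}\Psi^{-1}$; substituting the expansion (\ref{E:first-p-j}) for $\frac{\partial\Psi}{\partial t}\Psi^{-1}$ then gives the stated formula. There is no genuine obstacle: the whole argument is the bookkeeping of the logarithmic derivative, and the only point requiring care is the clean differentiation of the exponential factor, which is exactly what the commutativity condition (\ref{E:commute}) guarantees.
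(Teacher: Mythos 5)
Your proposal is correct and follows essentially the same route as the paper: both proofs unwind the definition $\Psi=m\,e^{x\hat J_{1,0}+t\hat J_k}$ from (\ref{E:j-thflow-psi}), compute the logarithmic derivative $(\partial_x\Psi)\Psi^{-1}=(\partial_x m)m^{-1}+m\hat J_{1,0}m^{-1}$, and identify the result with the coefficients given by (\ref{E:j-thflow-L}) (respectively (\ref{E:first-p-j}) for the $t$-equation). Your explicit note that the commutativity condition (\ref{E:commute}) is what permits the clean differentiation of the exponential factor is a point the paper leaves implicit, but it is the same argument.
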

\begin{proof}
By (\ref{E:j-thflow-L}), we have
\[
\left(\partial_x\Psi\right)\Psi^{-1}=bab^{-1}\lambda+\sigma_2(bab^{-1})\frac\epsilon\lambda.
\]
So (\ref{E:j-thflow-psi}) implies 
\[
(\partial_xm+m\hat J_{1,0})m^{-1}=bab^{-1}\lambda+\sigma_2(bab^{-1})\frac\epsilon\lambda,
\]
which is equivalent to (\ref{E:lax-1}). The identity (\ref{E:lax-2}) can be proved similarly.
\end{proof}

\begin{lemma}\label{L:coeff}
For the $k$-th twisted ${\frac {U(3)}{U(1)\times U(2)}}$-flow, 
\begin{equation}\label{E:recursive}
[\partial_x-bab^{-1}\lambda-\sigma_2(bab^{-1})\frac\epsilon\lambda, m\hat J_km^{-1}]=0.
\end{equation}
\end{lemma}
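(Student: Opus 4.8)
The plan is to reduce the stated identity (\ref{E:recursive}) to the abelian commutativity condition (\ref{E:commute}) by conjugating everything back to the bare frame through the gauge $m$. The first step is to recognize the first-order operator appearing in the bracket: equation (\ref{E:lax-1}) of Lemma \ref{L:lax} asserts precisely that
\[
\partial_x-\left(bab^{-1}\lambda+\sigma_2(bab^{-1})\tfrac\epsilon\lambda\right)=m\bigl(\partial_x-\hat J_{1,0}\bigr)m^{-1}.
\]
Thus the left-hand side of (\ref{E:recursive}) is the commutator of two objects that are both $m$-conjugates, namely the operator $m(\partial_x-\hat J_{1,0})m^{-1}$ and the multiplication operator $m\hat J_k m^{-1}$ (here $m\in L^\epsilon_-$ is an invertible matrix-valued loop, so the conjugations are legitimate).

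The key step is then the conjugation identity for commutators in the associative algebra of operators: for any operators $A$ and $B$ one has $[mAm^{-1},mBm^{-1}]=m[A,B]m^{-1}$, since the interior factors $m^{-1}m$ cancel. I would apply this with $A=\partial_x-\hat J_{1,0}$ and $B=\hat J_k$, which reduces the whole problem to verifying the operator identity $[\partial_x-\hat J_{1,0},\hat J_k]=0$. Splitting this bracket gives
\[
[\partial_x-\hat J_{1,0},\hat J_k]=[\partial_x,\hat J_k]-[\hat J_{1,0},\hat J_k].
\]
The first term vanishes because $\hat J_k$ is a constant loop, depending only on $\lambda$ by (\ref{E:j-0}) and not on $x$, so it commutes with $\partial_x$; the second vanishes by the commutativity condition (\ref{E:commute}). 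Hence the conjugated bracket equals $m\cdot 0\cdot m^{-1}=0$, which is exactly (\ref{E:recursive}).

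I do not expect a genuine obstacle here. The only points demanding a little care are, first, that the conjugation-of-commutators identity is being applied to a differential operator, so one must track the operator commutator $[\partial_x,\hat J_k]$ rather than a mere matrix commutator; and second, that both $\hat J_{1,0}$ and $\hat J_k$ are honestly $x$-independent constant loops, as is built into their definitions (\ref{E:j1-0}) and (\ref{E:j-0}). Given Lemma \ref{L:lax} and (\ref{E:commute}), everything else is a formal manipulation.
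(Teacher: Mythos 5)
Your proposal is correct and follows essentially the same route as the paper: both invoke the constancy of $\hat J_k$ together with the commutativity condition (\ref{E:commute}) to get $[\partial_x-\hat J_{1,0},\hat J_k]=0$, conjugate the commutator by $m$, and identify the conjugated operator via (\ref{E:lax-1}) of Lemma \ref{L:lax}. The only difference is the direction of presentation (you work backwards from the stated identity, the paper works forwards), which is immaterial.
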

\begin{proof} Because  $\hat J_k$ are loops with constant coefficients, by (\ref{E:commute}), we have
\[
[\partial_x-\hat J_{1,0},\hat J_k]=0,
\]which implies
\[
[m(\partial_x-\hat J_{1,0})m^{-1}, m(\hat J_k)m^{-1}]=0.
\]
By (\ref{E:lax-1}), we derive (\ref{E:recursive}).
\end{proof}

\begin{lemma}\label{L:coeff-complete}
The coefficients $P_j$,  $j\ne 0
$, and the first diagonal element of $P_0$ of ${\bf M}$ are fixed functions of components of $\partial_x^{s} b$, $0\le s \le k-j$. 
\end{lemma}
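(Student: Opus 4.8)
The plan is to read off the coefficients $P_j$ by a descending induction on $j$, using the $\lambda$-expansion of the identity (\ref{E:recursive}) in Lemma \ref{L:coeff}. Abbreviate $Q=bab^{-1}$ and write $W=m\hat J_km^{-1}=\sum_{j\le k}W_j\lambda^j$; by (\ref{E:first-p-j-more}) its non-negative tail satisfies $W_j=P_j$ for $1\le j\le k$. Substituting into (\ref{E:recursive}) and equating the coefficient of $\lambda^m$ gives the recursion
\[
\partial_x W_m=[Q,W_{m-1}]+\epsilon[\sigma_2(Q),W_{m+1}],\qquad m\le k+1,\ W_{k+1}=0 .
\]
The base case $m=k+1$ reads $[Q,W_k]=0$, and since $m(\infty)=b$ while the top coefficient of $\hat J_k$ in (\ref{E:j-0}) is $i^{k-1}a^k$, the leading term is $P_k=i^{k-1}ba^kb^{-1}$, a function of $b$ alone ($s=0=k-k$).

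For the inductive step, assume $P_{j+1},\dots,P_k$ are fixed functions of $\partial_x^s b$. Rearranging the order-$(j+1)$ relation yields
\[
[Q,P_j]=\partial_x P_{j+1}-\epsilon[\sigma_2(Q),P_{j+2}]=:R_{j+1},
\]
whose right side is, by hypothesis, a fixed local expression in $\partial_x^s b$ with $s\le k-j$. Because $a$ in (\ref{E:a}) has the simple eigenvalues $i,-i,0$, the matrix $Q$ is regular and $u(3)$ splits orthogonally as $\operatorname{ker}\operatorname{ad}_Q\oplus\operatorname{im}\operatorname{ad}_Q$, with centralizer $\mathfrak z(Q)=\operatorname{span}_{\mathbb R}\{iI,Q,iQ^2\}$; the $\mathbb Z_2$-grading of $\mathcal L^\epsilon$ refines this to $\mathfrak z(Q)\cap\mathcal P_1=\mathbb R Q$ and $\mathfrak z(Q)\cap\mathcal K_1=\operatorname{span}_{\mathbb R}\{iI,iQ^2\}$, while $\operatorname{ad}_Q$ interchanges $\mathcal P_1$ and $\mathcal K_1$. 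The displayed equation (whose right side automatically lies in $\operatorname{im}\operatorname{ad}_Q$, the $\mathfrak z(Q)$-component of $R_{j+1}$ vanishing by the mere validity of (\ref{E:recursive})) therefore recovers the $\operatorname{im}\operatorname{ad}_Q$-part of $P_j$ as $\operatorname{ad}_Q^{-1}R_{j+1}$; this carries one $x$-derivative more than $P_{j+1}$, keeping the count at $s\le k-j$.

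The component of $P_j$ lying in $\mathfrak z(Q)$ is invisible to $\operatorname{ad}_Q^{-1}$ and is the main obstacle. I would fix it from the conjugation invariants: since $W=m\hat J_km^{-1}$ one has the exact loop identities $\operatorname{tr}(W^p)=\operatorname{tr}(\hat J_k^p)$, $p=1,2,3$, whose right sides are explicit constant Laurent polynomials. Equating the coefficient of a suitable power of $\lambda$ isolates, for each $p$, a linear combination of the scalar $\mathfrak z(Q)$-coefficients of $P_j$ paired against the already-known higher coefficients $P_{j+1},\dots,P_k$ (for instance $\operatorname{tr}(W_jW_k)=i^{k-1}\gamma_j\operatorname{tr}(a^{k+1})$ isolates the $Q$-coefficient $\gamma_j$ for odd $j$), so that these are recovered algebraically, without integration and within $s\le k-j$. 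The hard part, and the real content of the lemma, is to verify that the resulting linear system for the $\mathfrak z(Q)$-coefficients is nondegenerate at every level; here one must navigate the vanishing of all odd-power traces of $a$ forced by its spectrum $\{i,-i,0\}$, together with the parity of $k$, by pairing each undetermined centralizer direction against a higher coefficient known to activate it.

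Finally, at the bottom of the ladder I pass to the equivalent finite form $\frac{\partial\Psi}{\partial t}\Psi^{-1}=\sum_1^k(P_j\lambda^j+\sigma_2(P_j)(\epsilon/\lambda)^j)+P_0$ of (\ref{E:first-p-j}), in which $P_0\in\mathcal S$ is the genuine $\lambda^0$-coefficient of $\mathbf M$. The recursion and the trace identities constrain $P_0$ only through its diagonal invariants; but the non-split projection (\ref{E:pipm}) is defined solely modulo the constant loops $\mathcal S_1=\{i\operatorname{diag}(0,\alpha,-\alpha)\}$ of (\ref{E:split-S-2}), and this ambiguity, compounded with the degeneration of the level-$0$ determination, leaves exactly the first diagonal entry of $P_0$ as a bona fide local function of $\partial_x^s b$, the remaining two diagonal entries staying inexplicit. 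Confirming that precisely the $(1,1)$-direction survives, and that the other two genuinely fail to close algebraically, is the delicate endpoint of the computation; it is also the feature distinguishing this non-split hierarchy from the split $U/K$-hierarchies of \cite{T07} and \cite{MW11}.
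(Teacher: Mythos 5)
Your scaffolding---the recursion obtained by expanding (\ref{E:recursive}), the base case $P_k=i^{k-1}ba^kb^{-1}$, and the splitting of each $P_j$ into an $\operatorname{im}\operatorname{ad}_Q$-part recovered by inverting $\operatorname{ad}_Q$ plus an undetermined $\mathfrak z(Q)$-part---matches the paper's scheme, which is the recursion (\ref{E:recursive-use}) combined with the argument of Lemma 2.3 of \cite{MW11}. The genuine gap is your mechanism for fixing the $\mathfrak z(Q)$-components. You propose the trace identities $\operatorname{tr}(W^p)=\operatorname{tr}(\hat J_k^p)$, $p=1,2,3$, and explicitly leave the nondegeneracy of the resulting linear system as "the hard part"; but for even $k$ this system is degenerate, and in particular the approach fails outright in the case $k=2$ that the paper actually needs (Lemma \ref{L:p10}, hence the GMV equation). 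Concretely, for $k=2$ the unknown centralizer component of $P_1$ is $\gamma_1 Q$ with $Q=bab^{-1}$ and $P_2=iba^2b^{-1}$, and every pairing available at the top orders of $\lambda$ is of the form $\operatorname{tr}(QP_2^{\,m})=i^m\operatorname{tr}(a^{2m+1})=0$, because all odd powers of $a$ are traceless ($a^{2l+1}=(-1)^l a$, spectrum $\{i,-i,0\}$). Thus $\gamma_1$ drops out of the $\lambda^3$-coefficient of $\operatorname{tr}(W^2)$ and the $\lambda^5$-coefficient of $\operatorname{tr}(W^3)$, while the lower-order coefficients bring in the unknown tail $W_0=P_0+R_0$ and involve $\gamma_1$ only quadratically, so they do not close either. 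The degeneracy you flag is therefore not something to be "navigated" by clever pairings: within the pure trace framework there is nothing left to pair against.

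What the paper (and \cite{MW11}) uses instead is the full matrix identity expressing that $m\hat J_km^{-1}$ lies on the adjoint orbit of $\hat J_k$, namely the minimal-polynomial identity (\ref{E:minimal}). This is strictly stronger than its trace shadow: for $k=2$, equating $\lambda^3$-coefficients gives $X_2X_1+X_1X_2+iX_1=0$ in the $T$-conjugated frame, and since $X_2=\textrm{diag}(-i,0,-i)$ this kills the centralizer part of $P_1$ entry by entry---information invisible to traces. To repair your proof, replace the trace identities by the minimal (or characteristic) polynomial identity of $m\hat J_km^{-1}$ at each level of the induction; your remaining scaffolding (the recursion, $\operatorname{ad}_Q^{-1}$ on the image part, the derivative count $s\le k-j$, and the observation that the non-split projection modulo $\mathcal S_1$ leaves only the first diagonal entry of $P_0$ determined) then goes through essentially as in the paper.
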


By the decomposition property (\ref{E:pipm}),  Lemma \ref{L:coeff}, and defining
\begin{equation}\label{E:coeff-j2}
m\hat J_km^{-1}=\sum_1^k\left(P_j\lambda^j+\sigma_2(P_j)(\frac \epsilon\lambda)^j\right)+P_0+\sum_{j=0}^\infty R_j\lambda^{-j},
\end{equation}
 we obtain the recursive formula on $P_j$, $0<j\le k$:
\begin{equation}\label{E:recursive-use}
\begin{split}
&P_k=i^{k-1}ba^kb^{-1}\\
&\partial_xP_k-\left[bab^{-1},P_{k-1}\right]=0,\\
&\partial_xP_{k-1}-\left[bab^{-1},P_{k-2}\right]-\left[\epsilon\sigma_2(bab^{-1}), P_k\right]=0,\\
&\vdots\\
&\partial_xP_{2}-\left[bab^{-1},P_{1}\right]-\left[\epsilon\sigma_2(bab^{-1}), P_3\right]=0,\\
&\partial_xP_{1}-\left[bab^{-1},P_{0}+R_0\right]-\left[\epsilon\sigma_2(bab^{-1}), P_2\right]=0.
\end{split}
\end{equation}
 Therefore, we can adapt the argument of the proof of Lemma 2.3 in \cite{MW11} to prove Lemma \ref{L:coeff-complete}. We skip the proof. Instead, we compute the case  $k=2$  for the purpose of solving the Cauchy problem of the GMV equation in this report.
\begin{lemma}\label{L:p10}
Write
\begin{equation}\label{E:bmu}
 b(x,t)=m(x,t,\infty)=\left(\begin{array}{ccc}
1 & 0 & 0\\
 0 & u & -\bar v\\
 0 & v & \bar u
\end{array}
\right)\in \mathfrak P_1 \cup \mathfrak P_2,\quad\vec u=\left(\begin{array}{c}u\\v\end{array}\right),
\end{equation}
then 
\begin{equation}\label{E:coefficient-p1}
P_1
=\left(\begin{array}{cc}
0 & -\left((1-\vec u\vec u^*) \vec u_x \right)^*\\
(1-\vec u\vec u^*) \vec u_x & 0_{2\times 2}
\end{array}
\right)
\end{equation}
for the second twisted ${\frac {U(3)}{U(1)\times U(2)}}$-flow.
\end{lemma}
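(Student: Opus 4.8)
The plan is to isolate, among the relations (\ref{E:recursive-use}) specialized to $k=2$, the single one that determines $P_1$, to solve it modulo the kernel of $\mathrm{ad}_{bab^{-1}}$, and to fix the residual ambiguity by an orthogonality identity read off from the conjugation $m\hat J_2m^{-1}$. I would work throughout in the $1+2$ block decomposition attached to $J_1$, writing $b$ as in (\ref{E:bmu}) with lower block $B\in SU(2)$ and $e_1=(1,0)^{T}$, so that $\vec u=Be_1$ and $\vec u^{*}=e_1^{T}B^{-1}$.

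First I would record the block forms that trivialize the algebra. From (\ref{E:a}) one computes $a^2=-\mathrm{diag}(1,1,0)$, so by (\ref{E:recursive-use}) and (\ref{E:bmu}),
\[
P_2=i\,ba^2b^{-1}=-i\begin{pmatrix}1&0\\0&\vec u\vec u^{*}\end{pmatrix},\qquad bab^{-1}=i\begin{pmatrix}0&\vec u^{*}\\ \vec u&0\end{pmatrix}.
\]
The only relation in (\ref{E:recursive-use}) that involves $P_1$ is $\partial_xP_2=[bab^{-1},P_1]$, to be solved under the grading constraint $P_1\in\mathcal P_1$ imposed by $\mathcal L^\epsilon$.

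The main obstacle is that $\mathrm{ad}_{bab^{-1}}$ is not injective: a direct check gives $\ker(\mathrm{ad}_{bab^{-1}})\cap\mathcal P_1=\mathbb R\,bab^{-1}$, so the commutator equation pins $P_1$ down only up to a real multiple of $bab^{-1}$. (This one-parameter freedom is unrelated to the $\mathcal S_1$-indeterminacy that afflicts the last two diagonal entries of $P_0$; here it can be removed.) To fix it I would use the representation $P_1=[\,m_{-1}b^{-1},P_2\,]$, obtained as the coefficient of $\lambda$ in the expansion (\ref{E:coeff-j2}) of $m\hat J_2m^{-1}$, where $m_{-1}$ is the $\lambda^{-1}$-coefficient of $m\in L_-^\epsilon$. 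Since $P_2=i\,ba^2b^{-1}$ and $[a^2,a]=0$ force $[P_2,bab^{-1}]=0$, cyclicity of the trace yields
\[
\mathrm{tr}\!\big(P_1\,bab^{-1}\big)=\mathrm{tr}\!\big(m_{-1}b^{-1}[P_2,bab^{-1}]\big)=0,
\]
so the true $P_1$ is the unique solution of $\partial_xP_2=[bab^{-1},P_1]$ in $\mathcal P_1$ that is orthogonal to $bab^{-1}$.

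It then remains to check that the proposed formula (\ref{E:coefficient-p1}) meets both requirements. Writing $\vec w=(1-\vec u\vec u^{*})\vec u_x$, the constraint $\vec u^{*}\vec u=1$ gives $\vec u^{*}\vec w=0$ and, upon differentiating, $\vec u^{*}\vec u_x+\vec u_x^{*}\vec u=0$; with these two facts a short block computation shows $[bab^{-1},P_1]=-i\,\mathrm{diag}\big(0,\partial_x(\vec u\vec u^{*})\big)=\partial_xP_2$ and $\mathrm{tr}(P_1\,bab^{-1})=0$. Since membership in $\mathcal P_1$, the commutator equation, and this orthogonality single out a unique element, the formula (\ref{E:coefficient-p1}) follows. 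The only genuinely delicate points are the kernel computation and the orthogonality identity; the remaining steps are routine $2\times2$ block manipulations governed by the projection $1-\vec u\vec u^{*}$ onto $\vec u^{\perp}$.
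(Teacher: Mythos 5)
Your proof is correct, but it resolves the crucial ambiguity by a different mechanism than the paper. Both arguments start from the same relation $\partial_xP_2=[bab^{-1},P_1]$ and must face the fact that $\mathrm{ad}_{bab^{-1}}$ has a kernel. The paper conjugates by $bT$, where $T$ diagonalizes $a$ (so $T^{-1}aT=\mathrm{diag}(-i,0,i)$): the commutator equation then determines the off-diagonal part of $(bT)^{-1}P_1(bT)$, and the remaining unknown diagonal part is shown to vanish by equating the $\lambda^3$-coefficients in the minimal-polynomial identity $\big(X+i(\lambda^2+d_1+\epsilon^2/\lambda^2)I\big)\big(X+id_3I\big)=0$ satisfied by $(bT)^{-1}m\hat J_2m^{-1}(bT)$. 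You instead stay in the $1+2$ block picture, use the grading constraint $P_1\in\mathcal P_1$ to cut the ambiguity down to the line $\mathbb R\,bab^{-1}$, and kill that line with the orthogonality $\mathrm{tr}(P_1\,bab^{-1})=0$, extracted from the exact $\lambda$-coefficient $P_1=[m_{-1}b^{-1},P_2]$ of $m\hat J_2m^{-1}$ together with $[P_2,bab^{-1}]=ib[a^2,a]b^{-1}=0$. Note that both proofs must reach beyond the recursion (\ref{E:recursive-use}) into the expansion (\ref{E:coeff-j2}) --- the paper via the minimal polynomial, you via the first subleading coefficient of $m$ at $\lambda=\infty$ --- and this is unavoidable, since the recursion alone leaves $P_1$ undetermined modulo the kernel. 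Your route is the more economical one: it needs neither the explicit $T$-conjugation nor the quadratic matrix identity, and it makes transparent that the ambiguity is exactly one real dimension. Two small points would make it airtight: (i) the uniqueness claim implicitly uses that the trace form is nondegenerate on the kernel line, i.e. $\mathrm{tr}\big((bab^{-1})^2\big)=\mathrm{tr}(a^2)=-2\neq 0$, which deserves a sentence; (ii) the commutator relation is not literally ``the only relation in (\ref{E:recursive-use}) that involves $P_1$'' (the last line contains $\partial_xP_1$), but this is harmless, since your three conditions are satisfied by the true $P_1$ and single out a unique element.
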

\begin{proof}We first define $T=
\left(\begin{array}{rrr}
-\frac 1{\sqrt 2}& 0 & \frac 1{\sqrt 2}\\
\frac 1{\sqrt 2} & 0 & \frac 1{\sqrt 2}\\
 0 & 1 & 0
\end{array}
\right)$, so
\begin{gather}
T^{-1}aT=
\left(\begin{array}{rrr}
-i & 0 & 0\\
0 & 0 & 0\\
 0 & 0 & i
\end{array}
\right),\label{E:convenience-2}\\
bT
=
\left(\begin{array}{rrr}
-\frac 1{\sqrt 2}&  0 & \frac 1{\sqrt 2}\\
\frac 1{\sqrt 2} u & -\bar v & \frac 1{\sqrt 2}u \\
\frac 1{\sqrt 2} v & \bar u & \frac 1{\sqrt 2}v
\end{array}
\right),\ \left(bT\right)^{-1}
=
\left(\begin{array}{rrr}
-\frac 1{\sqrt 2}&   \frac 1{\sqrt 2}\bar u &   \frac 1{\sqrt 2}\bar v\\
0& -v & u \\
\frac 1{\sqrt 2}  & \frac 1{\sqrt 2}\bar u & \frac 1{\sqrt 2} \bar v
\end{array}
\right),\label{E:terminology}\\ 
P_2=-i
\left(\begin{array}{rrr}
1 & 0 & 0\\
0 & |u|^2 & u \bar v\\
 0 & \bar u v & |v|^2
\end{array}
\right), \quad \partial_x P_2=-i
\left(\begin{array}{rrr}
0 & 0 & 0\\
0 & |u|^2_x & (u \bar v)_x\\
 0 & (\bar uv)_x & |v|^2_x
\end{array}
\right).\label{E:convenience-1}
\end{gather}

By (\ref{E:recursive-use}), we have
\begin{eqnarray}
&&[bab^{-1},P_1]=\partial_xP_2.\label{E:2-coeff}
\end{eqnarray}
Taking the conjuation $(bT)^{-1}\,\cdot\,(bT)$ on both sides of (\ref{E:2-coeff}) and using (\ref{E:bmu}), (\ref{E:convenience-2})-(\ref{E:convenience-1}), we obtain
\[
[\left(\begin{array}{ccc}
-i & 0 & 0\\
0 & 0 & 0\\
 0 & 0 & i
\end{array}
\right),(bT)^{-1}P_1(bT)]
=-\frac i{\sqrt 2}
\left(\begin{array}{ccc}
0 & \bar u\bar v_x-\bar u_x\bar v  & 0\\
uv_x-u_xv& 0 & uv_x-u_xv\\
 0 & \bar u\bar v_x-\bar u_x\bar v & 0
\end{array}
\right).
\]
Thus the off diagonal part of $(bT)^{-1}P_1(bT)$, denoted as $\left[(bT)^{-1}P_1(bT)\right]^{o}$, is
\begin{equation}\label{E:off}
\left[(bT)^{-1}P_1(bT)\right]^{o}
=
\frac 1{\sqrt 2}
\left(\begin{array}{ccc}
0 & \bar u\bar v_x-\bar u_x\bar v & 0\\
-(uv_x-u_xv) & 0 & uv_x-u_xv\\
 0 & -(\bar u\bar v_x-\bar u_x\bar v) & 0
\end{array}
\right).
\end{equation}

On the other hand, using the minimal polynomial of $(bT)^{-1}m\hat J_2m^{-1}(bT)$ is 
\[
(X+i(\lambda^2+d_1+\frac {\epsilon^2}{\lambda^2})I) (X+id_3 I),
\] we obtain
\begin{equation}\label{E:minimal}
\begin{split}
&[
\left(\begin{array}{ccc}
-i & 0 & 0\\
0 & 0 & 0\\
 0 & 0 & -i
\end{array}
\right)\lambda^2+(bT)^{-1}P_1(bT)\lambda+(bT)^{-1}(P_0+R_0)m^{-1}(bT)+\cdots \\
&+i(\lambda^2+d_1+\frac{\epsilon^2}{\lambda^2})I ]\times 
[
\left(\begin{array}{ccc}
-i & 0 & 0\\
0 & 0 & 0\\
 0 & 0 & -i
\end{array}
\right)\lambda^2+(bT)^{-1}P_1(bT)\lambda\\
&+(bT)^{-1}(P_0+R_0)m^{-1}(bT)+\cdots +id_3I ]=0.
\end{split}
\end{equation}Equating the $\lambda^3$-coefficient of (\ref{E:minimal}) yields the diagonal part of $(bT)^{-1}P_1(bT)$, denoted as $\left[(bT)^{-1}P_1(bT)\right]^{d}$, which is $0$. Therefore, $(bT)^{-1}P_1(bT)=\left[(bT)^{-1}P_1(bT)\right]^{o}$. Together with (\ref{E:terminology}) and (\ref{E:off}), we obtain
\[
\begin{split}
P_1
=&
\left(\begin{array}{ccc}
0 &  v(\bar u\bar v_x-\bar u_x\bar v )& -u(\bar u\bar v_x-\bar u_x\bar v )\\
-\bar v(uv_x-u_xv) & 0 & 0\\
\bar u(uv_x-u_xv) & 0 & 0
\end{array}
\right)\\
=&\left(\begin{array}{cc}
0 & -\left((1-\vec u\vec u^*) \vec u_x \right)^*\\
(1-\vec u\vec u^*) \vec u_x & 0_{2\times 2}.
\end{array}
\right).
\end{split}
\]
\end{proof}
\section{The GMV equation }\label{S:GMV-lax}
In studying integrable systems with reductions, one of the simplest nontrivial systems introduced by Gerdjikov, Mikhailov, Valchev \cite{GMV10}, \cite{GMV10-1},  \cite{GGMV12}, is  
the anisotropic multicomponent generalization of the classical Heisenberg ferromagnetic equation:
\begin{equation}\label{E:GMV}
\begin{split}
i\vec{u}_t=&(\vec u_x-\vec u(\vec u^*\cdot \vec u_x))_x+4\epsilon\vec u(\vec u^*\cdot J\vec u)+\textbf{A}\vec u,
\end{split}
\end{equation}
where 
\begin{equation}\label{E:NLE}
\begin{split}
\vec u^* \vec u=1,& \quad \vec u(x,t)\in\mathbb C^2,\\
J=\left(\begin{array}{cc} -1 & 0 \\ 0 & 1 \end{array}\right),&\quad\textbf{A}=\left(\begin{array}{cc}
 \alpha & 0 \\  0 & \beta \end{array}\right), \ \alpha,\,\beta\in\mathbb R.
\end{split}
\end{equation}
The equation (\ref{E:GMV}),  called the GMV equation for simplicity, has a $\mathbb Z_2\times \mathbb Z_2$ reduced Lax representation 
\begin{eqnarray}
&&\hskip1.2in \left[{\bf L},{\bf M}\right] =0,\label{E:lax}\\
{\bf L}&=&\partial_ x- bab^{-1} \lambda
-\sigma_2(bab^{-1}) \frac\epsilon\lambda,\label{E:GMV-lax-bab-1}\\
{\bf M}&=&\partial_t-iba^2b^{-1}\lambda^2-p_1\lambda-p_0-\sigma_2( p_1)\frac \epsilon\lambda-i\sigma_2(ba^2b^{-1})(\frac\epsilon\lambda)^{2},
\label{E:GMV-lax-bab-2}
\end{eqnarray}
with
\begin{eqnarray}
&&a=i\left(\begin{array}{ccc}
0 & 1 & 0\\
 1 & 0 & 0\\
 0 & 0 & 0
\end{array}
\right) \in \mathcal A,\quad
b(x,t)=\left(\begin{array}{ccc}
1 & 0 & 0\\
 0 & u & -\bar v\\
 0 & v & \bar u
\end{array}
\right)\in  K_1',\label{E:GMV-lax-bab}\\
&&p_1(x,t)= \left(\begin{array}{cc} 0 & -\vec a^* \\ \vec a & 0 \end{array}\right)\in\mathcal P_1,\quad \vec a=(1-\vec u\vec u^*)\vec u_x,\quad \vec u=\left(\begin{array}{c}u\\v\end{array}\right),\label{E:GMV-lax-bab-0-new}\\
&&p_0 =- i\left(\begin{array}{cc} -2\epsilon\vec u^* J\vec u  & 0 \\ 0 & \epsilon\left(J\vec u\vec u^*+\vec u\vec u^* J\right) \end{array}\right)-i\textrm{ diag}\,(0,\alpha,\beta)\ \in\mathcal S.
\label{E:GMV-lax-bab-0}
\end{eqnarray} 
It is readily to see that  ${\bf L}-\partial_x\in\mathcal L_+^\epsilon$, ${\bf M}-\partial_t\in\mathcal L_+^\epsilon$. 

\begin{lemma}\label{L:loop-algebra}
Suppose 
\begin{equation}\label{E:lax-unique}
\left[{\bf L}, {\bf M}'\right] =0,
\end{equation}
where ${\bf L}$ is defined by (\ref{E:GMV-lax-bab-1}), (\ref{E:GMV-lax-bab}), and ${\bf M}'-\partial_t\in\mathcal L_+^\epsilon$ with $-iba^2b^{-1}\lambda^2$ as its leading term.
Then there exist real functions $\gamma(x,t)$, $\alpha_1(t)$, $\alpha_2(t)$ and $\alpha_3(t)$, such that
\begin{equation}
{\bf M}'=\partial_t-iba^2b^{-1}\lambda^2-p'_1\lambda-p'_0-\sigma_2(p'_1)\frac\epsilon\lambda-i\sigma_2(ba^2b^{-1})(\frac\epsilon\lambda)^2,\label{E:lax-p-lemma}
\end{equation}
with 
\begin{eqnarray}
&&p'_1-p_1= \gamma b ab^{-1}\in\mathcal P_1, \label{E:lax-q-2-ast}\\
&&p'_0+i\left(\begin{array}{cc}-2\epsilon\vec u^*J\vec u & 0 \\
0 & \epsilon\left(J\vec u\vec u^*+\vec u\vec u^*J\right)\end{array}\right) =
-i\,\textrm{diag }(\alpha_1,\alpha_2,\alpha_3),\label{E:lax-q-3-lemma-ast}
\end{eqnarray}
and  $p_1$ being the coefficient of $\bf M$ defined by (\ref{E:GMV-lax-bab-0-new}). Moreover, 
\begin{equation}\label{E:GMV-modified}
 \begin{split}
&i\vec{u}_t=(\vec u_x-\vec u(\vec u^*\cdot \vec u_x)+i\gamma\vec u)_x+4\epsilon\vec u(\vec u^*\cdot J\vec u)+\textbf{A}'\vec u ,\\
&\textbf{A}'=\textrm{diag }(\alpha,\ \beta),\ \alpha=\alpha_2(t)-\alpha_1(t),\ \beta=\alpha_3(t)-\alpha_1(t). 
\end{split}
\end{equation}
\end{lemma}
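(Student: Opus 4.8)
The plan is first to read off the admissible shape of $\mathbf M'$ from its membership in $\mathcal L_+^\epsilon$, and then to extract the coefficient equations of $[\mathbf L,\mathbf M']=0$ order by order in $\lambda$. Since $\mathbf M'-\partial_t\in\mathcal L_+^\epsilon$ has leading term $-iba^2b^{-1}\lambda^2$, the three defining constraints of $\mathcal L_+^\epsilon$ --- the parity rule placing the $\lambda^{2}$, $\lambda^{1}$, $\lambda^{0}$ coefficients in $\mathcal K_1$, $\mathcal P_1$, $\mathcal S$ respectively, the reflection $\xi_{-j}=\sigma_2(\xi_j)\epsilon^j$, and $\xi_0\in\mathcal S$ --- force $\mathbf M'$ to have exactly the five-term form (\ref{E:lax-p-lemma}), with $p_1'\in\mathcal P_1$, $p_0'\in\mathcal S$, and no further powers of $\lambda$. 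Writing $\mathbf L=\partial_x-\mathcal L$ and $\mathbf M'=\partial_t-\mathcal M'$, I would use the identity $[\mathbf L,\mathbf M']=\mathcal L_t-\mathcal M'_x+[\mathcal L,\mathcal M']$ and expand its vanishing in powers of $\lambda$. Because every ingredient respects the reflection $\lambda\mapsto\epsilon/\lambda$ twisted by $\sigma_2$, the negative-power equations are the $\sigma_2$-images of the non-negative ones, so it suffices to treat the coefficients of $\lambda^3,\lambda^2,\lambda^1,\lambda^0$.

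The $\lambda^3$ coefficient is $[bab^{-1},iba^2b^{-1}]$, which vanishes identically since $a$ commutes with $a^2$. The $\lambda^2$ coefficient gives $[bab^{-1},p_1']=(iba^2b^{-1})_x$; this is precisely the relation satisfied by $p_1$, established in Lemma \ref{L:p10} (there with leading coefficient $iba^2b^{-1}$). Subtracting, $p_1'-p_1$ lies in $\ker(\mathrm{ad}_{bab^{-1}})\cap\mathcal P_1$. I would compute this centralizer directly: conjugating by $b\in K_1'$ reduces it to $\ker(\mathrm{ad}_a)\cap\mathcal P_1$, and a one-line bracket computation with the explicit $a$ of (\ref{E:a}) shows that the only elements of $\mathcal P_1$ commuting with $a$ are the real multiples of $a$. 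Hence $p_1'-p_1=\gamma\,bab^{-1}$ with $\gamma(x,t)$ real, which is (\ref{E:lax-q-2-ast}).

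The $\lambda^0$ coefficient reads $(p_0')_x=\epsilon\big([bab^{-1},\sigma_2(p_1')]+[\sigma_2(bab^{-1}),p_1']\big)$. Substituting $p_1'=p_1+\gamma\,bab^{-1}$ and writing $Q=bab^{-1}$, the $\gamma$-dependent part contributes $\epsilon\gamma\big([Q,\sigma_2 Q]+[\sigma_2 Q,Q]\big)=0$, so the right-hand side is independent of $\gamma$ and equals $(p_0)_x$ for the GMV coefficient $p_0$. As $p_0'\in\mathcal S$ is imaginary-diagonal, integrating in $x$ shows that $p_0'$ equals the $x$-dependent anisotropy expression of (\ref{E:GMV-lax-bab-0}) plus an $x$-independent imaginary-diagonal matrix $-i\,\mathrm{diag}(\alpha_1(t),\alpha_2(t),\alpha_3(t))$; this is (\ref{E:lax-q-3-lemma-ast}).

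It remains to read the dynamics off the $\lambda^1$ coefficient, $Q_t-(p_1')_x+[Q,p_0']+\epsilon[\sigma_2(Q),iba^2b^{-1}]=0$. Inserting $p_1'=p_1+\gamma Q$ and $p_0'$ from the previous step and extracting the lower-left $\mathbb C^2$ block, which carries $\vec u$, should reproduce the modified GMV equation (\ref{E:GMV-modified}): the term $(\gamma\,bab^{-1})_x$ yields the gradient correction $(i\gamma\vec u)_x$, the $t$-dependent diagonal in $p_0'$ yields $\textbf{A}'\vec u$ with $\alpha=\alpha_2-\alpha_1$ and $\beta=\alpha_3-\alpha_1$, and the remaining terms recover the GMV right-hand side. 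I expect this final block-by-block translation to be the main obstacle: it requires careful bookkeeping to verify that the off-diagonal parts remain in $\mathcal P_1$ and that the $\lambda^0$ equation is genuinely $\mathcal S$-valued --- both of which follow from $\mathcal L^\epsilon$ being closed under the Lie bracket --- while correctly matching the anisotropy and potential terms.
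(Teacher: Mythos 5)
Your proposal is correct, and it follows the same overall route as the paper: both expand $[\mathbf{L},\mathbf{M}']=0$ in powers of $\lambda$ (using the $\sigma_2$-reflection symmetry to discard the negative powers) and analyze the $\lambda^2$, $\lambda^1$, $\lambda^0$ coefficient equations, which are exactly (\ref{E:coefficient-2})--(\ref{E:coefficient-0}) of the paper. Where you genuinely differ is in how two of these equations are exploited. For the $\lambda^2$ equation, the paper writes $p_1'$ in components and solves the under-determined linear system (\ref{E:eq1-1})--(\ref{E:eq3-1}) to obtain $\vec\theta=(1-\vec u\vec u^*)\vec u_x+i\gamma\vec u$; you instead subtract the known solution $p_1$ and identify the centralizer $\ker(\mathrm{ad}_{bab^{-1}})\cap\mathcal P_1=\mathbb R\, bab^{-1}$, which is cleaner and more conceptual. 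One caveat: Lemma \ref{L:p10} is a statement about the second twisted flow, so what you actually need is the algebraic identity $[bab^{-1},p_1]=\partial_x(iba^2b^{-1})$ valid for \emph{every} $b$ with $\vec u^*\vec u=1$, not only for potentials already known to belong to a flow; this is a short direct check (essentially the computation inside the proof of Lemma \ref{L:p10}), but it should be stated as such rather than cited. For the $\lambda^0$ equation, the paper parametrizes $p_0'$ with $x$-dependent $\alpha_i(x,t)$ and proves $\partial_x\alpha_i=0$ via (\ref{E:11}), (\ref{E:22}) and the identity (\ref{E:aux}); you instead observe that the $\gamma$-terms cancel and integrate in $x$ against the known $p_0$. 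The two are equivalent, but your assertion that the right-hand side ``equals $(p_0)_x$'' is precisely the identity the paper verifies through (\ref{E:11})--(\ref{E:aux}), so it is not free---it is the same computation repackaged, as you partly acknowledge in your closing caveat about bookkeeping. Your $\lambda^1$ step, reading off the modified equation with the $(i\gamma\vec u)_x$ correction and $\mathbf{A}'=\textrm{diag }(\alpha_2-\alpha_1,\alpha_3-\alpha_1)$, matches the paper's derivation of (\ref{E:NLE-r})--(\ref{E:GMV-modify}). What your organization buys is a Lie-theoretic explanation of where the one-parameter ambiguity $\gamma$ comes from (the centralizer of $a$ in $\mathcal P_1$) and why the $\lambda^0$ data is rigid up to $t$-dependent constants; what the paper's buys is self-containedness, since it never needs to pre-verify that the explicit GMV coefficients $p_1$, $p_0$ satisfy the coefficient equations identically.
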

\begin{proof}
By assumption, one can set
\begin{eqnarray}
{\bf L} &=& \partial_x-q_1\lambda-\sigma_2(q_1)\frac\epsilon\lambda,\label{E:lax-q}\\
{\bf M} '&=& \partial_t-p'_2\lambda^2-p'_1\lambda-p'_0-\sigma_2(p'_1)\frac\epsilon\lambda-\sigma_2(p'_2)(\frac\epsilon\lambda)^2,\label{E:lax-p}
\end{eqnarray}
with
\begin{eqnarray}
q_1&=&bab^{-1},\quad
p_2'=iba^2b^{-1},\label{E:lax-q-1}\\
p'_1&=& \left(\begin{array}{cc} 0 & -\vec \theta^* \\ \vec \theta & 0 \end{array}\right)\in\mathcal P_1, \quad \vec\theta(x,t)=\left(\begin{array}{c}\theta_1\\ \theta_2\end{array}\right),\label{E:lax-q-2}\\
p'_0 &=&- i\left(\begin{array}{cc} -2\epsilon\vec u^* J\vec u  & 0 \\ 0 & \epsilon\left(J\vec u\vec u^*+\vec u\vec u^* J \right)\end{array}\right)
-i\textrm{ diag }(\alpha_1,\alpha_2,\alpha_3),\label{E:lax-q-3}
\end{eqnarray}
and $\alpha_i=\alpha_i(x,t)\in\mathbb R$. The compatibility condition (\ref{E:lax-unique}) then yields
\begin{eqnarray}
\partial_xp'_2-[q_1,p'_1]&=&0,\label{E:coefficient-2}\\
\partial_xp'_1-\partial_tq_1-[q_1, p'_0]-[\epsilon\sigma_2(q_1),p'_2] &=&0,\label{E:coefficient-1}\\
\partial_xp'_0-[q_1,\epsilon\sigma_2(p'_1)]-[\epsilon\sigma_2(q_1), p'_1]&=&0.\label{E:coefficient-0}
\end{eqnarray}

Let $\vec u=\left(\begin{array}{c}u \\ v\end{array}\right)=\left(\begin{array}{c} u_1 \\ u_2\end{array}\right)$. Then (\ref{E:coefficient-2}) implies
\begin{eqnarray}
\bar u_1\theta_1+\bar u_2\theta_2+u_1\bar \theta_1+u_2\bar \theta_2&=&0,\label{E:eq1-1}\\
\partial_x(|u_1|^2)-u_1\bar \theta_1-\bar u_1  \theta_1&=&0,\label{E:eq2-1}\\
\partial_x(u_1\bar u_2)-u_1\bar \theta_2-\bar u_2 \theta_1&=&0.\label{E:eq3-1}
\end{eqnarray}
It is an under-determined linear system.  One then obtains 
\begin{equation}\label{E:p-a}
\vec \theta=(1-\vec u\vec u^*)\vec u_x+i\gamma(x,t)\vec u,\quad \gamma(x,t)\in\mathbb R.
\end{equation}
On the other hand, (\ref{E:coefficient-1}), 
(\ref{E:eq1-1})-(\ref{E:eq3-1}) imply
\begin{equation}\label{E:NLE-r}
-\partial_x\vec \theta+i\partial_t\vec u-\textbf{A}'\vec u-2\epsilon\vec u\vec u^*J\vec u-2\epsilon\vec u\vec u^*J\vec u=0
\end{equation}
with $\textbf{A}'=\textrm{diag }(\alpha(x,t),\beta(x,t))$,  $\alpha=\alpha_2-\alpha_1$, and $\beta=\alpha_3-\alpha_1$. Hence we obtain 
\begin{equation}\label{E:GMV-modify}
\begin{split}
i\vec{u}_t=&(\vec u_x-\vec u(\vec u^*\cdot \vec u_x)+i\gamma\vec u)_x+4\epsilon\vec u(\vec u^*\cdot J\vec u)+\textbf{A}'\vec u,
\end{split}
\end{equation}
 by (\ref{E:p-a}). Moreover, (\ref{E:coefficient-0})) is equivalent to
\begin{eqnarray}
\epsilon^{-1}\partial_x\alpha_1-2\partial_x(-|u_1|^2+|u_2|^2)+2\vec u^*J\vec \theta+2\vec \theta^*J\vec u&=&0,\label{E:11}\\
\partial_x  \textbf{A}''+J\left(\partial_x(\vec u\vec u^*)-(\vec \theta\vec u^*+\vec u\vec \theta^*)\right)+\left(\partial_x(\vec u\vec u^*)-(\vec \theta\vec u^*+\vec u\vec \theta^*)\right)J&=&0,\label{E:22}
\end{eqnarray}
with $\textbf{A}''=\epsilon^{-1}\textrm{diag }(\alpha_2(x,t),\alpha_3(x,t))$. Note (\ref{E:eq1-1}) and (\ref{E:eq2-1}) imply
\[
-2\partial_x(-|u_1|^2+|u_2|^2)+2\vec u^*J\vec \theta+2\vec \theta^*J\vec u=0.
\]Together with (\ref{E:11}) yields $\alpha_1=\alpha(t)$. Besides,
\begin{eqnarray}
&&\partial_x(\vec u\vec u^*)-(\vec \theta\vec u^*+\vec u\vec \theta^*)\label{E:aux}\\
=&&\vec u_x\vec u^*-\vec \theta\vec u^*+\vec u\vec u^*_x-\vec u\vec \theta^*\nonumber\\
=&&\vec u_x\vec u^*-\left[\left(1-\vec u\vec u^*\right)\vec u_x+i\gamma\vec u\right]\vec u^*+\vec u\vec u^*_x-\vec u\left[\vec u^*_x\left(1-\vec u\vec u^*\right)-i\gamma\vec u^*\right]\nonumber\\
=&&\vec u_x\vec u^*-\left(1-\vec u\vec u^*\right)\vec u_x\vec u^*+\vec u\vec u^*_x-\vec u\vec u^*_x\left(1-\vec u\vec u^*\right)\nonumber\\
=&&\vec u\vec u^*\vec u_x\vec u^*+\vec u\vec u_x^*\vec u\vec u^*\nonumber\\
=&&\vec u\left(|\vec u|^2\right)_x\vec u^*\nonumber\\
=&& 0.\nonumber
\end{eqnarray}
Here we have used (\ref{E:p-a}) and $b\in 1\times SU(2)$. Combining (\ref{E:22}) and (\ref{E:aux}), we obtained $\alpha_i=\alpha_i(t)$, $i=2,\,3$. 
\end{proof}
Given $\epsilon>0$, for the GMV equation parametrized by $(4\epsilon,\beta)$ or $(\alpha,-4\epsilon)$ with  arbitary real constants $\alpha$, $\beta$, we have
\begin{theorem}\label{T:TH-GMV}
Write the second twisted ${\frac {U(3)}{U(1)\times U(2)}}$-flow $
 b(x,t)=\left(\begin{array}{ccc}
1 & 0 & 0\\
 0 & u & -\bar v\\
 0 & v & \bar u
\end{array}
\right)$, then $\vec u=\left(\begin{array}{c}u\\v\end{array}\right)$ satisfies the GMV equation
\begin{equation}\label{E:GMV-modified-TH-GMV}
 \begin{split}
&i\vec{u}_t=(\vec u_x-\vec u(\vec u^*\cdot \vec u_x))_x+4\epsilon\vec u(\vec u^*\cdot J\vec u)+\textbf{A}'\vec u ,\\
&\textbf{A}'=\left(\begin{array}{cc} 4\epsilon & 0 \\ 0 & 2\epsilon+d_3-d_1\end{array}\right)\textit{ or }\left(\begin{array}{cc} -2\epsilon+d_3-d_1 & 0 \\ 0 & -4\epsilon\end{array}\right). 
\end{split}
\end{equation}
\end{theorem}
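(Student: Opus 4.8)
The plan is to identify the second twisted $\tfrac{U(3)}{U(1)\times U(2)}$-flow with an instance of the Lax system of Lemma \ref{L:loop-algebra} and then to extract the diagonal constants from the deformation parameters $(d_1,d_3)$. By Definition \ref{D:TH} the flow satisfies $[\mathbf L,\mathbf M]=0$ with $\mathbf L=\partial_x-(bab^{-1}\lambda+\sigma_2(bab^{-1})\tfrac\epsilon\lambda)$, which is precisely (\ref{E:GMV-lax-bab-1}); moreover $\tfrac{\partial\Psi}{\partial t}\Psi^{-1}\in\mathcal L_+^\epsilon$, and by the first line of (\ref{E:recursive-use}) its leading coefficient is $P_2=iba^2b^{-1}$, so $\mathbf M-\partial_t\in\mathcal L_+^\epsilon$ has $-iba^2b^{-1}\lambda^2$ as its leading term. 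Hence $\mathbf M$ qualifies as the operator $\mathbf M'$ of Lemma \ref{L:loop-algebra}, which already produces real functions $\gamma,\alpha_1,\alpha_2,\alpha_3$ such that $\vec u$ solves the modified equation (\ref{E:GMV-modified}). The first thing I would do is eliminate $\gamma$: Lemma \ref{L:p10} evaluates the $\lambda^1$-coefficient of $\mathbf M$ to be exactly $P_1=(1-\vec u\vec u^*)\vec u_x$, carrying no $i\gamma\vec u$-term, so comparison with (\ref{E:lax-q-2-ast}) forces $\gamma=0$ and reduces the theorem to computing $\mathbf A'=\mathrm{diag}(\alpha_2-\alpha_1,\alpha_3-\alpha_1)$.

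Everything now rests on the coefficient $P_0$, the $\lambda^0$-part of $m\hat J_2 m^{-1}$. Using $\sigma_2(a)=-a$ one has $\hat J_2=ia^2\big(\lambda^2+\tfrac{\epsilon^2}{\lambda^2}\big)-i\,\mathrm{diag}(d_1,d_1,d_3)$, and I would split the conjugation according to these two pieces. Since $m$ is holomorphic at $\lambda=\infty$ with $m(\infty)=b$, conjugation preserves the $\lambda^0$-grading of the constant piece, so it contributes exactly $-ib\,\mathrm{diag}(d_1,d_1,d_3)b^{-1}$, whose diagonal is $-i(d_1,\ d_1|u|^2+d_3|v|^2,\ d_1|v|^2+d_3|u|^2)$. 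The remaining contribution is $i\,[\lambda^0]\big((\lambda^2+\tfrac{\epsilon^2}{\lambda^2})W\big)=iW_2$, where $W=ma^2m^{-1}=\sum_{j\ge0}W_j\lambda^{-j}$ and $W_2$ is its $\lambda^{-2}$-coefficient.

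The main obstacle is precisely this inexplicit term $iW_2$: owing to the non-splitting factorization only the $(1,1)$-entry of $P_0$ is available in closed form, so $\alpha_2,\alpha_3$ cannot be read off directly. I would control it through the Lax equation $\partial_x W=[bab^{-1}\lambda+\sigma_2(bab^{-1})\tfrac\epsilon\lambda,\,W]$ coming from Lemma \ref{L:coeff} together with the projection identity $W^2=-W$ (equivalent to $a^4=-a^2$). Conjugating by $b$ and matching powers of $\lambda$, the $\lambda^{-1}$-order of $W^2=-W$ forces the diagonal of $b^{-1}W_1b$ to vanish, the ODE shows its surviving entries are $O(\partial_x b)$, and the $\lambda^{-2}$-order then expresses the relevant diagonal of $b^{-1}W_2b$ quadratically through $b^{-1}W_1b$; this is the computation one imports from Lemma 2.3 of \cite{MW11}. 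Because $\alpha_1,\alpha_2,\alpha_3$ are constants, it then suffices to evaluate $\mathrm{diag}(P_0)$ as $x\to-\infty$, where $b\to b_{-\infty}$ and the derivative-dependent remainder drops out, and to equate with the form of $p_0'$ in (\ref{E:lax-q-3}). For $\mathfrak P_1$, where $b_{-\infty}=I$ and $(|u|^2,|v|^2)\to(1,0)$, this yields $\alpha_1=d_1-2\epsilon$, $\alpha_2=d_1+2\epsilon$, $\alpha_3=d_3$, hence $\mathbf A'=\mathrm{diag}(4\epsilon,\,2\epsilon+d_3-d_1)$; for $\mathfrak P_2$, where $b_{-\infty}\,\mathrm{diag}(d_1,d_1,d_3)\,b_{-\infty}^{-1}=\mathrm{diag}(d_1,d_3,d_1)$ and $(|u|^2,|v|^2)\to(0,1)$, it yields $\alpha_1=d_1+2\epsilon$, $\alpha_2=d_3$, $\alpha_3=d_1-2\epsilon$, hence $\mathbf A'=\mathrm{diag}(-2\epsilon+d_3-d_1,\,-4\epsilon)$, the two asserted cases.

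The step needing most care is the boundary evaluation itself: in the $\mathfrak P_2$ case the background potential $b_{-\infty}ab_{-\infty}^{-1}$ does not admit a $\lambda$-independent normalization of $m$, so the claim that only the conjugated deformation term $-ib_{-\infty}\,\mathrm{diag}(d_1,d_1,d_3)\,b_{-\infty}^{-1}$ survives in $\mathrm{diag}(P_0)$ at $x=-\infty$ is not a naive substitution but relies on the decay of $b-b_{-\infty}$ together with the asymptotic normalization of $m$ furnished by the scattering theory of Section \ref{S:direct-GMV}. Once that vanishing is secured, the matching above is purely algebraic and produces the stated $\mathbf A'$.
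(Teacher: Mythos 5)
Your proposal is correct and takes essentially the same route as the paper: identify $\mathbf M$ with the $\mathbf M'$ of Lemma \ref{L:loop-algebra}, use Lemma \ref{L:p10} to force $\gamma\equiv 0$, and pin down the constants $\alpha_i$ by evaluating the $\lambda^0$-coefficient as $x\to-\infty$, where only the conjugated term $-i\,b_{-\infty}\mathrm{diag}(d_1,d_1,d_3)b_{-\infty}^{-1}$ survives --- this is exactly the paper's equation (\ref{E:infty-TH}), and your resulting $\alpha_i$ and $\mathbf A'$ agree with it in both the $\mathfrak P_1$ and $\mathfrak P_2$ cases. The only cosmetic differences are that you control the inexplicit $\lambda^{-2}$-term $W_2$ via the Lax ODE and $W^2=-W$ while the paper simply invokes $m\to b$ as $\lambda\to\infty$ (deferred to Theorem \ref{T:cauchy}) together with $m\to b_{-\infty}$ as $x\to-\infty$, and that your identification of $P_0$ with the $\lambda^0$-part of $m\hat J_2 m^{-1}$ drops the term $R_0=-\partial_t b\, b^{-1}$ from (\ref{E:coeff-j2}), which is harmless since it also vanishes in the limit.
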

\begin{proof} By Definition \ref{D:TH} and Lemma \ref{L:loop-algebra}, the second twisted ${\frac {U(3)}{U(1)\times U(2)}}$-flow satisfies 
\[
 \begin{split}
&i\vec{u}_t=(\vec u_x-\vec u(\vec u^*\cdot \vec u_x)+i\gamma\vec u)_x+4\epsilon\vec u(\vec u^*\cdot J\vec u)+\textbf{A}'\vec u ,\\
&\textbf{A}'=\textrm{diag }(\alpha,\ \beta),\ \alpha=\alpha_2(t)-\alpha_1(t),\ \beta=\alpha_3(t)-\alpha_1(t), 
\end{split}
\]
where $\gamma(x,t)$, $\alpha_i(t)$ are defined by (\ref{E:lax-p}), (\ref{E:lax-q-2}), (\ref{E:lax-q-3}), and (\ref{E:p-a}). By Lemma \ref{L:p10}, we conclude $\gamma(x,t)\equiv 0$. Hence the theorem reduces to showing $\alpha_i$, $i\in\{1,2,3\}$, are certain constants determined by $d_1$ and $d_3$.  This can be seen by  taking the limit of (\ref{E:first-p-j-more}) when $x\to -\infty$ and  $\lambda\to\infty$ since $\alpha_i$ are independent of $x$. Moreover, $m(x,t,\lambda)\to b(x,t)$ as $\lambda\to\infty$ will be shown in Theorem \ref{T:cauchy}. Thus we have
\begin{eqnarray}
&&\left(\begin{array}{ccc} d_1 & 0 & 0 \\ 0 & d_1 & 0 \\ 0 & 0 & d_3 \end{array}\right)\nonumber\\
&=&\lim_{x\to -\infty} b^{-1}\left[\left(\begin{array}{cc} -2\epsilon\vec u^* J\vec u  & 0 \\ 0 & \epsilon\left(J\vec u\vec u^*+\vec u\vec u^* J \right)\end{array}\right)
+\textrm{ diag }(\alpha_1,\alpha_2,\alpha_3)\,\right]b\nonumber\\
&=&
{\begin{cases}
{\left(\begin{array}{ccc} 2\epsilon+\alpha_1 & 0 & 0 \\ 0 & -2\epsilon+\alpha_2 & 0 \\ 0 & 0 & \alpha_3 \end{array}\right),}&\textit{ if $b\in\mathfrak P_1$,}\\
{\left(\begin{array}{ccc} -2\epsilon+\alpha_1 & 0 & 0 \\ 0 & 2\epsilon+\alpha_3 & 0 \\ 0 & 0 & \alpha_2 \end{array}\right),} &\textit{ if $b\in\mathfrak P_2$.} 
\end{cases}}
\label{E:infty-TH}
\end{eqnarray}
  So $\alpha_i$ are constant and $(\alpha,\beta)=(4\epsilon,2\epsilon+d_3-d_1)$ or $(-2\epsilon+d_3-d_1,-4\epsilon)$.
\end{proof}

\begin{remark}\label{R:uniqueness}
Theorem \ref{T:TH-GMV} implies that each second twisted $\frac{U(3)}{U(1)\times U(2)}$-flow, for a fixed $(d_1,d_3)$, gives a GMV solution.  We will prove that different $(d_1,d_3)$'s give different second solutions to the same GMV solution (\ref{E:GMV-modified-TH-GMV}) once $d_3-d_1$'s are equal (Theorem \ref{T:cauchy}). Thus the GMV equation is only part of the constraints in the second twisted $\frac{U(3)}{U(1)\times U(2)}$-flows. However, the twisted $\frac{U(3)}{U(1)\times U(2)}$-hierarchy is still called the (generalized) associated hierarchy for the GMV equation in this paper.
\end{remark}

\begin{remark}\label{R:obstruction}
The obstruction to constructing the GMV equation parametrized by an arbitrary pair $(\alpha,\beta)$ by the second  twisted ${\frac {U(3)}{U(1)\times U(2)}}$-flow  is the commutativity condition  (\ref{E:commute}).
\end{remark}
\section{The direct problem  }\label{S:direct-GMV}
\subsection{The spectral problem }\label{SS:GMV-spectral}
Let $\sigma_2$, $a$  be defined by (\ref{E:GMV-lax-bab}), 
\begin{equation}\label{E:b}
b=b(x)=\left(
\begin{array}{ccc}
1 & 0 & 0 \\
0 & u & -\bar v \\
0 & v & \bar u
\end{array}\right)\in K_1',
\end{equation}
and $b-1\in\mathbb S$. Consider the spectral problem
\begin{equation}\label{E:GMV-lax-spectral}
\begin{split}
&\partial_ x\Psi=\lambda bab^{-1} \Psi
+\frac \epsilon\lambda\sigma_2(bab^{-1}) \Psi,\\
&\Psi(x,\lambda)e^{-x(\lambda a+\frac \epsilon\lambda\sigma_2(a))}\to 1 \textit{ as $x \to -\infty$.}
\end{split}
\end{equation} 
By introducing the normalizations
\begin{eqnarray}
\Psi(x,\lambda)&=&m(x,\lambda)e^{x(\lambda a+\frac \epsilon\lambda\sigma_2(a))}\label{E:normal}\\
&=&b(x)m'(x,\lambda)e^{x(\lambda a+\frac \epsilon\lambda\sigma_2(a))} \label{E:normal-1},
\end{eqnarray}
 the partial differential equation in (\ref{E:GMV-lax-spectral}) turns into
\begin{eqnarray}
&&\frac{\partial m}{\partial x}=\lambda \left(bab^{-1}m-ma\right)+\frac \epsilon\lambda\left(\sigma_2(bab^{-1})m-m\sigma_2(a)\right),\label{E:lax-normal-bd}\\
&&\frac{\partial m'}{\partial x}=  [\lambda a+\frac \epsilon\lambda\sigma_2(a) ,\,\,m'(x,\lambda) ]+Q(x,\lambda) m'(x,\lambda),\label{E:lax-normal-bd-revised-3}
\end{eqnarray}
with
\begin{eqnarray}
&&Q(x,\lambda)
=\frac \epsilon\lambda\left(b^{-1}\sigma_2(bab^{-1})b-\sigma_2(a)\right)-b^{-1}\frac {\partial b}{\partial x}.\label{E:Q-1}
\end{eqnarray}

\begin{definition}\label{D:proj}
We define the operator $\mathcal J_\lambda=\mathcal J_{a,\lambda}$ on $gl(n,\mathbb C)$ by
\[\mathcal J_\lambda f=\left[ \lambda a+\frac \epsilon\lambda\sigma_2(a),\,f\right],
\]
and $\pi^\lambda_0$, $\pi^\lambda_\pm$ to be the orthogonal projections of $gl(3,\mathbb C)$ to the $0$--, $\pm$--eigenspaces
    of $Re\, \mathcal J_\lambda=\frac 12(\mathcal J_\lambda+(\mathcal J_\lambda)^*)$. Besides, the characteristic curve of (\ref{E:GMV-lax-bab}) is defined by
\begin{equation}\label{E:character}
\Sigma_a=\left\{\lambda\in\mathbb C|\ \textit{the projections $\pi^\lambda_0$, $\pi^\lambda_\pm$ are not continuous at $\lambda$ }\right\}.
\end{equation}
\end{definition}

A direct direct computation yields the characteristic curve $\Sigma_a$ of (\ref{E:GMV-lax-bab}) is $\mathbb R$. Therefore 
we can follow the argument as that in \cite{MW11} to derive
\begin{theorem}\label{T:existence}
Let  $b(x)\in K_1'$ and $b-1\in\mathbb S$. 
Then there exists a bounded set $Z\subset\mathbb C$, such that $Z\cap (\mathbb C\backslash\mathbb R)$ is discrete in $\mathbb C\backslash\mathbb R$ and for $\forall \lambda\in \mathbb C\backslash(\mathbb R\cup Z)$, there exists uniquely a solution $m(x,\lambda)$ of (\ref{E:lax-normal-bd}) satisfying
\begin{eqnarray}
&&\textit{$m(\cdot,\lambda)$ is bounded for each $\lambda\in \mathbb C\backslash(\mathbb R\cup Z)$},\label{E:lax-normal-bd-revised-1}\\
&&\textit{$m(x,\lambda)\to 1 $ as $x\to -\infty$ for each $\lambda\in \mathbb C\backslash(\mathbb R\cup Z)$,}\label{E:lax-normal-bd-revised-2}\\
&&\textit{$m(x,\cdot)$ is meromorphic in $\mathbb C\backslash\mathbb R$ with poles at $\lambda\in Z$},\label{E:lax-mero}\\
&&\textit{$m(x,\lambda)\to b(x) $ uniformly as $|\lambda|\to\infty$.}\label{E:lax-infty} 
\end{eqnarray}
\end{theorem}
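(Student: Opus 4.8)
The plan is to recast the linear system (\ref{E:lax-normal-bd}) as a spectral-decomposition-adapted integral equation and then run a contraction-plus-analytic-Fredholm argument, adapting the treatment of \cite{MW11}. I would work with the gauge-transformed unknown $m'$ of (\ref{E:lax-normal-bd-revised-3}), whose homogeneous flow is the exponential $e^{x\mathcal J_\lambda}$. Using the projections $\pi^\lambda_0,\pi^\lambda_\pm$ of Definition \ref{D:proj} and the fact, recorded just before the theorem, that $\Sigma_a=\mathbb R$ (so these projections depend holomorphically on $\lambda$ off the real axis), I would build a Green kernel $G_\lambda(x,y)$ by propagating the $\pi^\lambda_0+\pi^\lambda_+$ component of the source forward from $-\infty$ (where $e^{(x-y)\mathcal J_\lambda}$ stays bounded for $x>y$) and the $\pi^\lambda_-$ component backward from $+\infty$. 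Since $b-1\in\mathbb S$ forces $b(-\infty)=1$, the boundary condition in (\ref{E:GMV-lax-spectral}) becomes $m'(-\infty,\lambda)=1$, and the system turns into the integral equation $m'=1+T_\lambda m'$ with $(T_\lambda\phi)(x)=\int G_\lambda(x,y)\,Q(y,\lambda)\,\phi(y)\,dy$.

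Next I would estimate $T_\lambda$ on the Banach space of bounded continuous $gl(3,\mathbb C)$-valued functions of $x$. Because $e^{(x-y)\mathcal J_\lambda}$ restricted to each eigenspace is bounded on the correct half-line, and $Q(\cdot,\lambda)\in\mathbb S$ by (\ref{E:Q-1}) (inherited from $b-1\in\mathbb S$), the operator norm of $T_\lambda$ is controlled by a Schwartz norm of $b-1$ times a factor that is locally bounded off $\mathbb R$ and tends to $0$ as $|\lambda|\to\infty$. Hence $T_\lambda$ is a contraction for $|\lambda|$ large, giving a unique bounded solution there by the Banach fixed-point theorem; uniqueness in general follows from invertibility of $I-T_\lambda$. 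The same estimate yields the asymptotics directly: as $|\lambda|\to\infty$ one has $\|T_\lambda\|\to 0$, so $m'\to 1$ uniformly and therefore $m=bm'\to b$, which is (\ref{E:lax-infty}).

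For arbitrary $\lambda\in\mathbb C\backslash\mathbb R$ I would invoke the analytic Fredholm theorem. The decay of $Q$ and the smoothing in the kernel make $T_\lambda$ a holomorphic family of compact operators on $\mathbb C\backslash\mathbb R$, so $(I-T_\lambda)^{-1}$ exists and is meromorphic in $\lambda$ away from the set $Z$ where $1$ is an eigenvalue of $T_\lambda$; this set satisfies that $Z\cap(\mathbb C\backslash\mathbb R)$ is discrete, and the contraction estimate above forces $Z$ to be bounded. Setting $m'=(I-T_\lambda)^{-1}1$ then produces, for $\lambda\notin\mathbb R\cup Z$, the unique bounded solution, its boundedness and its meromorphic dependence on $\lambda$ with poles in $Z$, while the normalization $m\to 1$ as $x\to-\infty$ is built into the construction of $G_\lambda$. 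This accounts for (\ref{E:lax-normal-bd-revised-1})--(\ref{E:lax-mero}).

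The hard part is the kernel estimate controlling $\|T_\lambda\|$ uniformly as $\lambda$ approaches the characteristic curve $\mathbb R$, where the projections $\pi^\lambda_\pm$ lose continuity and the exponential ceases to decay. The delicate point is to exploit the precise eigenstructure of $\mathrm{Re}\,\mathcal J_\lambda$ --- which for this $N=3$, oblique spectral problem keeps $\Sigma_a=\mathbb R$ and nothing larger --- so that $G_\lambda(x,y)Q(y,\lambda)$ remains integrable and the compactness and Fredholm machinery stay valid on all of $\mathbb C\backslash\mathbb R$. This is exactly the place where the argument of \cite{MW11} must be genuinely adapted rather than quoted, since the author's own remark notes that this oblique structure, and hence the solvability of the direct problem, fails once $N>3$.
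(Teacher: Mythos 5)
Your overall strategy---gauge to $m'$, build a Green kernel from the spectral projections of Definition \ref{D:proj}, contract for large $|\lambda|$, then apply the analytic Fredholm theorem---is exactly the Beals--Coifman machinery that the paper itself appeals to (the paper's entire proof consists of the computation $\Sigma_a=\mathbb R$ followed by ``follow the argument of \cite{MW11}'', which in turn adapts \cite{BC84}). But your execution has a genuine gap at the large-$|\lambda|$ step, and it sits precisely where this paper differs from \cite{MW11}. Since $\sigma_2(a)=J_2aJ_2^{-1}=-a$, the free coefficient is $\lambda a+\frac\epsilon\lambda\sigma_2(a)=(\lambda-\frac\epsilon\lambda)a$, so $\mathcal J_\lambda=(\lambda-\frac\epsilon\lambda)\,\mathrm{ad}(a)$ and the range of $\pi^\lambda_0$ is the centralizer of $a$, namely $\mathcal A$ of (\ref{E:CSA}); on that subspace $e^{(x-y)\mathcal J_\lambda}$ is the identity, with no decay or oscillation at all. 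Meanwhile, by (\ref{E:Q-1}), $Q(\cdot,\lambda)\to -b^{-1}b_x\neq 0$ as $|\lambda|\to\infty$. Hence the $\pi^\lambda_0$-block of your $T_\lambda$ is a Volterra operator whose size is independent of $\lambda$, your claim $\|T_\lambda\|\to 0$ is false, and both the Banach fixed-point step and the assertion ``the contraction estimate forces $Z$ to be bounded'' collapse. Invertibility of $I-T_\lambda$ for large $|\lambda|$ can be repaired by writing $T_\lambda=V_\lambda+K_\lambda$ with $V_\lambda$ the quasi-nilpotent Volterra part and only $\|K_\lambda\|\to0$, but this has to be argued. (A smaller slip in the same construction: the components propagated from $-\infty$ must be $\pi^\lambda_0+\pi^\lambda_-$, and $\pi^\lambda_+$ must come from $+\infty$, since on the $+$-eigenspace $e^{(x-y)\mathcal J_\lambda}$ grows for $x>y$; you have the labels reversed.)

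The serious casualty is (\ref{E:lax-infty}), which you deduce solely from the false norm estimate. With the corrected integral equation, the $\pi^\lambda_\pm$-parts of the kernel do vanish as $|\lambda|\to\infty$, but the $\pi^\lambda_0$-part survives, and $m'$ converges to $n(x)=\exp\bigl(-\int_{-\infty}^x\pi_{\mathcal A}(b^{-1}b_y)\,dy\bigr)$ rather than to $1$. In the split setting of \cite{MW11} this limit is trivial because there $b^{-1}b_x$ lies in $\mathcal K$, which is orthogonal to the centralizer of $a$; here it is not: $b^{-1}b_x\in\mathcal K_1'$ while $\mathcal A\cap\mathcal K_1\neq\{0\}$, and a direct computation gives $\pi_{\mathcal A}(b^{-1}b_x)=i\,\mathrm{diag}(\phi/2,\,\phi/2,\,-\phi)$ with $i\phi=\vec u^*\vec u_x$, which is generically nonzero. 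So the naive limit of $m$ is $b(x)\,\mathrm{diag}(e^{-i\Phi/2},e^{-i\Phi/2},e^{i\Phi})$ with $\Phi=\int_{-\infty}^x\phi\,dy$, not $b(x)$. Controlling (or quotienting out, via the $\mathcal S_1$-ambiguity built into the paper's non-split factorization (\ref{E:pipm})) this diagonal factor is exactly the point at which ``follow \cite{MW11}'' requires genuine adaptation; your proposal does not see it, and your closing paragraph locates the hard part in the wrong place---the behavior of the projections as $\lambda$ approaches $\mathbb R$ concerns the continuous extension in Theorem \ref{T:analytic-charact}, not the existence, meromorphy, and $\lambda\to\infty$ asymptotics claimed in Theorem \ref{T:existence}.
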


Furthermore we have
\begin{theorem}\label{T:analytic-charact} 
For generic $b(x)$ satisfying the assumption of Theorem \ref{T:existence}, the set $Z$ is a finite set contained in $\mathbb C\backslash\mathbb R$, and $m(x,\lambda)$ has a continuous extension, denoted as $m_\pm(x,\lambda)$, to $\mathbb R$ from $\mathbb C^\pm$. In addition, there exists $V(\lambda)$, $\lambda\in\mathbb R\cup Z$, such that
\begin{eqnarray} 
&&m_+(x,\lambda)=m_-(x,\lambda)e^{x\left(\lambda a+\frac \epsilon{\lambda}\sigma_2(a)\right)}V(\lambda)e^{-x\left(\lambda a+\frac \epsilon{\lambda}\sigma_2(a)\right)},\textit{ $\lambda\in\mathbb R$,} \label{E:jump}\\
&&\textit{$m(x,\lambda)\left(1-\frac{e^{x\left(\lambda a+\frac \epsilon{\lambda}\sigma_2(a)\right)}V(\lambda_0)e^{-x\left(\lambda a+\frac \epsilon{\lambda}\sigma_2(a)\right)}}{\lambda-\lambda_0}\right)$ is regular at $\lambda_0\in Z$,}\label{E:discrete-SD}
\end{eqnarray}and for $\lambda\in\mathbb R$
\begin{equation}\label{E:ana-sd}
\begin{split}
&\textit{$\partial_\lambda^\alpha(V-1)$ is $\mathcal O(\lambda^N)$ as $\lambda\to 0$ and $\mathcal O(\lambda^{-N})$ as $|\lambda|\to \infty$} 
\end{split}
\end{equation}
for all positive integer $N$ and nonnegative integer $\alpha$,
\begin{eqnarray}
\textrm{det }V\equiv 1,\label{E:determinant}\\
V(\bar\lambda)^*V(\lambda)^{-1}&=&1,\label{E:self-adjoint-v}\\
\sigma_1(V(-\lambda))V(\lambda)&=&1,\label{E:sigma-1-loop-v}\\
\sigma_2(V(\epsilon/\lambda))V(\lambda)&=&1,\label{E:sigma-2-loop-v}
\end{eqnarray}and for $\lambda\in Z$, 
\begin{eqnarray}
&&V(\lambda)^2=0,\label{E:ds-degenerate}\\
&&V(\lambda)=-V(\bar\lambda)^*,\label{E:ds-selfadj}\\
&&V(\lambda)=-\sigma_1(V(-\lambda)),\label{E:ds-sigma1}\\
&&V(\lambda)=-\frac{\lambda^2}\epsilon\sigma_2(V(\frac\epsilon\lambda)).\label{E:ds-sigma2}
\end{eqnarray}
\end{theorem}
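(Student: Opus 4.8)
The plan is to establish the existence and analytic structure of the jump matrix $V(\lambda)$ by exploiting the three symmetries encoded in the loop algebra $\mathcal L^\epsilon$, combined with the asymptotic and meromorphic structure of $m(x,\lambda)$ already furnished by Theorem \ref{T:existence}. The strategy follows the template of the scattering theory in \cite{MW11}, adapted to the present non-split factorization and the two reductions $\sigma_1,\sigma_2$ together with the reality condition coming from membership in the unitary loop group. First I would define $V(\lambda)$ for $\lambda\in\mathbb R$ through the relation (\ref{E:jump}): since the characteristic curve $\Sigma_a=\mathbb R$ separates the upper and lower half planes, the boundary values $m_\pm$ exist by the standard Riemann--Hilbert argument, and both solve (\ref{E:lax-normal-bd}) with the same leading symbol, so their ratio conjugated back by the free exponential is an $x$-independent matrix, which \emph{defines} $V(\lambda)$. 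Finiteness and location of $Z$ in $\mathbb C\backslash\mathbb R$ for generic $b$ follows from the genericity hypothesis exactly as in the cited reference, and the Laurent-type normalization (\ref{E:discrete-SD}) near each $\lambda_0\in Z$ comes from expanding $m$ about its poles and matching the residue to a rank-one nilpotent contribution, which is what forces (\ref{E:ds-degenerate}).

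The heart of the argument is the systematic transfer of the defining symmetries of the loop group to $V$. The reality relation (\ref{E:self-adjoint-v}) follows from the constraint $\left(f(\bar\lambda)\right)^\ast f(\lambda)=I$ in the definition of $L^\epsilon$: applying $\ast$ and $\lambda\mapsto\bar\lambda$ to (\ref{E:jump}) and using uniqueness of the normalized solution yields the stated identity, and likewise $\det V\equiv 1$ (\ref{E:determinant}) follows from tracking $\det m$, which is constant in $x$ and tends to $1$. The two loop symmetries (\ref{E:sigma-1-loop-v}) and (\ref{E:sigma-2-loop-v}) are obtained the same way: the relation $\sigma_1(f(-\lambda))=f(\lambda)$ intertwines $m(x,-\lambda)$ and $\sigma_1(m(x,\lambda))$ because $\sigma_1(bab^{-1})$ reverses sign under $\lambda\mapsto-\lambda$ in the $\lambda a+\frac\epsilon\lambda\sigma_2(a)$ term (note $\sigma_1(a)=-a$), and the $\sigma_2$-reduction $\sigma_2(f(\epsilon/\lambda))=f(\lambda)$ swaps the roles of the $\lambda$ and $\epsilon/\lambda$ parts of the potential; in each case uniqueness of the solution satisfying (\ref{E:lax-normal-bd-revised-1})--(\ref{E:lax-normal-bd-revised-2}) promotes the symmetry of the potential to a symmetry of $m_\pm$, hence of $V$. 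The discrete analogues (\ref{E:ds-selfadj})--(\ref{E:ds-sigma2}) are derived by applying the very same three operations to the residue normalization (\ref{E:discrete-SD}) and reading off the transformation of the nilpotent datum $V(\lambda_0)$; the weight $\lambda^2/\epsilon$ in (\ref{E:ds-sigma2}) is the Jacobian factor produced when $\lambda\mapsto\epsilon/\lambda$ acts on a simple pole.

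For the decay estimate (\ref{E:ana-sd}), the plan is to use that $Q(x,\lambda)$ in (\ref{E:Q-1}) is Schwartz in $x$ (because $b-1\in\mathbb S$) and depends on $\lambda$ only through the explicit factor $\epsilon/\lambda$; iterating the Volterra integral equation for $m'$ and integrating by parts in $x$ repeatedly transfers each power of $x$-decay into a power of $\lambda^{-1}$ decay of $V-1$ as $|\lambda|\to\infty$, while the symmetry (\ref{E:sigma-2-loop-v}) transports this decay at infinity to the corresponding vanishing order at $\lambda\to 0$. This is the step I expect to be the main obstacle: the rapid decay of $V-1$ at \emph{both} ends of $\mathbb R$ must be extracted uniformly in all derivatives $\partial_\lambda^\alpha$, and because the factorization is non-split and the spectral problem is genuinely a $3\times 3$ system with a degenerate direction (the $a$ in (\ref{E:GMV-lax-bab}) has a kernel), the stationary-phase/integration-by-parts bookkeeping is more delicate than in the split case of \cite{MW11}. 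I would handle this by first reducing to the oblique-direction framework via the conjugation by $T$ introduced in Lemma \ref{L:p10}, which diagonalizes $a$ and isolates the trivial channel, and then applying the one-dimensional Schwartz estimates channel by channel, finally reassembling and invoking the $\sigma_2$-symmetry to cover the behavior near $\lambda=0$.
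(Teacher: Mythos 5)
Your proposal is correct and follows essentially the same route as the paper: the paper also deduces the reality conditions (\ref{E:self-adjoint-v})--(\ref{E:sigma-2-loop-v}) and (\ref{E:ds-selfadj})--(\ref{E:ds-sigma2}) by first establishing the symmetries $m(x,\bar\lambda)^*=m(x,\lambda)^{-1}$, $\sigma_1(m(x,-\lambda))=m(x,\lambda)$, $\sigma_2(m(x,\epsilon/\lambda))=m(x,\lambda)$ via uniqueness and then transferring them through (\ref{E:jump}) and (\ref{E:discrete-SD}), while deferring the generic finiteness of $Z$ and the nilpotent residue structure (\ref{E:ds-degenerate}) to \cite{BC84} and obtaining (\ref{E:determinant}) from $\det\Psi=\det m\equiv 1$. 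The only difference is that the paper is terser—it does not spell out the Jacobian-factor computation for (\ref{E:ds-sigma2}) or the decay estimate (\ref{E:ana-sd}), both of which you sketch plausibly.
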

\begin{proof} The generic property for simple pole with residue satisfying (\ref{E:ds-degenerate}) has be shown in \cite{BC84}. The reality conditions (\ref{E:self-adjoint-v})-(\ref{E:sigma-2-loop-v}), (\ref{E:ds-selfadj})-(\ref{E:ds-sigma2}) can be proved by showing
\begin{eqnarray}
m(x,\bar\lambda)^*&=& m(x,\lambda)^{-1},\label{E:self-adjoint}\\
\sigma_1(m(x,-\lambda))&=&m(x,\lambda),\label{E:sigma-1-loop}\\
\sigma_2(m(x,\epsilon/\lambda))&=&m(x,\lambda),\label{E:sigma-2-loop}
\end{eqnarray} and using the properties (\ref{E:jump}), (\ref{E:discrete-SD}), and (\ref{E:ds-degenerate}). Finally, by the same argument as that in the proof of Proposition 2.1 in \cite{BC84}, one can prove  
\begin{equation}\label{E:determinant-psi-m}
\textrm{det }\Psi=\textrm{det }m\equiv 1.
\end{equation}
The statement (\ref{E:determinant}) follows from (\ref{E:determinant-psi-m}) and (\ref{E:jump}).
\end{proof}

\begin{definition}\label{D:scattering-data}
The associated  scattering data of the generic 
potential $b(x)$ is defined by the matrix function $V(\lambda)$, $ \lambda\in\mathbb R\cup Z$, provided $b$ satisfies the assumption of Theorem \ref{T:analytic-charact}. Moreover, $V(\lambda)$ is called a scattering data, if $V(\lambda)$, $ \lambda\in\mathbb R\cup Z$, satisfies (\ref{E:ana-sd})-(\ref{E:ds-sigma2}). 
\end{definition}

\subsection{An extended direct problem  }\label{SS:direct-GMV-ext}
We need to consider an extended spectral problem of (\ref{E:GMV-lax-spectral}) for solving the inverse problem. The first criteria for an extended spectral problem is preserving the reality conditions with respect to involutions $\sigma_1$, $\sigma_2$ and the self-adjointness. So there could be multiple choices for extended spectral problems. We choose a (splitting type) twisted $\frac{U(4)}{U(2)\times U(2)}$ spectral problem to be our extended system. Since the inverse scattering problem of twisted $\frac{U(4)}{U(2)\times U(2)}$-flows is almost the same as that of the twisted $\frac{O(n,n)}{O(n)\times O(n)}$ which has been tackled in \cite{ABT86}, \cite{MW11}. 

Let $\tilde\sigma_i$, $i=1,\,2 $, be involutions on $U(4)$ defined by
\begin{equation}\label{E:sigma-ext}
\begin{split}
\tilde\sigma_i(x)=\tilde J_{i}x\tilde J_{i}^{-1},& \quad x\in U(4),\\
\tilde J_1=\textrm{diag}(1,1,-1,-1),&\quad \tilde J_2=\textrm{diag}(1,1,-1,1)\\
\end{split}
\end{equation}
and ${ u }(4)=\tilde{\mathcal K}_i \oplus\tilde{\mathcal P}_i$, $i=1,\,2$, 
the Cartan decompositions for $\tilde\sigma_i$. Let $\tilde{\mathcal K}_i$ be the Lie algebras of $\tilde K_i$, i.e.,
\[
\begin{split}
&\tilde K_1=\{\left(\begin{array}{cccc}
a_{11} & a_{12} & 0 & 0 \\
a_{21} & a_{22} & 0 & 0 \\
0 &0 & a_{33} & a_{34}\\
0 & 0 & a_{43} & a_{44}
\end{array}\right)\in U(4) :\left(\begin{array}{cc}
a_{11} & a_{12}  \\
a_{21} & a_{22}\end{array}\right),\,
\left(\begin{array}{cc}
a_{33} & a_{34}\\
a_{43} & a_{44}\end{array}\right)\in U(2)\, \},\\
&\tilde K_2=\{\left(\begin{array}{cccc}
a_{11} & a_{12} & 0 & a_{14} \\
a_{21} & a_{22} & 0 & a_{24} \\
0 &0 & a_{33} & 0\\
a_{41} & a_{42} & 0 & a_{44}
\end{array}\right)\in U(4):\  |a_{33}|=1,\,\left(\begin{array}{ccc}a_{11} & a_{12} & a_{14}\\ 
a_{21} & a_{22} & a_{24}\\
a_{41} & a_{42} & a_{44} \end{array}\right)\in U(3)\,\},\\
&\tilde {\mathcal P}_1=\{\ i\left(\begin{array}{cccc}
0 &0 & u_1 & v_1 \\
0 &0 & u_2 & v_2 \\
u^*_1 & u^*_2 &0 & 0 \\
v^*_1 &v^*_2 & 0 & 0
\end{array}\right)\in u(4) \, \},\quad 
\tilde{\mathcal P}_2=\{\ i\left(\begin{array}{cccc}0 &0& u_1 & 0 \\
0 &0& u_2 & 0 \\ 
u^*_1 &u^*_2 & 0 & u_3 \\ 0 &0& u_3^* & 0 \end{array}\right)\in u(4)\,\}.
\end{split}
\]Let
\[
\begin{split}
&\tilde S=\{
\left(\begin{array}{cccc}
a_{11} & a_{12} & 0 & 0 \\
a_{21} & a_{22} & 0 & 0 \\
0 &0 & 1 & 0\\
0 & 0 & 0 & 1
\end{array}\right)\in U(4) :\left(\begin{array}{cc}
a_{11} & a_{12}  \\
a_{21} & a_{22}\end{array}\right)\in U(2)\}
\subset\tilde K_1\cap \tilde K_2,\\
&\tilde{\mathcal S}=\{\left(\begin{array}{cccc}
a_{11} & a_{12} & 0 & 0 \\
a_{21} & a_{22} & 0 & 0 \\
0 &0 & 0 & 0\\
0 & 0 & 0 & 0
\end{array}\right)\in u(4) :\left(\begin{array}{cc}
a_{11} & a_{12}  \\
a_{21} & a_{22}\end{array}\right)\in u(2)\,\},
\end{split}
\]
and
\begin{gather}
\tilde K_1=\tilde S\otimes \tilde K_1',\quad \tilde K_1'= 1_{2\times 2} \otimes U(2),\label{E:split-K-1-ext}\\
\tilde{\mathcal K_1}=\tilde{\mathcal S }\oplus \tilde{\mathcal K}_1',\quad \tilde{\mathcal K}_1'= 0_{2\times 2}\oplus u(2).\label{E:split-K-2-ext}\
\end{gather}

The extended spectral problem of (\ref{E:GMV-lax-spectral}) is
\begin{equation}\label{E:GMV-lax-spectral-ext}
\begin{split}
&\partial_ x\tilde \Psi=\lambda \tilde b\tilde a_1\tilde b^{-1} \tilde \Psi
+\frac \epsilon\lambda\tilde\sigma_2(\tilde b\tilde a_1\tilde b^{-1}) \tilde \Psi,\\
&\tilde\Psi(x,\lambda)e^{-x(\lambda \tilde a_1+\frac \epsilon\lambda\tilde \sigma_2(\tilde a_1))}\to 1 \textit{ as $x \to -\infty$,}
\end{split}
\end{equation}
with
\begin{equation}\label{E:GMV-lax-bab-ext}
\begin{split}
&\tilde a_1=i\left(\begin{array}{cccc}
0 & 0& 1 & 0\\
0 & 0 & 0& 0\\
 1& 0 & 0 & 0\\
 0& 0 & 0 & 0
\end{array}
\right) ,\quad
\tilde b(x,t)=\left(\begin{array}{rrrr}
1 & 0& 0 & 0\\
0 & 1 & 0& 0\\
 0& 0 & u & -\bar v\\
 0& 0 & v & \bar u
\end{array}
\right)\in  \tilde K_1'.
\end{split}
\end{equation}

We note that $\tilde a_1$ is not an oblique direction for solving the twisted $\frac {U(4)}{U(2)\times U(2)}$-spectral problem (cf \cite{ABT86}, \cite{MW11}). However, for the extended spectral problem (\ref{E:GMV-lax-spectral-ext}), (\ref{E:GMV-lax-bab-ext}), we have
\begin{lemma}\label{L:eigenfunction-ext}
The spectral equation (\ref{E:GMV-lax-spectral}) is satisfied by $\Psi(x,\lambda)=(\Psi_{ij})_{1\le i,\,j\le 3}$ if and only if (\ref{E:GMV-lax-spectral-ext}) is satisfied by
\begin{equation}\label{E:tilde-psi}
\tilde\Psi(x,\lambda)=
\left(\begin{array}{cccc}
\Psi_{11} & 0& \Psi_{12} & \Psi_{13}\\
0 & 1 & 0& 0\\
\Psi_{21}& 0 & \Psi_{22} & \Psi_{23}\\
\Psi_{31}& 0 & \Psi_{32} & \Psi_{33}
\end{array}
\right).
\end{equation}Moreover, let $m=(m_{ij})_{1\le i,\,j\le 3}$, $m'=(m')_{1\le i,\,j\le 3}$ be the normalized eigenfunctions defined by (\ref{E:normal}) and (\ref{E:normal-1}) and 
\begin{eqnarray}
\tilde\Psi(x,\lambda)&=&\tilde m(x,\lambda)e^{x(\lambda \tilde a_1+\frac \epsilon\lambda\tilde\sigma_2(\tilde a_1))}\label{E:normal-ext}\\
&=&\tilde b(x)\tilde m'(x,\lambda)e^{x(\lambda \tilde a_1+\frac \epsilon\lambda\tilde \sigma_2(\tilde a_1))} \label{E:normal-1-ext}.
\end{eqnarray}Then 
\begin{equation}\label{E:tilde-m}
\tilde m(x,\lambda)=
\left(\begin{array}{cccc}
m_{11} & 0& m_{12} & m_{13}\\
0 & 1 & 0& 0\\
m_{21}& 0 & m_{22} & m_{23}\\
m_{31}& 0 & m_{32} & m_{33}
\end{array}
\right),\quad \tilde m'(x,\lambda)=
\left(\begin{array}{cccc}
m'_{11} & 0& m'_{12} & m'_{13}\\
0 & 1 & 0& 0\\
m'_{21}& 0 & m'_{22} & m'_{23}\\
m'_{31}& 0 & m'_{32} & m'_{33}
\end{array}
\right).
\end{equation}Finally, for generic $b$,  there exists a finite set $Z\subset \mathbb C\backslash\mathbb R$ and 
\begin{eqnarray} 
&&\tilde m_+(x,\lambda)=\tilde m_-(x,\lambda)e^{x\left(\lambda \tilde a_1+\frac \epsilon{\lambda}\tilde \sigma_2(\tilde a_1)\right)}\tilde V(\lambda)e^{-x\left(\lambda \tilde a_1+\frac \epsilon{\lambda}\tilde \sigma_2(\tilde a_1)\right)},\ \lambda\in \mathbb R,\label{E:jump-ext}\\
&&\tilde m(x,\lambda)\left(1-\frac {e^{x\left(\lambda \tilde a_1+\frac \epsilon{\lambda}\tilde \sigma_2(\tilde a_1)\right)}\tilde V(\lambda_0)e^{-x\left(\lambda \tilde a_1+\frac \epsilon{\lambda}\tilde \sigma_2(\tilde a_1)\right)}}{\lambda-\lambda_0}\right)
\textit{ is regular at $\lambda_0\in Z$.}\label{E:discrete-ext}
\end{eqnarray}
with
\begin{equation}\label{E:tilde-v}
\tilde V(\lambda)=
\left(\begin{array}{cccc}
V_{11} & 0& V_{12} & V_{13}\\
0 & 1 & 0& 0\\
V_{21} & 0 & V_{22} & V_{23}\\
V_{31}& 0 & V_{32} & V_{33}
\end{array}
\right)
\end{equation} and for $\lambda\in\mathbb R$,
\begin{equation}\label{E:ana-sd-ext}
\begin{split}
&\textit{$\partial_\lambda^\alpha(\tilde V-1)$ is $\mathcal O(\lambda^N)$ as $\lambda\to 0$ and $\mathcal O(\lambda^{-N})$ as $|\lambda|\to \infty$} 
\end{split}
\end{equation}
\begin{eqnarray}
\textrm{det }\tilde V\equiv 1,\label{E:determinant-ext}\\
\tilde V(\bar\lambda)^*\tilde V(\lambda)^{-1}&=&1,\label{E:self-adjoint-v-ext}\\
\tilde \sigma_1(\tilde V(-\lambda))\tilde V(\lambda)&=&1,\label{E:sigma-1-loop-v-ext}\\
\tilde \sigma_2(\tilde V(\epsilon/\lambda))V(\lambda)&=&1,\label{E:sigma-2-loop-v-ext}
\end{eqnarray}
and for $\lambda\in\mathbb Z$,
\begin{eqnarray}
&&\tilde V(\lambda)^2=0,\label{E:deg-sd-ext}\\
&&\tilde V(\lambda)=-\tilde V(\bar\lambda)^*,\label{E:self-adjoint-sd-ext}\\
&&\tilde V(\lambda)=-\sigma_1(\tilde V(-\lambda)),\label{E:sigma-1-sd-ext}\\
&&\tilde V(\lambda)=-\frac{\lambda^2}\epsilon\sigma_2(\tilde V(\frac\epsilon\lambda)).\label{E:sigma-2-sd-ext}
\end{eqnarray}

\end{lemma}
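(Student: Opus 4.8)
The plan is to realize the entire extended $4\times 4$ spectral theory as the image of the already-solved $3\times 3$ theory (Theorems \ref{T:existence} and \ref{T:analytic-charact}) under a single index-skipping embedding, so that no independent solvability theory for the non-oblique direction $\tilde a_1$ is needed. Let $\pi:\{1,2,3\}\to\{1,3,4\}$ be the order-preserving injection ($\pi(1)=1$, $\pi(2)=3$, $\pi(3)=4$) and define two embeddings of $3\times 3$ matrices into $4\times 4$ matrices: the linear map $\phi(M)$ with $\phi(M)_{\pi(i)\pi(j)}=M_{ij}$ and all entries carrying a second index equal to $0$, and the unital map $\Phi(M)=\phi(M)+E_{22}$, where $E_{22}$ is the matrix unit. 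A direct check shows $\Phi$ is a unital algebra homomorphism with $\Phi(I_3)=I_4$, $\Phi(MN)=\Phi(M)\Phi(N)$ and $\Phi(M)^{-1}=\Phi(M^{-1})$, while $\phi$ is linear and bracket-preserving; both are injective, commute with the adjoint $(\,\cdot\,)^*$, and intertwine the involutions, namely $\tilde\sigma_i\circ\Phi=\Phi\circ\sigma_i$ and $\tilde\sigma_i\circ\phi=\phi\circ\sigma_i$, since $\tilde J_i=\Phi(J_i)$. One records the dictionary $\tilde a_1=\phi(a)$, $\tilde b=\Phi(b)$, and, because $\phi(a)$ has vanishing second row and column, $\tilde b\tilde a_1\tilde b^{-1}=\Phi(b)\phi(a)\Phi(b)^{-1}=\phi(bab^{-1})$; hence the coefficient of (\ref{E:GMV-lax-spectral-ext}) is exactly $\phi(\Omega)$, where $\Omega=\lambda bab^{-1}+\frac\epsilon\lambda\sigma_2(bab^{-1})$ is the coefficient of (\ref{E:GMV-lax-spectral}). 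The same annihilation of $E_{22}$ gives the dressing identity $e^{x(\lambda\tilde a_1+\frac\epsilon\lambda\tilde\sigma_2(\tilde a_1))}=\Phi\bigl(e^{x(\lambda a+\frac\epsilon\lambda\sigma_2(a))}\bigr)$.

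With this dictionary the ODE equivalence is immediate. First I would note $\partial_x\Phi(\Psi)=\phi(\partial_x\Psi)$ (the $E_{22}$ term is constant) and $\phi(\Omega)\Phi(\Psi)=\phi(\Omega\Psi)$ (the vanishing second column of $\phi(\Omega)$ kills $E_{22}$). Combining these shows $\partial_x\Psi=\Omega\Psi$ if and only if $\partial_x\tilde\Psi=\phi(\Omega)\tilde\Psi$ for $\tilde\Psi=\Phi(\Psi)$, which is exactly the embedding (\ref{E:tilde-psi}); the boundary condition transports because $\tilde\Psi\,e^{-x(\cdots)}=\Phi\bigl(\Psi e^{-x(\cdots)}\bigr)\to\Phi(I_3)=I_4$, and the converse follows from injectivity of $\phi$. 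Applying the same map to the normalizations $m=\Psi e^{-x(\cdots)}$ and $m'=b^{-1}m$ gives $\tilde m=\Phi(m)$ and $\tilde m'=\Phi(b^{-1})\Phi(m)=\Phi(m')$, which is (\ref{E:tilde-m}); in particular $\tilde m$ inherits from $m$ its boundedness, its meromorphy with poles on the same set $Z$, which is finite for generic $b$ by Theorem \ref{T:analytic-charact}, and its $\mathbb C^\pm$-boundary values $\tilde m_\pm=\Phi(m_\pm)$.

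For the continuous spectrum I would push the jump relation (\ref{E:jump}) through the homomorphism: since $\Phi$ is multiplicative and the extended dressing exponential is the $\Phi$-image of the $3\times 3$ one recorded above, equation (\ref{E:jump}) becomes $\tilde m_+=\tilde m_-\,e^{x(\cdots)}\Phi(V)e^{-x(\cdots)}$, so $\tilde V=\Phi(V)$, which is the block form (\ref{E:tilde-v}) with second-diagonal entry $1$. The relations (\ref{E:ana-sd-ext})--(\ref{E:sigma-2-loop-v-ext}) then transfer entry-for-entry from (\ref{E:ana-sd})--(\ref{E:sigma-2-loop-v}): $\Phi$ preserves $*$ and the $\tilde\sigma_i$, carries $I_3$ to $I_4$, and satisfies $\det\Phi(V)=\det V$ because $\Phi(V)$ decouples the frozen index, so (\ref{E:determinant}) and (\ref{E:self-adjoint-v})--(\ref{E:sigma-2-loop-v}) yield their tilded counterparts, while $\tilde V-1=\phi(V-1)$ transfers the decay (\ref{E:ana-sd}). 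For the discrete points $\lambda_0\in Z$ the situation is slightly different, and this is the one place requiring care: the pole residue of $m$ lies in the image of $\phi$, having vanishing second row and column, so the correct embedded residue is $\phi(V(\lambda_0))$ rather than $\Phi(V(\lambda_0))$; feeding the regularity characterization (\ref{E:discrete-SD}) through $\phi$ and using $e^{x(\cdots)}\phi(V_0)e^{-x(\cdots)}=\phi\bigl(e^{x(\cdots)}V_0 e^{-x(\cdots)}\bigr)$ produces (\ref{E:discrete-ext}) with $\tilde V(\lambda_0)=\phi(V(\lambda_0))$, whose second-diagonal entry is $0$, consistent with the nilpotency (\ref{E:deg-sd-ext}). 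Because $\phi$ is linear, injective, $*$-compatible, $\sigma_i$-equivariant and preserves nilpotency ($\phi(V)^2=\phi(V^2)=0$), the relations (\ref{E:deg-sd-ext})--(\ref{E:sigma-2-sd-ext}) follow at once from (\ref{E:ds-degenerate})--(\ref{E:ds-sigma2}).

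The genuine subtlety to flag is exactly this coexistence of two embeddings: the continuous scattering matrix is a group element and must travel by the unital homomorphism $\Phi$, yielding $\det\tilde V=1$ and the second-diagonal entry $1$ in (\ref{E:tilde-v}), whereas the discrete residues are nilpotent algebra elements and must travel by the linear map $\phi$, yielding $\tilde V(\lambda_0)^2=0$ and second-diagonal entry $0$. Reconciling these---and in particular verifying that the mixed expression $\tilde b\tilde a_1\tilde b^{-1}=\Phi(b)\phi(a)\Phi(b)^{-1}$ collapses cleanly to $\phi(bab^{-1})$ despite $\tilde b$ using the unital embedding and $\tilde a_1$ the linear one---is the main bookkeeping obstacle; once the annihilation identities for $E_{22}$ are in place, every remaining claim is a transport of an already-established $3\times 3$ property, and no new analysis of the non-oblique $4\times 4$ direct problem is required.
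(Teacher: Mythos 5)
Your proposal is correct and is essentially the paper's own proof: the paper disposes of this lemma with the single sentence ``the statements can be proved by a direct computation,'' and your embeddings $\phi$ and $\Phi$ are exactly that computation, cleanly organized so that every claimed identity transfers entry-for-entry from Theorems \ref{T:existence} and \ref{T:analytic-charact}. Your one substantive flag is also correct and worth recording: since the block form (\ref{E:tilde-v}) with $1$ in the $(2,2)$ slot is incompatible with the nilpotency (\ref{E:deg-sd-ext}), the discrete residues at $\lambda_0\in Z$ must be embedded by the non-unital map $\phi$ (so their $(2,2)$ entry is $0$), and (\ref{E:tilde-v}) should be read as applying only to $\lambda\in\mathbb R$ --- a distinction the paper's statement elides.
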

\begin{proof} The statements can be proved by a direct computation. 
\end{proof}
\begin{definition}\label{D:scattering-data-ext}
The associated extended scattering data of $b$ is defined by the matrix function $\tilde V(\lambda)$, $ \lambda\in\mathbb R\cup Z$, provided $b$ satisfies the assumption of Theorem \ref{T:analytic-charact}. Moreover,  $\tilde V(\lambda)$ is called an extended scattering data, if $\tilde V(\lambda)$, $ \lambda\in\mathbb R\cup Z$, satisfies (\ref{E:tilde-v})-(\ref{E:sigma-2-sd-ext}).
\end{definition}

\begin{remark}\label{R:remarks}
If $b(x)\in K_1'$, $b(x)-\left(\begin{array}{ccc}1 & 0 & 0\\ 0 & 0 & -1 \\ 0 & 1 & 0\end{array}\right)\in\mathbb S$, then the spectral problem needed to be considered is
\begin{equation}\label{E:GMV-lax-spectral-case2}
\begin{split}
&\partial_ x\Psi=\lambda bab^{-1} \Psi
+\frac \epsilon\lambda\sigma_2(bab^{-1}) \Psi,\\
&\Psi(x,\lambda)e^{-x(\lambda a+\frac \epsilon\lambda\sigma_2(a))}\to \left(\begin{array}{ccc}1 & 0 & 0\\ 0 & 0 & -1 \\ 0 & 1 & 0\end{array}\right)
 \textit{ as $x \to -\infty$.}
\end{split}
\end{equation}
It is more convenient to use a change of variables to turn (\ref{E:GMV-lax-spectral-case2}) into
\begin{equation}\label{E:GMV-lax-spectral-change}
\begin{split}
&\partial_ x\Psi=\lambda b'a'{b'}^{-1} \Psi
+\frac \epsilon\lambda\sigma_2(b'a'{b'}^{-1}) \Psi,\\
&\Psi(x,\lambda)e^{-x(\lambda a+\frac \epsilon\lambda\sigma_2(a))}\to 1 \textit{ as $x \to -\infty$,}
\end{split}
\end{equation}
with
\begin{eqnarray}
&&a'=i\left(\begin{array}{ccc}
0 & 0 & 1\\
0 & 0 & 0\\
1 & 0 & 0
\end{array}
\right) \in \mathcal P_1,\quad
b'(x,t)=\left(\begin{array}{ccc}
1 & 0 & 0\\
 0 & \bar v & u\\
 0 & -\bar u & v
\end{array}
\right)\in  K_1'.\label{E:GMV-lax-bab'}
\end{eqnarray}
By analogy, one can derive the existence theorem of the eigenfunction $\Psi(x,\lambda)$, extract continuous and discrete scattering data, and solve the associated extended direct problem
\begin{equation}\label{E:GMV-lax-spectral-ext-varied}
\begin{split}
&\partial_ x\tilde \Psi=\lambda {\tilde {b'}}\tilde a_2{\tilde {b'}}^{-1} \tilde \Psi
+\frac \epsilon\lambda\tilde\sigma_2({\tilde {b'}}\tilde a_2{\tilde {b'}}^{-1}) \tilde \Psi,\\
&\tilde\Psi(x,\lambda)e^{-x(\lambda \tilde a+\frac \epsilon\lambda\tilde \sigma_2(\tilde a))}\to 1 \textit{ as $x \to -\infty$,}
\end{split}
\end{equation}
with
\begin{equation}\label{E:GMV-lax-bab-ext-varied}
\begin{split}
&\tilde a_2=i\left(\begin{array}{cccc}
0 & 0& 0 & 0\\
0 & 0 & 0& 1\\
 0& 0 & 0 & 0\\
 0& 1 & 0 & 0
\end{array}
\right) ,\quad
{\tilde {b'}}(x,t)=\left(\begin{array}{rrrr}
1 & 0& 0 & 0\\
0 & 1 & 0& 0\\
 0& 0 & \bar v & u\\
 0& 0 & -\bar u & v
\end{array}
\right)\in  \tilde K_1',
\end{split}
\end{equation} 
and define the extended scattering data $
\tilde V(\lambda)$, $\lambda\in\mathbb R\cup Z$.

These two different boundary conditions, (\ref{E:GMV-lax-spectral}) and (\ref{E:GMV-lax-spectral-case2}), are the only cases which can be tackled by our approach. Since the associated spectral operators are perturbation of diagonalizable operators. That is, $a$ and $\sigma(a)$ ($a'$ and $\sigma(a')$ respectively) can be simultaneously diagonalized.
\end{remark}

\section{The inverse problem  }\label{S:inverse-GMV}
In this section, the normalized eigenfunction ${\tilde m}'(x,\lambda)$ will be constructed from the scattering data by solving a Riemann-Hilbert problem.  To find the gauge $\tilde b(x)$ to reconstruct  $\tilde m(x,\lambda)$, one needs to understand the symmetries between coefficients of $(\partial_x{\tilde m}')({\tilde m}')^{-1}$ which is equivalent to solving an over-determined differential systems. 
Inspired by the result of \cite{ABT86}, \cite{MW11}, we reconstruct the gauge via solving an exterior  differential system  
derived from one-dimensional systems associated with Cartan subalgebras with higher ranks (cf. Definition 3.1 in \cite{MW11}). This is the motivation for us to study the extended twisted $\frac{U(4)}{U(2)\times U(2)}$-spectral problem in $\S$ \ref{SS:direct-GMV-ext}. 

The major differences between loop algebra structures associated with twisted $\frac{O(2,2)}{O(2)\times O(2)}$- and with twisted $\frac{U(4)}{U(2)\times U(2)}$-hierarchies are the symmetric and antisymmetric properties of ${\mathcal P_0}$ and $\tilde {\mathcal P}_1$. However, the proof of the inverse problem in Section 6 of \cite{MW11} mainly involves with the involution properties of $ \sigma_i$, the commutativity property $[ a_i, a_j]=0$ and the self-adjointness of $ K_0$, and  has nothing to do with the symmetric property of ${\mathcal P_0}$. As a result, the inverse scattering problem of twisted $\frac {U(4)}{U(2)\times U(2)}$-hierarchy can be solved by the same argument. We will  state the results,  leave analogous details to \cite{ABT86}, \cite{MW11}, and only give the proof for projecting the extended inverse results to that of a twisted $\frac{U(3)}{U(1)\times U(2)}$-spectral problem in this section.

Write $\tilde a=\left(\begin{array}{cc} 0 & D\\ -D^* &0\end{array}\right)$, and $D=\textsl{diag }(w_1,w_2)$. Define
\begin{gather}
\vec x=(x_1,x_2)=x(w_1,w_2),\label{E:vecx}\\
 X=x_1\tilde a_1 + x_2\tilde a_2,\label{E:vecX}\\
\tilde a_1=
i\left(\begin{array}{cccc}
0 & 0 & 1 & 0\\
0 & 0 & 0 & 0\\
1 & 0  & 0 & 0\\
0 & 0 & 0 & 0
\end{array}
\right),\quad
\tilde a_2=
i\left(\begin{array}{cccc}
0 & 0 & 0 & 0\\
0 & 0 & 0 & 1\\
0 & 0 & 0 & 0\\
0 & 1 & 0 & 0
\end{array}
\right).\label{E:a12}
\end{gather}
\begin{theorem}\label{T:inverse-m'}
Let  $\tilde V(\lambda)$, $\lambda\in\mathbb R$ satisfy
the analytical constraints (\ref{E:ana-sd-ext}) and the algebraic constraints (\ref{E:determinant-ext})-(\ref{E:sigma-2-sd-ext}). Then there exists uniquely $M(\vec x,\lambda)$ such that
\begin{eqnarray}
&&M_+(\vec x,\lambda)=M_-(\vec x,\lambda)e^{\lambda X+\frac \epsilon{\lambda}\tilde\sigma_2(X)}\tilde V(\lambda)e^{-\lambda X-\frac \epsilon{\lambda}\tilde\sigma_2(X)},\quad\lambda\in\mathbb R,\label{E:RH-m'}\\
&&M(\vec x,\lambda)\left(1-\frac {e^{\lambda X+\frac \epsilon{\lambda}\tilde\sigma_2(X)}\tilde V(\lambda_0)e^{-(\lambda X+\frac \epsilon{\lambda}\tilde\sigma_2(X))}}{\lambda-\lambda_0}\right)
\textit{is regular at $\lambda_0\in Z$,}\label{E:RH-sd}\\
&&\textit{$M(\vec x,\lambda)$ is holomorphic for $\lambda\in\mathbb C\backslash (\mathbb R\cup Z)$, $M(\vec x,\lambda)\to 1$ as $|\lambda|\to\infty$},\label{E:RH-bdry-m'}
\end{eqnarray}
and $M(\vec x,\lambda)$ satisfies the analytical and algebraic conditions
\begin{eqnarray}
&&\textrm{det }M\equiv 1,\label{E:determinant-m-inv}\\
&&M(\vec x,\bar\lambda)^*=M(\vec x,\lambda)^{-1},\label{E:self-adjoint-inv}\\
&&\tilde \sigma_1(M(\vec x,-\lambda))=M(\vec x,\lambda),\quad
 \tilde \sigma_2(M^{-1}(\vec x,0)M(\vec x,\frac \epsilon\lambda))=M(\vec x,\lambda)\label{E:sigma-12-inv}\\
 &&\textit{$x^k\partial^{k'}_x\lambda^{k'}(M(\vec x,\lambda)-1)\in L^2(\mathbb R)$ for $\forall k,\,k'$ and tends to $0$ uniformly }\label{E:RH-reg-m'}\\
&&\textit{as $x\to -\infty$}; \textit{ $\exists \delta(\lambda)$ diagonal, s.t. $x^k\partial^{k'}_x\lambda^{k'}(M(\vec x,\lambda)-\delta(\lambda))\in L^2(\mathbb R)$}\nonumber\\
&&\textit{for $\forall k,\,k'$ and tends to $0$ uniformly as $x\to \infty$.}\nonumber
\end{eqnarray}
Moreover, if $\tilde V(\lambda)$ is an extended scattering data, i.e. is of the form (\ref{E:tilde-v}), then
\begin{equation}\label{E:M-extend}
M((x,0),\lambda)=
\left(\begin{array}{cccc}
M_{11} & 0& M_{12} &M_{13}\\
0 & 1 & 0& 0\\
M_{21}& 0 & M_{22} & M_{23}\\
M_{31}& 0 & M_{32} & M_{33}
\end{array}
\right).
\end{equation}
\end{theorem}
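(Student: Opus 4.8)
The plan is to treat the existence, uniqueness and algebraic properties of $M$ as a standard matrix Riemann--Hilbert problem imported from the cited works, and to concentrate the genuinely new work on the projection identity (\ref{E:M-extend}).

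First I would establish existence of a unique $M(\vec x,\lambda)$ satisfying (\ref{E:RH-m'})--(\ref{E:RH-bdry-m'}) by the Beals--Coifman theory. The analytic decay (\ref{E:ana-sd-ext}) guarantees that the jump matrix $G(\vec x,\lambda)=e^{\lambda X+\frac\epsilon\lambda\tilde\sigma_2(X)}\tilde V(\lambda)e^{-\lambda X-\frac\epsilon\lambda\tilde\sigma_2(X)}$ is a Schwartz perturbation of the identity along $\mathbb R$, so the associated singular integral operator is Fredholm of index zero; the algebraic constraints (\ref{E:determinant-ext})--(\ref{E:sigma-2-sd-ext}), in particular self-adjointness, then furnish the vanishing lemma and hence unique solvability even in the presence of the discrete data. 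Since this argument uses only the involution identities for $\tilde\sigma_i$, the commutativity $[\tilde a_1,\tilde a_2]=0$, and the self-adjointness of the leading term---and not the symmetric/antisymmetric nature of the off-diagonal blocks---it is verbatim the argument of \cite{ABT86}, \cite{MW11}, \cite{BC84}, which I would cite rather than reproduce.

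Next, the analytic and algebraic properties (\ref{E:determinant-m-inv})--(\ref{E:RH-reg-m'}) of $M$ follow by uniqueness. For each symmetry I would build a candidate solution from $M$---e.g.\ $(M(\vec x,\bar\lambda)^*)^{-1}$ for (\ref{E:self-adjoint-inv}), and $\tilde\sigma_1(M(\vec x,-\lambda))$ for the first relation in (\ref{E:sigma-12-inv})---verify that the transformed object satisfies the same jump, the same pole data and the same normalization at infinity using the corresponding reality condition on $\tilde V$, and conclude equality from the uniqueness just established. The determinant and decay statements follow the same template.

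The core of the proof is (\ref{E:M-extend}): at $x_2=0$ the $4\times4$ solution inherits the trivial second row and column of the extended data. The key computation is $\tilde\sigma_2(\tilde a_1)=-\tilde a_1$. Since $\tilde a_1$ couples only the indices $1$ and $3$, at $\vec x=(x,0)$ one has $\lambda X+\frac\epsilon\lambda\tilde\sigma_2(X)=x(\lambda-\frac\epsilon\lambda)\tilde a_1$, an exponent that annihilates $e_2$ and $e_2^{\,T}$ and acts only on the span of $e_1,e_3$. Hence the conjugating factors fix $e_2$ and $e_2^{\,T}$, so whenever $\tilde V$ has the extended form (\ref{E:tilde-v}) the jump matrix $G((x,0),\lambda)$ satisfies $Ge_2=e_2$ and $e_2^{\,T}G=e_2^{\,T}$; at a discrete $\lambda_0\in Z$ the embedded residue is nilpotent with vanishing second row and column, so $\tilde V(\lambda_0)e_2=0$ and $e_2^{\,T}\tilde V(\lambda_0)=0$ survive the (identity) conjugation. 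I would then read the vector $M((x,0),\lambda)e_2$ off the Riemann--Hilbert data: by $Ge_2=e_2$ it has no jump across $\mathbb R$, by $\tilde V(\lambda_0)e_2=0$ it has no pole in $Z$, and by (\ref{E:RH-bdry-m'}) it tends to $e_2$ at infinity; being entire and bounded, it equals $e_2$ by Liouville. The identical argument applied to $e_2^{\,T}M((x,0),\lambda)$ gives the trivial second row, which is exactly (\ref{E:M-extend}). The main obstacle is the unique solvability underlying the first step---the vanishing lemma for the singular integral operator with combined continuous and nilpotent discrete data---but this is imported from \cite{ABT86}, \cite{MW11}, \cite{BC84}; within the new projection argument the only delicate point is the interaction of the restriction $x_2=0$ with $\tilde\sigma_2$, which rests on the sign $\tilde\sigma_2(\tilde a_1)=-\tilde a_1$ together with $\tilde a_1e_2=0$, and on noting that the embedded residue carries $0$ rather than $1$ in the $(2,2)$ slot, reconciling the $(2,2)$-entry in (\ref{E:tilde-v}) with the nilpotency (\ref{E:deg-sd-ext}).
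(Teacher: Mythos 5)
Your overall plan is the paper's: existence, uniqueness and the regularity statement (\ref{E:RH-reg-m'}) are imported from Theorem 5.1 of \cite{MW11} (with \cite{BC84}, \cite{ABT86}), $\det M\equiv 1$ follows from continuity of $\det M$ across the contour plus Liouville, and the symmetries (\ref{E:self-adjoint-inv}), (\ref{E:sigma-12-inv}) follow from uniqueness applied to transformed candidates. The paper disposes of (\ref{E:M-extend}) with a one-line appeal to the same uniqueness; your explicit Riemann--Hilbert argument is a legitimate (and more informative) substitute, and its column half is correct: $\tilde\sigma_2(\tilde a_1)=-\tilde a_1$, so at $\vec x=(x,0)$ the exponent is $x(\lambda-\tfrac\epsilon\lambda)\tilde a_1$, which annihilates $e_2$; hence $Ge_2=e_2$, the embedded residues satisfy $Re_2=0$ (and your reading that the discrete data carries $0$, not $1$, in the $(2,2)$ slot is the right reconciliation of (\ref{E:tilde-v}) with (\ref{E:deg-sd-ext})), so $M((x,0),\lambda)e_2$ is entire, tends to $e_2$, and equals $e_2$ by Liouville.

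The gap is the sentence claiming the ``identical argument'' applies to $e_2^{\,T}M((x,0),\lambda)$. Rows of $M$ transform by \emph{right} multiplication by the jump: $e_2^{\,T}M_+=(e_2^{\,T}M_-)G$, and the identity $e_2^{\,T}G=e_2^{\,T}$ says nothing about $(e_2^{\,T}M_-)(G-1)$ when $e_2^{\,T}M_-$ is an unknown row vector; right multiplication by $G$ fixes only its left $1$-eigenvectors. (A $2\times2$ toy example: $M=\left(\begin{smallmatrix}1&b\\0&1\end{smallmatrix}\right)$ has $Me_1=e_1$ but $e_1^{\,T}M\neq e_1^{\,T}$, so column triviality alone never forces row triviality.) As written, that step fails. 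The repair is one line with tools you have already established at that point: by (\ref{E:self-adjoint-inv}), $M(\vec x,\lambda)^{-1}=M(\vec x,\bar\lambda)^*$, so applying your column result at $\bar\lambda$ and taking adjoints gives $e_2^{\,T}M(\vec x,\lambda)^{-1}=e_2^{\,T}$, hence $e_2^{\,T}M=e_2^{\,T}$. Equivalently, run the Liouville argument on $e_2^{\,T}M^{-1}$, whose jump is $(M^{-1})_+=G^{-1}(M^{-1})_-$ so that $e_2^{\,T}$ passes through $G^{-1}$ on the left, the pole check using $R^2=0$ to write $M^{-1}=(1-\tfrac R{\lambda-\lambda_0})H$ with $H$ regular. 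Alternatively, the paper's intended uniqueness argument handles row and column at once: when $\tilde V$ has the form (\ref{E:tilde-v}), conjugation by $P=\mathrm{diag}(1,-1,1,1)$ preserves all the Riemann--Hilbert data, so $PMP=M$ by uniqueness, and the remaining scalar $(2,2)$-entry is $\equiv 1$ by Liouville.
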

\begin{proof} The existence of $M(\vec x,\lambda)$ satisfying (\ref{E:RH-m'})-(\ref{E:RH-bdry-m'}), and (\ref{E:RH-reg-m'}) can be proved by the same argument as that in the proof of Theorem 5.1 in \cite{MW11}. Properties (\ref{E:determinant-ext}), (\ref{E:deg-sd-ext}) imply $\det{M}$ is continuous for $\lambda\in\mathbb C$.  Thus Condition (\ref{E:determinant-m-inv}) is shown by noting $\partial_{\bar\lambda} \det {M}=0$ for $\lambda\in\mathbb C^\pm$  and applying Liouville's theorem. The statements (\ref{E:self-adjoint-inv}), (\ref{E:sigma-12-inv}), and (\ref{E:M-extend}) can be proved by the uniqueness property of $M(\vec x,\lambda)$. 
\end{proof}

Defining the  asymptotic expansions
\begin{eqnarray}
M( \vec x,\lambda)\to & 1+\sum_{k=1}^\infty M_k^{\sharp}(\vec x)\lambda^{-k} &\textit{ as $|\lambda|\to \infty$,}\label{E:m'infty}\\
M(\vec x,\lambda)\to &\sum_{k=0}^\infty M_k^{\flat}(\vec x)\lambda^{k} &\textit{ as $|\lambda|\to 0$.}\label{E:m'0}
\end{eqnarray} and applying the same argument as that in the proof of Lemma 6.1 - 6.3, and Theorem 6.1 in \cite{MW11}, one can derive the following four lemmas. 
\begin{lemma}\label{L:pdem'}
Suppose $M(\vec x,\lambda)$ is derived by Theorem \ref{T:inverse-m'}. Then
\begin{equation}
\frac {\partial M}{\partial x_j}=[\lambda \tilde a_j+\frac \epsilon\lambda\tilde \sigma_2(\tilde a_j), M ]+\frac \epsilon\lambda\left(B_j(\vec x)-\tilde \sigma_2(\tilde a_j)\right)M-C_j(\vec x)M,\label{E:M'}
\end{equation}
with
\begin{gather}
    B_j(\vec x)\in \tilde{\mathcal P}_1\cap C^\infty,\quad C_j(\vec x)\in \tilde{\mathcal K}_1\cap C^\infty.\label{E:BC}
\end{gather}
\end{lemma}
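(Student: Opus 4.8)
The plan is to run the standard dressing argument off the uniqueness in Theorem \ref{T:inverse-m'}: I will show that the combination $\partial_{x_j}M-[\lambda\tilde a_j+\frac\epsilon\lambda\tilde\sigma_2(\tilde a_j),M]$ carries \emph{exactly} the same singular structure as $M$, so that multiplying by $M^{-1}$ on the right produces a globally controlled rational function of $\lambda$, which I then identify via Liouville and the two reductions. First I would abbreviate $A_j:=\lambda\tilde a_j+\frac\epsilon\lambda\tilde\sigma_2(\tilde a_j)$ and note that, since $X=x_1\tilde a_1+x_2\tilde a_2$ and the $\tilde a_i$ commute, the conjugating factor $E:=e^{\lambda X+\frac\epsilon\lambda\tilde\sigma_2(X)}$ satisfies $\partial_{x_j}E=A_jE=EA_j$. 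Hence the jump matrix in (\ref{E:RH-m'}) obeys $\partial_{x_j}(E\tilde VE^{-1})=[A_j,E\tilde VE^{-1}]$, and the nilpotent residue datum $N(\lambda_0)=E\tilde V(\lambda_0)E^{-1}$ of (\ref{E:RH-sd}) obeys $\partial_{x_j}N=[A_j,N]$ together with $N^2=0$ (from $\tilde V(\lambda_0)^2=0$, (\ref{E:deg-sd-ext})).

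Differentiating the jump relation $M_+=M_-\,E\tilde VE^{-1}$ in $x_j$ and inserting these identities, a short computation gives
\[
\partial_{x_j}M_+-[A_j,M_+]=\left(\partial_{x_j}M_--[A_j,M_-]\right)E\tilde VE^{-1},
\]
so that $\partial_{x_j}M-[A_j,M]$ solves the \emph{same} jump problem as $M$; the analogous cancellation using $N(\lambda_0)^2=0$ shows it has the \emph{same} poles at $Z$. Consequently $\Theta_j:=\left(\partial_{x_j}M-[A_j,M]\right)M^{-1}$ is holomorphic across $\mathbb R$ and regular at every $\lambda_0\in Z$, hence holomorphic on $\mathbb C\backslash\{0\}$. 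Reading off the expansions (\ref{E:m'infty})--(\ref{E:m'0}) I get $\Theta_j\to-[\tilde a_j,M_1^{\sharp}]$ as $|\lambda|\to\infty$ and at most a simple pole at $\lambda=0$ with residue $-\epsilon[\tilde\sigma_2(\tilde a_j),M_0^{\flat}](M_0^{\flat})^{-1}$. Thus $\lambda\Theta_j$ is entire with at most linear growth, and Liouville's theorem forces $\Theta_j=\frac{R_{-1}}{\lambda}+R_0$ with $R_0=-[\tilde a_j,M_1^{\sharp}]$ and $R_{-1}=-\epsilon[\tilde\sigma_2(\tilde a_j),M_0^{\flat}](M_0^{\flat})^{-1}$. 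Setting $C_j:=-R_0$ and $B_j:=\frac1\epsilon R_{-1}+\tilde\sigma_2(\tilde a_j)$ reproduces the asserted form of (\ref{E:M'}).

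It remains to place $B_j,C_j$ in the correct subspaces, and here the reductions do the work. The self-adjointness $M(\vec x,\bar\lambda)^*=M(\vec x,\lambda)^{-1}$ together with $A_j(\bar\lambda)^*=-A_j(\lambda)$ yields $\Theta_j(\bar\lambda)^*=-\Theta_j(\lambda)$, whence $R_{-1},R_0\in u(4)$. The reduction $\tilde\sigma_1(M(\vec x,-\lambda))=M(\vec x,\lambda)$, combined with $\tilde\sigma_1(\tilde a_j)=-\tilde a_j$ and $\tilde\sigma_1(\tilde\sigma_2(\tilde a_j))=-\tilde\sigma_2(\tilde a_j)$ (so that $\tilde\sigma_1(A_j(-\lambda))=A_j(\lambda)$), yields $\tilde\sigma_1(\Theta_j(-\lambda))=\Theta_j(\lambda)$; matching the $\lambda^{-1}$ and $\lambda^{0}$ coefficients gives $\tilde\sigma_1(R_{-1})=-R_{-1}$ and $\tilde\sigma_1(R_0)=R_0$, i.e. $R_{-1}\in\tilde{\mathcal P}_1$ and $R_0\in\tilde{\mathcal K}_1$. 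Since $\tilde\sigma_2(\tilde a_j)\in\tilde{\mathcal P}_1$ as well, this gives $B_j\in\tilde{\mathcal P}_1$ and $C_j\in\tilde{\mathcal K}_1$, while $B_j,C_j\in C^\infty$ follows from the uniform regularity estimates (\ref{E:RH-reg-m'}) for $M$ and its $x$-derivatives. The step I expect to be the main obstacle is the pole cancellation at $Z$: one must track the principal parts of $\partial_{x_j}M-[A_j,M]$ near each $\lambda_0$ and use both $N^2=0$ and $\partial_{x_j}N=[A_j,N]$ to see that $\Theta_j$ is genuinely regular there, and one must secure the $C^\infty$-dependence on $\vec x$ uniformly. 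This is precisely where the analytic input of Theorem \ref{T:inverse-m'} (mirroring Lemma 6.1--6.3 of \cite{MW11}) is needed.
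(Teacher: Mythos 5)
Your proposal is correct and follows essentially the same route as the paper, which proves this lemma by deferring to Lemmas 6.1--6.3 and Theorem 6.1 of \cite{MW11}: differentiate the Riemann--Hilbert data, verify that the jump and the poles at $Z$ cancel, identify $\left(\partial_{x_j}M-[A_j,M]\right)M^{-1}$ as $R_0+R_{-1}/\lambda$ by Liouville, and place $B_j$, $C_j$ in $\tilde{\mathcal P}_1$, $\tilde{\mathcal K}_1$ via the self-adjointness and $\tilde\sigma_1$ reductions. The pole cancellation you flag as the main obstacle does go through exactly as you indicate, the key point being that the conjugating exponential in (\ref{E:RH-sd}) is evaluated at the running $\lambda$ rather than frozen at $\lambda_0$, so that $N^2=0$ and $\partial_{x_j}N=[A_j,N]$ (with $A_j=A_j(\lambda)$) make all singular terms cancel identically.
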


\begin{lemma}\label{L:CC}
The compatibility conditions of (\ref{E:M'}) are
\begin{equation}
\partial_{x_j}C_i-\partial_{x_i}C_j-\left[C_i,C_j\right]=
\epsilon\left[\tilde a_i, B_j\right]-\epsilon\left[\tilde a_j, B_i\right].\label{E:cc}
\end{equation}
\end{lemma}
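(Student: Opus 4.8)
The plan is to read (\ref{E:M'}) as a disguised flat-connection system and to extract (\ref{E:cc}) as its $\lambda^0$-component. First I would cancel the commutator term in (\ref{E:M'}) against the $\frac\epsilon\lambda\tilde\sigma_2(\tilde a_j)$ contribution and rewrite the equation as $\partial_{x_j}M = P_jM - MQ_j$, where $P_j = \lambda\tilde a_j + \frac\epsilon\lambda B_j - C_j$ and $Q_j = \lambda\tilde a_j + \frac\epsilon\lambda\tilde\sigma_2(\tilde a_j)$. Because $\tilde a_1,\tilde a_2$ lie in the common abelian subalgebra spanned by (\ref{E:a12}) and $\tilde\sigma_2$ is an algebra automorphism with $\tilde\sigma_2(\tilde a_1)=-\tilde a_1$ and $\tilde\sigma_2(\tilde a_2)=\tilde a_2$, every bracket occurring in $[Q_i,Q_j]$ is a scalar multiple of $[\tilde a_i,\tilde a_j]=0$; hence $[Q_i,Q_j]=0$ and $\Phi:=M\,e^{x_1Q_1+x_2Q_2}$ is well defined and solves the genuine linear system $\partial_{x_j}\Phi=P_j\Phi$.

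Next I would invoke the smoothness and single-valuedness of $M$ from Theorem \ref{T:inverse-m'}: the mixed partials of $\Phi$ commute, which is equivalent to the zero-curvature identity $\partial_{x_i}P_j-\partial_{x_j}P_i-[P_i,P_j]=0$, an equality of Laurent polynomials in $\lambda$ that therefore holds coefficient by coefficient. Collecting the $\lambda^0$-coefficient gives $\partial_{x_j}C_i-\partial_{x_i}C_j-[C_i,C_j]-\epsilon[\tilde a_i,B_j]+\epsilon[\tilde a_j,B_i]=0$, which after moving the $\tilde a$-brackets to the right-hand side is precisely (\ref{E:cc}).

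Finally I would check that the remaining powers of $\lambda$ contribute no independent relation. Reading off from the asymptotic matchings of Lemma \ref{L:pdem'} the explicit forms $C_j=[\tilde a_j,M_1^\sharp]$ and $B_j=M_0^\flat\,\tilde\sigma_2(\tilde a_j)\,(M_0^\flat)^{-1}$, with $M_1^\sharp$, $M_0^\flat$ the coefficients in (\ref{E:m'infty}), (\ref{E:m'0}) shared by both values of $j$, the $\lambda^{2}$-coefficient is $[\tilde a_i,\tilde a_j]=0$, the $\lambda^{-2}$-coefficient is $[B_i,B_j]=M_0^\flat\,\tilde\sigma_2([\tilde a_i,\tilde a_j])\,(M_0^\flat)^{-1}=0$, and the $\lambda^{1}$-coefficient is $[\tilde a_i,C_j]-[\tilde a_j,C_i]=[[\tilde a_i,\tilde a_j],M_1^\sharp]=0$ by the Jacobi identity; the $\lambda^{-1}$-coefficient vanishes by the same mechanism once the $\lambda\to0$ matching relating $\partial_{x_i}M_0^\flat$ to $C_i$ is inserted.

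The main obstacle is exactly this last step: confirming that the $\lambda^{\pm1},\lambda^{\pm2}$ coefficients collapse to identities rather than imposing extra constraints. Everything there rests on the commutativity $[\tilde a_i,\tilde a_j]=0$, the automorphism property of $\tilde\sigma_2$, and the fact that $B_j$ and $C_j$ are assembled from the single pair $M_0^\flat$, $M_1^\sharp$; granting these structural inputs the computation proceeds verbatim as in the proof of Lemma 6.1--6.3 and Theorem 6.1 of \cite{MW11}, so that the only surviving genuine content of the zero-curvature equation is the $\lambda^0$-component (\ref{E:cc}).
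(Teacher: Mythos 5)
Your proof is correct: the cancellation that turns (\ref{E:M'}) into $\partial_{x_j}M=P_jM-MQ_j$, the commutativity $[Q_i,Q_j]=0$, the passage to the flat system $\partial_{x_j}\Phi=P_j\Phi$, and the extraction of the $\lambda^0$-coefficient of the zero-curvature identity all check out, as do the explicit forms $C_j=[\tilde a_j,M_1^\sharp]$ and $B_j=M_0^\flat\,\tilde\sigma_2(\tilde a_j)(M_0^\flat)^{-1}$ used to see that the $\lambda^{\pm1},\lambda^{\pm2}$ coefficients impose no further constraints. The paper gives no explicit proof of this lemma — it defers to Lemmas 6.1--6.3 and Theorem 6.1 of \cite{MW11}, where the argument is precisely this cross-differentiation and $\lambda$-coefficient matching — so your write-up is essentially the paper's (cited) approach, with the details filled in.
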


\begin{lemma}\label{L:gauge}
For any constant $\mu\in\mathbb R$, there exists uniquely
\begin{equation}
\tilde b(\vec x) \in \tilde K_1'\cap C^\infty, \label{E:gauge}
\end{equation}
such that 
\begin{eqnarray}
\tilde b(x(w_1,w_2))\to \left(\begin{array}{cc} 1_{2\times 2} & 0 \\ 0 & e^{-i\mu/2} 1_{2\times 2}\end{array}\right) \in \tilde K_1'&&\textit{as $x\to -\infty$,} \label{E:gauge-infty}\\
-\tilde bC_j{\tilde b}^{-1}+(\partial_j\tilde b){\tilde b}^{-1}\in\tilde{\mathcal S} &&\textit{for $\forall j$.}\label{E:gauge-1}
\end{eqnarray}
\end{lemma}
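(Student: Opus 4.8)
The plan is to recast the algebraic constraint (\ref{E:gauge-1}) as an overdetermined but \emph{flat} first-order linear system for $\tilde b$, and then integrate it from the prescribed value at $x\to-\infty$. First I would use the decomposition $\tilde{\mathcal K}_1=\tilde{\mathcal S}\oplus\tilde{\mathcal K}_1'$ of (\ref{E:split-K-2-ext}) into commuting ideals, writing $C_j=C_j^{\mathcal S}+C_j'$ with $C_j^{\mathcal S}=\pi_{\tilde{\mathcal S}}(C_j)$ and $C_j'=\pi_{\tilde{\mathcal K}_1'}(C_j)$. Since $\tilde b\in\tilde K_1'=1_{2\times2}\otimes U(2)$ commutes with $\tilde{\mathcal S}$ (disjoint blocks), one has $\tilde bC_j\tilde b^{-1}=C_j^{\mathcal S}+\tilde bC_j'\tilde b^{-1}$, while $(\partial_j\tilde b)\tilde b^{-1}\in\tilde{\mathcal K}_1'$; because $\tilde{\mathcal S}\cap\tilde{\mathcal K}_1'=0$, the condition (\ref{E:gauge-1}) is equivalent to the single system
\[
\partial_j\tilde b=\tilde b\,C_j',\qquad C_j'=\pi_{\tilde{\mathcal K}_1'}(C_j),\quad\forall j.
\]
Thus the lemma reduces to solving this system with the boundary datum (\ref{E:gauge-infty}).

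The heart of the argument is the Frobenius integrability of the reduced system, namely $\partial_1C_2'-\partial_2C_1'+[C_1',C_2']=0$. Projecting the compatibility relation (\ref{E:cc}) onto $\tilde{\mathcal K}_1'$ and using $\pi_{\tilde{\mathcal K}_1'}[C_1,C_2]=[C_1',C_2']$, this is exactly the statement that $\pi_{\tilde{\mathcal K}_1'}\bigl(\epsilon[\tilde a_1,B_2]-\epsilon[\tilde a_2,B_1]\bigr)=0$, i.e. that the curvature two-form on the right of (\ref{E:cc}) takes values in $\tilde{\mathcal S}$. To establish this I would extract from the $|\lambda|\to\infty$ and $\lambda\to0$ expansions (\ref{E:m'infty}), (\ref{E:m'0}) of (\ref{E:M'}) the explicit forms $C_j=[\tilde a_j,M_1^\sharp]$ and $B_j=M_0^\flat\,\tilde\sigma_2(\tilde a_j)\,(M_0^\flat)^{-1}$, where $M_0^\flat=M(\vec x,0)$. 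The reality conditions (\ref{E:self-adjoint-inv})--(\ref{E:sigma-12-inv}) then pin down $M_0^\flat$: $\tilde\sigma_1(M_0^\flat)=M_0^\flat$ forces it block-diagonal, $\tilde\sigma_2(M_0^\flat)=(M_0^\flat)^{-1}$ forces its upper-left block $G_1$ to satisfy $G_1^2=1$, and $G_1\in U(2)$ together with $\det M\equiv1$ (from (\ref{E:determinant-m-inv})) and continuity from $M(\vec x,\cdot)\to1$ as $x\to-\infty$ (eq. (\ref{E:RH-reg-m'})) force $G_1\equiv1_{2\times2}$. A direct commutator computation with $\tilde a_1,\tilde a_2$ then gives $\pi_{\tilde{\mathcal K}_1'}[\tilde a_1,B_2]=\pi_{\tilde{\mathcal K}_1'}[\tilde a_2,B_1]=0$, yielding flatness; this is the delicate point, and is the content of the computation in Lemma 6.3 of \cite{MW11}.

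With flatness in hand I would integrate the connection $\{C_j'\}$ on $\mathbb R^2$. Since $M(\vec x,\lambda)\to1$ as $x\to-\infty$ by (\ref{E:RH-reg-m'}), the coefficients $C_j'$ decay there, so solutions tend to constants; I prescribe that constant to be $\mathrm{diag}(1_{2\times2},e^{-i\mu/2}1_{2\times2})$, which is legitimate because this central factor lies in the centre of $\tilde K_1'$, commutes with $\tilde{\mathcal K}_1'$, and is therefore unconstrained by (\ref{E:gauge-1})---this is precisely the free parameter $\mu$. Path-independence of the integration is guaranteed by flatness, existence and uniqueness by the standard theory of flat connections with prescribed initial value, and $C^\infty$-smoothness of $\tilde b$ by the smoothness of $C_j$ in (\ref{E:BC}); finally $\tilde b$ stays in $\tilde K_1'$ because $(\partial_j\tilde b)\tilde b^{-1}=\tilde bC_j'\tilde b^{-1}\in\tilde{\mathcal K}_1'$ and the initial value is in $\tilde K_1'$.

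The step I expect to be the main obstacle is the flatness of the reduced connection, i.e. that the $\tilde{\mathcal K}_1'$-component of the right-hand side of (\ref{E:cc}) vanishes; everything hinges on correctly identifying $M(\vec x,0)$ and exploiting the two involutions simultaneously, and it is here that the specific block structure of $\tilde a_1,\tilde a_2$ and the reality conditions are indispensable.
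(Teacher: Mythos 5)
Your proposal is correct and takes essentially the same route as the paper: the paper proves this lemma by invoking the argument of Lemmas 6.1--6.3 and Theorem 6.1 of \cite{MW11} (adding only the remark that the boundary value in (\ref{E:gauge-infty}) may be any element of $\tilde K_1'$), and that is exactly what you reconstruct --- reducing (\ref{E:gauge-1}) to $\partial_j\tilde b=\tilde b\,\pi_{\tilde{\mathcal K}_1'}(C_j)$, proving flatness of this reduced connection from (\ref{E:cc}) via the identifications $C_j=[\tilde a_j,M_1^\sharp]$, $B_j=M_0^\flat\tilde\sigma_2(\tilde a_j)(M_0^\flat)^{-1}$, the reality conditions pinning down $M(\vec x,0)$, and then integrating the flat connection from $x=-\infty$. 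Your isolation of the vanishing of the $\tilde{\mathcal K}_1'$-component of the curvature as the crux, checked through the block structure of $M(\vec x,0)$ and of $\tilde a_1,\tilde a_2$, is precisely the content of the computation the paper defers to.
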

\begin{proof} We remark the boundary condition (\ref{E:gauge-infty}) can be chosen for arbitrary element in $\tilde K_1'$.
\end{proof}

\begin{lemma}\label{L:inv}
Suppose the assumption of Theorem \ref{T:inverse-m'} holds. For any constant $\mu\in\mathbb R$, let
\begin{equation}
\tilde\Psi(x,\lambda)=\tilde b(\vec x)M(\vec x,\lambda)e^{\lambda X+\frac \epsilon\lambda\sigma_2(X)} 
\label{E:definepsi}
\end{equation}
Here $x$, $\vec x$, $X$, $M$ satisfy (\ref{E:vecx})-(\ref{E:a12}). Then
\begin{gather}
\frac{\partial \tilde\Psi}{\partial x}=\lambda \tilde b\tilde a{\tilde b}^{-1}\tilde\Psi+\frac \epsilon\lambda\tilde\sigma_2(\tilde b\tilde a{\tilde b}^{-1})\tilde\Psi+\tilde v\tilde\Psi,\label{E:inv-eigen}
\end{gather}
with
\begin{eqnarray}
\tilde v(\vec x)&=&\sum_{j=1}^2w_j(-\tilde bC_j{\tilde b}^{-1}+(\partial_{x_j}\tilde b){\tilde b}^{-1})\in\tilde{\mathcal S}\cap \mathbb S,\label{E:C}
\end{eqnarray}
where $C_j$, $\tilde b(\vec x)$ are defined by Lemma \ref{L:pdem'}, \ref{L:gauge}, respectively. Moreover, 
\begin{gather}
\tilde b(\vec x)-\left(\begin{array}{cc} 1_{2\times 2} & 0 \\ 0 & e^{-i\mu/2} 1_{2\times 2}\end{array}\right)\in \tilde K_1'\cap\mathbb S,\label{E:b-schwartz}\\
\textit{$\tilde v$ is independent of $\mu$ defined by (\ref{E:gauge-infty}).}\label{E:invariant}
\end{gather}
\end{lemma}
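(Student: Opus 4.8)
The plan is to take the asymptotic expansions (\ref{E:m'infty}), (\ref{E:m'0}) of $M(\vec x,\lambda)$ and substitute them into the differential equation (\ref{E:M'}) obtained in Lemma \ref{L:pdem'} to extract explicit formulas for $B_j$ and $C_j$ in terms of the expansion coefficients $M_k^\sharp$, $M_k^\flat$. Concretely, matching the $\lambda^1$--coefficient in (\ref{E:M'}) as $|\lambda|\to\infty$ identifies $C_j$ with a commutator-type expression built from $M_1^\sharp$ and $\tilde a_j$, while matching the $\lambda^{-1}$--coefficient as $\lambda\to 0$ identifies $B_j$ via $M_0^\flat$ and $\tilde\sigma_2(\tilde a_j)$. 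With these formulas in hand, I would first construct the gauge $\tilde b(\vec x)$ using Lemma \ref{L:gauge}, and then define $\tilde\Psi$ by (\ref{E:definepsi}) and differentiate.

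The core computation is then to differentiate $\tilde\Psi=\tilde b M e^{\lambda X+\frac\epsilon\lambda\sigma_2(X)}$ with respect to $x$. Using $x=x(w_1,w_2)$ with $\partial_x=\sum_j w_j\partial_{x_j}$, the product rule together with (\ref{E:M'}) gives
\[
\frac{\partial\tilde\Psi}{\partial x}=\lambda\,\tilde b\tilde a\tilde b^{-1}\tilde\Psi+\frac\epsilon\lambda\,\tilde b\Big(\textstyle\sum_j w_j(B_j-\tilde\sigma_2(\tilde a_j))\Big)\tilde b^{-1}\tilde\Psi+\Big(\textstyle\sum_j w_j\big((\partial_{x_j}\tilde b)\tilde b^{-1}-\tilde bC_j\tilde b^{-1}\big)\Big)\tilde\Psi,
\]
where the $\lambda$--term has used $X=\sum_j x_j\tilde a_j$ and the relation $\tilde a=\sum_j w_j\tilde a_j$ (from $\vec x=x(w_1,w_2)$). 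To reach the claimed form (\ref{E:inv-eigen}), the $\frac\epsilon\lambda$--term must collapse to $\frac\epsilon\lambda\tilde\sigma_2(\tilde b\tilde a\tilde b^{-1})$; this is precisely where the loop-algebra symmetries matter. I expect the key identity to be that conjugating by the gauge $\tilde b$, which lies in $\tilde K_1'$, intertwines correctly with $\tilde\sigma_2$ and that $B_j\in\tilde{\mathcal P}_1$ (from (\ref{E:BC})) combines with $\tilde\sigma_2(\tilde a_j)$ so that $\tilde b(B_j-\tilde\sigma_2(\tilde a_j))\tilde b^{-1}=\tilde\sigma_2(\tilde b\tilde a_j\tilde b^{-1})-\tilde\sigma_2(\tilde a_j)$ pointwise; summing against $w_j$ yields the desired $\tilde\sigma_2(\tilde b\tilde a\tilde b^{-1})$. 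The remaining zeroth-order term is $\tilde v$ by definition (\ref{E:C}), and the defining condition (\ref{E:gauge-1}) of Lemma \ref{L:gauge} guarantees $\tilde v\in\tilde{\mathcal S}$.

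The regularity and reality assertions then follow: membership $\tilde v\in\mathbb S$ comes from the Schwartz-class decay of $M-1$ built into (\ref{E:RH-reg-m'}) together with the smoothness (\ref{E:BC}), (\ref{E:gauge}), and the decay of $B_j,C_j$ inherited from $M_k^\sharp,M_k^\flat$; statement (\ref{E:b-schwartz}) follows by integrating (\ref{E:gauge-1}) from $x=-\infty$ using the boundary condition (\ref{E:gauge-infty}) and the decay of $C_j$. For the $\mu$--independence (\ref{E:invariant}), I would observe that changing $\mu$ replaces $\tilde b$ by $\tilde b g$ with $g$ a constant diagonal element of $\tilde S$ commuting with $\tilde a$ and with $\tilde\sigma_2$; since $\tilde v$ is manifestly the zeroth-order coefficient of the $x$--equation for $\tilde\Psi$ and that equation is determined by the scattering data alone, $\tilde v$ cannot depend on the normalization constant.

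The main obstacle I anticipate is verifying the intertwining identity $\tilde b(B_j-\tilde\sigma_2(\tilde a_j))\tilde b^{-1}=\tilde\sigma_2(\tilde b\tilde a_j\tilde b^{-1})-\tilde\sigma_2(\tilde a_j)$ that makes the $\frac\epsilon\lambda$--term assemble into $\tilde\sigma_2(\tilde b\tilde a\tilde b^{-1})$. This requires pinning down $B_j$ precisely from the $\lambda\to 0$ expansion and checking it matches $\tilde\sigma_2$ of the leading gauge data; the symmetry $\tilde\sigma_2(M^{-1}(\vec x,0)M(\vec x,\frac\epsilon\lambda))=M(\vec x,\lambda)$ in (\ref{E:sigma-12-inv}) is the essential input, relating the $\lambda\to 0$ and $|\lambda|\to\infty$ behaviors, and extracting $B_j$ correctly from it is the delicate step. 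As the excerpt notes, this mirrors the argument of \cite{MW11}, so I would follow that computation closely while tracking that only the involution properties of $\tilde\sigma_i$, the commutativity of the $\tilde a_j$, and self-adjointness are used — not the symmetry type of $\tilde{\mathcal P}_1$.
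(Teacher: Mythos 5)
Your overall route is the same as the paper's: the paper delegates exactly this computation (extract $B_j,C_j$ from Lemma \ref{L:pdem'}, build the gauge from Lemma \ref{L:gauge}, differentiate (\ref{E:definepsi}), and use the symmetry (\ref{E:sigma-12-inv}) to identify the $\frac\epsilon\lambda$--coefficient) to Theorem 6.1 of \cite{MW11}, and only argues (\ref{E:invariant}) and the Schwartz properties directly. However, your central computation contains a concrete error. When you substitute (\ref{E:M'}) into $\partial_{x_j}\bigl(\tilde bMe^{\lambda X+\frac\epsilon\lambda\tilde\sigma_2(X)}\bigr)$, the term $-M\bigl(\lambda\tilde a_j+\frac\epsilon\lambda\tilde\sigma_2(\tilde a_j)\bigr)$ coming from the commutator cancels against the derivative of the exponential, and then the surviving $+\frac\epsilon\lambda\tilde\sigma_2(\tilde a_j)M$ cancels against the $-\frac\epsilon\lambda\tilde\sigma_2(\tilde a_j)M$ explicitly present in (\ref{E:M'}). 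What remains is
\begin{equation*}
\frac{\partial\tilde\Psi}{\partial x}=\lambda\,\tilde b\tilde a\tilde b^{-1}\tilde\Psi+\frac\epsilon\lambda\,\tilde b\Bigl(\sum_j w_jB_j\Bigr)\tilde b^{-1}\tilde\Psi+\tilde v\tilde\Psi,
\end{equation*}
so the $\frac\epsilon\lambda$--coefficient is $\tilde b\bigl(\sum_jw_jB_j\bigr)\tilde b^{-1}$, \emph{not} $\tilde b\bigl(\sum_jw_j(B_j-\tilde\sigma_2(\tilde a_j))\bigr)\tilde b^{-1}$ as in your display. Accordingly, the identity that actually has to be proven is $\tilde bB_j\tilde b^{-1}=\tilde\sigma_2(\tilde b\tilde a_j\tilde b^{-1})$ — equivalently, using $B_j=M(\vec x,0)\,\tilde\sigma_2(\tilde a_j)\,M(\vec x,0)^{-1}$ from the $\lambda\to 0$ matching, that $\tilde\sigma_2(\tilde b)^{-1}\tilde b\,M(\vec x,0)$ centralizes $\tilde\sigma_2(\tilde a_j)$ — and not your $\tilde b(B_j-\tilde\sigma_2(\tilde a_j))\tilde b^{-1}=\tilde\sigma_2(\tilde b\tilde a_j\tilde b^{-1})-\tilde\sigma_2(\tilde a_j)$. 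The two statements are genuinely different, because $\tilde b\in\tilde K_1'$ does not commute with $\tilde\sigma_2(\tilde a_j)$.

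Moreover, the two slips do not cancel: feeding your identity into your displayed formula gives $\partial_x\tilde\Psi=\lambda\tilde b\tilde a\tilde b^{-1}\tilde\Psi+\frac\epsilon\lambda\bigl(\tilde\sigma_2(\tilde b\tilde a\tilde b^{-1})-\tilde\sigma_2(\tilde a)\bigr)\tilde\Psi+\tilde v\tilde\Psi$, which differs from the target (\ref{E:inv-eigen}) by the nonzero term $-\frac\epsilon\lambda\tilde\sigma_2(\tilde a)\tilde\Psi$ (note $\tilde\sigma_2(\tilde a)=-w_1\tilde a_1+w_2\tilde a_2\neq 0$). So, as written, the argument fails at its decisive step; the repair is to redo the product-rule cancellation and then establish the corrected identity following \cite{MW11}. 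A smaller point: in your $\mu$--independence argument, the element by which $\tilde b$ changes is $g=\mathrm{diag}(1_{2\times 2},e^{i\theta}1_{2\times 2})$, a central element of $\tilde K_1'$, not an element of $\tilde S$; and the equation for $\tilde\Psi$ is \emph{not} determined by the scattering data alone, since $\tilde b\tilde a\tilde b^{-1}$ does change with $\mu$ ($g$ does not commute with $\tilde a$). The correct observation — which is the paper's — is that $g$ is constant and commutes with $C_j\in\tilde{\mathcal K}_1$, so changing $\mu$ can only alter the $\tilde{\mathcal K}_1'$--part of the right-hand side of (\ref{E:C}), while $\tilde v\in\tilde{\mathcal S}$; hence $\tilde v$ is unchanged.
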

\begin{proof}
Once  the boundary condition (\ref{E:gauge-infty}) is a diagonal element in $\tilde K_1'$, the argument in proving Theorem 6.1 in \cite{MW11} works well in proving all statements  in Lemma \ref{L:inv} except (\ref{E:invariant}). The property (\ref{E:invariant}) follows from the fact that changing $\mu$ could only alter the $\tilde K_1'$ part of the right hand side of (\ref{E:C}) and $\tilde v\in\tilde {\mathcal S}$.

Note (\ref{E:inv-eigen}) and (\ref{E:definepsi}) imply
\begin{equation}\label{E:M-inv-sch}
\frac{\partial M}{\partial x}=  [\lambda \tilde a+\frac \epsilon\lambda\tilde \sigma_2(\tilde a) ,\,\,M(x,\lambda) ]+Q(x,\lambda) M(x,\lambda)
\end{equation}
\begin{eqnarray}
&&Q(x,\lambda)
=\frac \epsilon\lambda\left(\tilde b^{-1}\tilde \sigma_2(\tilde b\tilde a\tilde b^{-1})\tilde b-\tilde \sigma_2(\tilde a)\right)-\tilde b^{-1}\frac {\partial \tilde b}{\partial x}+\tilde b^{-1}\tilde v\tilde b.\label{E:Q-1-sch}
\end{eqnarray}
Thus the Schwartz properties (\ref{E:C}) and (\ref{E:b-schwartz}) follow from (\ref{E:RH-reg-m'}), (\ref{E:M-inv-sch}), and (\ref{E:Q-1-sch}).
\end{proof}

Lemma \ref{L:inv} solves the inverse problem of a general twisted $\frac {U(4)}{U(2)\times U(2)}$-spectral problem for scattering data $\tilde V(\lambda)$ satisfying (\ref{E:ana-sd-ext})-(\ref{E:sigma-2-sd-ext}). 
The following theorem says that  when $\tilde V(\lambda)$ is an extended scattering data, the above result can be projected to be a solvability of the inverse problem for a twisted $\frac {U(3)}{U(1)\times U(2)}$-spectral problem.
\begin{theorem}\label{T:inv}
Suppose the assumption of Theorem \ref{T:inverse-m'} holds for either
\begin{equation}\label{E:tilde-v-reduced}
\tilde V(\lambda)=
\left(\begin{array}{cccc}
V_{11} & 0& V_{12} & V_{13}\\
0 & 1 & 0& 0\\
V_{21} & 0 & V_{22} & V_{23}\\
V_{31}& 0 & V_{32} & V_{33}
\end{array}
\right),
\end{equation}or
\begin{equation}\label{E:tilde-v-reduced-varied}
\tilde V(\lambda)=
\left(\begin{array}{cccc}
1 & 0& 0 & 0\\
0 & V_{11} & V_{12}& V_{13}\\
0 & V_{21} & V_{22} & V_{23}\\
0 & V_{31} & V_{32} & V_{33}
\end{array}
\right).
\end{equation}Then there exist a unique $\Psi(x,\lambda)\in L_+^\epsilon$ and a unique $b(x)\in K'_1$ satisfying $b(x)- 1\in\mathbb S$ (or $b(x)-\left(\begin{array}{ccc} 1 & 0 & 0 \\ 0 & 0 & -1 \\ 0 &1 & 0\end{array}\right)\in\mathbb S$) such that
\begin{equation}\label{E:inv-eigen-reduced}
\frac{\partial \Psi}{\partial x}=\lambda b a{ b}^{-1}\Psi+\frac \epsilon\lambda\sigma_2( b a{ b}^{-1})\Psi 
\end{equation}
with $a$ defined by (\ref{E:GMV-lax-bab}), and the associated extended scattering data of $b(x)$ is $\tilde V(\lambda)$.
\end{theorem}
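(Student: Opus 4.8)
The plan is to solve the inverse problem for the splitting twisted $\frac{U(4)}{U(2)\times U(2)}$-spectral problem furnished by Theorem \ref{T:inverse-m'} and Lemma \ref{L:inv}, and then to \emph{project} the resulting two-variable data down to the desired one-dimensional $\frac{U(3)}{U(1)\times U(2)}$-problem by deleting the second row and column (for the data (\ref{E:tilde-v-reduced}); the first row and column for (\ref{E:tilde-v-reduced-varied})). First I would apply Theorem \ref{T:inverse-m'} to $\tilde V$ to produce $M(\vec x,\lambda)$, and then Lemma \ref{L:inv} specialized to the slice $w_1=1,\ w_2=0$ (so that $\vec x=(x,0)$, $X=x\tilde a_1$ and $\tilde a=\tilde a_1$), to produce $\tilde b(\vec x)$, $\tilde v(\vec x)$ and the eigenfunction $\tilde\Psi=\tilde b\,M\,e^{x(\lambda\tilde a_1+\frac\epsilon\lambda\tilde\sigma_2(\tilde a_1))}$ solving (\ref{E:inv-eigen}). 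Writing $R$ for the projection deleting the second row and column, one checks directly that $R$ intertwines the relevant structures: $R(\tilde a_1)=a$, $R(\tilde\sigma_i)=\sigma_i$, $R(\tilde K_1')=K_1'$, and $R(\tilde V)$ is a $\frac{U(3)}{U(1)\times U(2)}$-scattering datum.

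The reduction is legitimate precisely because of the extended hypothesis: by (\ref{E:M-extend}), at $x_2=0$ the function $M$, its asymptotic coefficients $M_k^{\sharp},M_k^{\flat}$, the gauge $\tilde b\in\tilde K_1'$ and the potentials $B_1,C_1$ all have trivial second row and column, so $R$ is multiplicative on them. Applying $R$ to (\ref{E:inv-eigen}) then yields $\partial_x\Psi=\lambda\,bab^{-1}\Psi+\frac\epsilon\lambda\sigma_2(bab^{-1})\Psi+R(\tilde v)\Psi$, with $\Psi:=R(\tilde\Psi)$ and $b:=R(\tilde b)\in K_1'$. Here $R(\tilde v)$ is the only obstruction to (\ref{E:inv-eigen-reduced}); since $\tilde v\in\tilde{\mathcal S}\cap\mathbb S$ and $\tilde b$ is trivial on the $\{1,2\}$-block, (\ref{E:C}) gives $\tilde v|_{\{1,2\}}=-C_1|_{\{1,2\}}$, and the extended block structure collapses this to the single entry $R(\tilde v)=\tilde v_{11}e_{11}\in\mathcal S$. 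Combining the self-adjointness (\ref{E:self-adjoint-inv}) and the $\tilde\sigma_1$-reality forces $M_1^{\sharp}\in\tilde{\mathcal P}_1$, whence $C_1=[\tilde a_1,M_1^{\sharp}]\in\tilde{\mathcal K}_1$ and $\tilde v_{11}$ is proportional to $\mathrm{Re}\,(M_1^{\sharp})_{13}$.

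The hard part will be showing that this residual Cartan-direction potential vanishes, i.e.\ $\tilde v\equiv 0$. This is exactly where the \emph{twisted} involution must enter: I would use the $\tilde\sigma_2$-reality condition (\ref{E:sigma-12-inv}), which ties the $\lambda\to\infty$ coefficient $M_1^{\sharp}$ to the $\lambda\to 0$ coefficient $M_1^{\flat}$ (giving $M_1^{\sharp}=\epsilon\,\tilde\sigma_2(M_0^{-1}M_1^{\flat})$), together with the block structure (\ref{E:M-extend}) and the $\lambda\to 0$ normalization $B_1=M_0\,\tilde\sigma_2(\tilde a_1)\,M_0^{-1}$, to conclude that $\mathrm{Re}\,(M_1^{\sharp})_{13}=0$. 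This is the analog, in the present non-split setting, of the computation eliminating the higher-rank Cartan directions in the proof of Lemmas~6.1--6.3 of \cite{MW11}, which I would adapt rather than reproduce.

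With $\tilde v\equiv 0$ the reduced equation is precisely (\ref{E:inv-eigen-reduced}); projecting the reality and regularity conditions (\ref{E:determinant-m-inv})--(\ref{E:sigma-12-inv}) shows $\Psi\in L_+^\epsilon$, and projecting (\ref{E:b-schwartz}) gives $b\in K_1'$ with $b-1\in\mathbb S$ (respectively the shifted normalization for (\ref{E:tilde-v-reduced-varied})). To identify the scattering data I would re-embed $b$ via Lemma \ref{L:eigenfunction-ext}, solve the associated direct problem, and appeal to the uniqueness in Theorem \ref{T:inverse-m'} to conclude that the extended scattering datum of $b$ is the given $\tilde V$. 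Uniqueness of the pair $(\Psi,b)$ then follows from the uniqueness clauses in Theorem \ref{T:inverse-m'} and Lemma \ref{L:inv} together with the injectivity of $R$ on the block-structured loops.
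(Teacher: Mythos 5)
Your outer structure coincides with the paper's own proof: specialize Theorem \ref{T:inverse-m'} and Lemmas \ref{L:gauge}, \ref{L:inv} to the slice $(w_1,w_2)=(1,0)$, use the block form (\ref{E:M-extend}) to see that the residual potential collapses to $\tilde v=\textrm{diag}(i\nu,0,0,0)$, and project away the second row and column at the end; your identification of $\tilde v_{11}$ as a multiple of $\mathrm{Re}\,(M_1^{\sharp})_{13}$ (via $C_1=[\tilde a_1,M_1^{\sharp}]$ and $M_1^{\sharp}\in\tilde{\mathcal P}_1$) is also correct. The genuine gap is exactly the step you call the hard part. The conditions you propose to use --- (\ref{E:self-adjoint-inv}), the $\tilde\sigma_2$-reality of (\ref{E:sigma-12-inv}) in the form $M_1^{\sharp}=\epsilon\tilde\sigma_2(M_0^{-1}M_1^{\flat})$, the block structure (\ref{E:M-extend}), and the normalization $B_1=M_0\tilde\sigma_2(\tilde a_1)M_0^{-1}$ --- do \emph{not} imply $\mathrm{Re}\,(M_1^{\sharp})_{13}=0$. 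Concretely, with $E_{jk}$ the $4\times 4$ matrix units, fix real $c\neq 0$ and $s>0$, $s\neq\epsilon$, and set
\begin{equation*}
\phi(\lambda)=c\left(\frac{\lambda}{\lambda^{2}+s}-\frac{\epsilon\lambda}{\epsilon^{2}+s\lambda^{2}}\right),
\qquad
M(\lambda)=\exp\bigl(\phi(\lambda)\,(E_{13}-E_{31})\bigr).
\end{equation*}
Since $\phi$ is odd, real on $\mathbb R$, vanishes at $0$ and $\infty$, and satisfies $\phi(\epsilon/\lambda)=-\phi(\lambda)$, while $E_{13}-E_{31}$ is real antisymmetric and anticommutes with both $\tilde J_1$ and $\tilde J_2$, this $M$ satisfies $\det M\equiv 1$, (\ref{E:self-adjoint-inv}), both identities of (\ref{E:sigma-12-inv}) (note $M(0)=1$), the block form (\ref{E:M-extend}), $M\to 1$ as $|\lambda|\to\infty$, and even your key relation $M_1^{\sharp}=\epsilon\tilde\sigma_2(M_0^{-1}M_1^{\flat})$; nevertheless $M_1^{\sharp}=c(1-\epsilon/s)(E_{13}-E_{31})$ has a real, nonzero $(1,3)$ entry. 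So no purely algebraic manipulation of the listed symmetries can give $\tilde v\equiv 0$: the vanishing is not a property of matrix functions with these symmetries, but of the particular solution of the Riemann--Hilbert problem. Deferring to Lemmas 6.1--6.3 of \cite{MW11} cannot fill the hole either: those arguments are exactly what is already consumed in proving Lemmas \ref{L:pdem'}--\ref{L:inv}, whose conclusion stops at $\tilde v\in\tilde{\mathcal S}\cap\mathbb S$, and in the split setting of \cite{MW11} there is no leftover $\tilde{\mathcal S}$-direction, hence no analogous computation to adapt.

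The paper closes this step analytically rather than algebraically. Because $\tilde v=\textrm{diag}(i\nu,0,0,0)$, the term $\tilde v\tilde\Psi$ is absorbed by the gauge $\tilde b_1=\tilde b\cdot\textrm{diag}(1,1,e^{i\nu},1)$, $\tilde\Psi_1=\textrm{diag}(e^{i\nu},1,1,1)\tilde\Psi$, producing the clean equation (\ref{E:eigenfunction-delete-v}). A Liouville-type determinant argument (Proposition 2.1 of \cite{MW11}) shows $\det\tilde\Psi_1$ is constant, hence $e^{i\nu}\det\tilde b\equiv e^{i(\nu(-\infty)-\mu)}$; choosing the free constant $\mu$ of Lemma \ref{L:gauge} equal to $\nu(-\infty)$ --- legitimate because (\ref{E:invariant}) guarantees $\nu$ is independent of $\mu$ --- yields (\ref{E:tildeb1}), i.e.\ $\tilde b_1\in 1_{2\times 2}\otimes SU(2)$ with $\tilde b_1-1\in\mathbb S$. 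One then solves the \emph{direct} problem for the potential $\tilde b_1$ and invokes uniqueness of bounded, normalized eigenfunctions: by (\ref{E:lax-infty}) the direct eigenfunction tends to $\tilde b_1$ as $|\lambda|\to\infty$, while by (\ref{E:RH-bdry-m'}) and (\ref{E:eigenfunction-delete-v-3}) the constructed one tends to $\textrm{diag}(e^{i\nu},1,1,1)\tilde b$; equating the two limits, with $\tilde b\in 1_{2\times 2}\otimes U(2)$, forces $e^{i\nu}\equiv 1$, i.e.\ $\nu=0$. This comparison with the direct problem is the ingredient missing from your proposal; without it (or an equivalent analytic uniqueness argument) the projected equation retains the spurious term $R(\tilde v)\Psi$ and (\ref{E:inv-eigen-reduced}) is not established. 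The remainder of your outline (the projection $R$, identification of the extended scattering data, uniqueness of $(\Psi,b)$) matches the paper and is sound once this step is repaired.
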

\begin{proof} We only prove the case (\ref{E:tilde-v-reduced}). Case (\ref{E:tilde-v-reduced-varied}) can be argued by analogy. 
Let $M(\vec x,\lambda)$, $\tilde b(\vec x)$, $\tilde \Psi(x,\lambda)$ be derived from  Theorem \ref{T:inverse-m'}, Lemma \ref{L:gauge} and \ref{L:inv} by specially choosing $(w_1,w_2)=(1,0)$ in (\ref{E:vecx}), i.e. $\vec x=(x,0)$, $ X=x\tilde a_1 $, $\tilde a=\tilde a_1 \in \tilde{\mathcal A}$. 
Define
\begin{equation}\label{E:M}
M(\vec x,\lambda)=\tilde m'(x,\lambda)
\end{equation}
Applying Theorem \ref{T:inverse-m'} and (\ref{E:tilde-v-reduced}), we have 
\[
\tilde m'(x,\lambda)=
\left(\begin{array}{cccc}
m'_{11} & 0& m'_{12} & m'_{13}\\
0 & 1 & 0& 0\\
m'_{21}& 0 & m'_{22} & m'_{23}\\
m'_{31}& 0 & m'_{32} & m'_{33}
\end{array}
\right).
\]
Together with $\tilde b\in\tilde K_1'$, we find $\tilde v$ in(\ref{E:inv-eigen}) is of the form $\textrm{diag }(i\nu,0,0,0)$. Hence (\ref{E:inv-eigen}) can be gauged to
\begin{gather}
\frac{\partial \tilde\Psi_1}{\partial x}=\lambda \tilde b_1\tilde a{\tilde b_1}^{-1}\tilde\Psi_1+\frac \epsilon\lambda\tilde\sigma_2(\tilde b_1\tilde a{\tilde b}^{-1}_1)\tilde\Psi_1,\label{E:eigenfunction-delete-v}\\
 \tilde b_1=\tilde b\cdot \textrm{diag }(1,1,e^{i\nu},1)\in \tilde K_1'=1\otimes U(2) ,
\label{E:eigenfunction-delete-v-2}\\
\tilde\psi_1=\textrm{diag }(e^{i\nu},1,1,1)\tilde \Psi.\label{E:eigenfunction-delete-v-3}
\end{gather}
Applying the same argument as that in Proposition 2.1 in \cite{MW11} to (\ref{E:eigenfunction-delete-v}), we obtain that $\det(\tilde \Psi_1)$ is constant. Together with 
(\ref{E:determinant-m-inv}), (\ref{E:definepsi}), and (\ref{E:eigenfunction-delete-v-3}), we conclude $e^{i\nu}\det \tilde b$ is constant which equals to $e^{i(\nu(x=-\infty)-\mu)}$ by (\ref{E:gauge-infty}) and (\ref{E:eigenfunction-delete-v-3}). Equation (\ref{E:eigenfunction-delete-v-2}) then yields
\[
\textit{$\det\tilde b_1=e^{i\nu}\det \tilde b\equiv e^{i(\nu(x=-\infty)-\mu)}$}.
\]
Besides,  noting $\nu(x=-\infty)$ exists (by (\ref{E:C})) and is determined by $\tilde v$ (independent of $\mu$). Consequently 
\begin{equation}\label{E:tildeb1}
\tilde b_1=\left(\begin{array}{cc}
1_{2\times 2} & 0 \\
0 & \omega
\end{array}
\right),\quad\omega\in SU(2),\quad \tilde b_1-1\in\mathbb S
\end{equation}  by choosing $\mu=\nu(x=-\infty)$ and using  (\ref{E:invariant}), (\ref{E:eigenfunction-delete-v-2}). 

However, by solving the direct problem of (\ref{E:eigenfunction-delete-v}) with $\tilde b_1$ satisfying (\ref{E:tildeb1}) and applying (\ref{E:lax-infty}), (\ref{E:RH-bdry-m'}), (\ref{E:eigenfunction-delete-v-2}), and (\ref{E:eigenfunction-delete-v-3}), as matter of fact,  $\nu=0$.

Therefore the theorem is proved by defining  
\begin{gather}\label{E:renormalized-reduced}
\Psi(x,\lambda)= b(x) m'(x,\lambda)e^{x(\lambda  a+\frac \epsilon\lambda \sigma_2(  a))}
\end{gather}
with  $a$  defined by (\ref{E:GMV-lax-bab}), $m'(x,\lambda)=(m'_{ij})$, and $ b=\left(\begin{array}{cc}
1_{1\times 1} & 0 \\
0 & \omega
\end{array}
\right)\in K_1'$.

\end{proof}

\section{The Cauchy problem  }\label{S:cauchy-GMV}
We first apply the inverse scattering theory established in Section \ref{S:direct-GMV} and \ref{S:inverse-GMV} to solve the initial value problem of the $k$-th twisted $\frac {U(3)}{U(1)\times U(2)}$-flow.
\begin{theorem}\label{T:cauchy}
Given $d_1$, $d_3\in\mathbb R$, and $b_0(x)\in K_1'$ such that either $b_0-1\in\mathbb S$ or $b_0-\left(\begin{array}{ccc}1 & 0 & 0\\ 0 & 0 &-1\\ 0 & 1 & 0\end{array}\right)\in\mathbb S$. If the scattering data for $b_0(x)$ is generic, then  the initial value problem of the $k$-th twisted $\frac {U(3)}{U(1)\times U(2)}$-flow admits a unique solution $b(x,t)\in \mathfrak P_1$ or $ \mathfrak P_2$. 
More precisely, there uniquely exist $m(x,t,\lambda)\in L_-^\epsilon$ and $\Psi(x,t,\lambda)=m(x,t,\lambda)e^{x\hat J_{1,0}+t\hat J_k}\in L_+^\epsilon$ such that 
\begin{equation}\label{E:lax-unique-twisted}
\left[{\bf L}, {\bf M}\right] =0
\end{equation}
with
\begin{equation}\label{E:cong-twisted}
\begin{split}
&{\bf L}=\partial_x-\frac{\partial\Psi}{\partial x}\Psi^{-1}=\partial_x-(\lambda bab^{-1}+\frac \epsilon\lambda\sigma_2(bab^{-1})) ,\\
&{\bf M}=\partial_t-\frac{\partial\Psi}{\partial t}\Psi^{-1},\\
&b(x,0)=b_0(x),\quad\textit{ $b(x,t)=m(x,t,\infty)\in \mathfrak P_1$ $(\textit{ or }b(x,t)\in \mathfrak P_2\ )$}.
\end{split}
\end{equation}
\end{theorem}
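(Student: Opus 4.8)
The plan is to solve the initial value problem by the inverse scattering transform in the style of \cite{ABT86}, \cite{MW11}: run $b_0$ through the direct map of Section~\ref{S:direct-GMV}, flow the resulting scattering data linearly in $t$, and reconstruct $b(x,t)$ at each time by the inverse map of Section~\ref{S:inverse-GMV}. First I would apply Theorem~\ref{T:analytic-charact} to $b_0(x)$, whose scattering data is assumed generic, to obtain $V_0(\lambda)$, $\lambda\in\mathbb R\cup Z$, satisfying the analytic constraint (\ref{E:ana-sd}) and the algebraic constraints (\ref{E:determinant})--(\ref{E:ds-sigma2}); via Lemma~\ref{L:eigenfunction-ext} this produces the extended datum $\tilde V_0(\lambda)$ of the form (\ref{E:tilde-v}). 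The genericity is exactly what guarantees that $Z$ is a finite subset of $\mathbb C\setminus\mathbb R$ with simple nilpotent poles, so that the reconstruction of Section~\ref{S:inverse-GMV} applies.

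Second, I would evolve the datum by conjugation,
\[
V(t,\lambda)=e^{t\hat J_k}\,V_0(\lambda)\,e^{-t\hat J_k},\qquad \lambda\in\mathbb R\cup Z,
\]
with $\hat J_k$ as in (\ref{E:j-0}), and verify that every constraint persists. Since $\hat J_k\in\mathcal L^\epsilon$ is anti-Hermitian on the real axis, $e^{t\hat J_k}$ is unitary there, so conjugation preserves both the flatness (\ref{E:ana-sd}) at $\lambda=0,\infty$ and the self-adjointness (\ref{E:self-adjoint-v}); the relations $\sigma_1(\hat J_k(-\lambda))=\hat J_k(\lambda)$ and $\sigma_2(\hat J_k(\epsilon/\lambda))=\hat J_k(\lambda)$ built into $\hat J_k\in\mathcal L_+^\epsilon$ propagate the reductions (\ref{E:sigma-1-loop-v})--(\ref{E:sigma-2-loop-v}), while conjugation automatically keeps $\det V\equiv1$. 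As the conjugation fixes $Z$ and maps nilpotents to nilpotents, the discrete data (\ref{E:ds-degenerate})--(\ref{E:ds-sigma2}) survive too; and because the evolution is carried out at the $U(3)$ level, the extended datum $\tilde V(t,\lambda)$ retains the block form (\ref{E:tilde-v-reduced}), respectively (\ref{E:tilde-v-reduced-varied}), the two cases being distinguished by whether $b_0$ approaches the $\mathfrak P_1$- or the $\mathfrak P_2$-normalization.

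Third, for each fixed $t$ I would feed $\tilde V(t,\lambda)$ into Theorem~\ref{T:inv}, obtaining a unique $b(x,t)\in K_1'$ with $b(\cdot,t)-1\in\mathbb S$ (or the $\mathfrak P_2$-normalization) together with its normalized eigenfunction; analytic dependence of $\tilde V(t,\cdot)$ on $t$ and the uniqueness in the associated Riemann--Hilbert problem yield the joint regularity of $m(x,t,\lambda)$ and place $b$ in $\mathfrak P_1$ (or $\mathfrak P_2$) for all $t$. Setting $\Psi=m\,e^{x\hat J_{1,0}+t\hat J_k}$, the decisive observation is that, because $[\hat J_{1,0},\hat J_k]=0$ by (\ref{E:commute}), the jump relating $\Psi_+$ and $\Psi_-$ collapses in the bare frame to the $(x,t)$-independent constant $V_0$, so $(\partial_t\Psi)\Psi^{-1}$ has no jump across $\mathbb R$ and no singularity at $Z$. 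Combined with the degree-$k$ growth coming from $e^{t\hat J_k}$, a Liouville-type argument forces $(\partial_t\Psi)\Psi^{-1}$ to be an element of $\mathcal L_+^\epsilon$ of the form (\ref{E:first-p-j}); matching it with $\hat\pi_+(m\hat J_k m^{-1})$ up to $\mathcal S_1$ as in (\ref{E:cong-twisted-0}) identifies $\mathbf M$. Since $\mathbf L\Psi=\mathbf M\Psi=0$ by construction and $\Psi$ is invertible, the zero-curvature equation (\ref{E:lax-unique-twisted}) follows, and $b(x,t)=m(x,t,\infty)$ solves the $k$-th flow of Definition~\ref{D:TH} with $b(x,0)=b_0$.

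For uniqueness I would note that any solution of the $k$-th flow with $b(x,0)=b_0$ has scattering data $V_0$ at $t=0$ by the uniqueness in the direct problem (Theorem~\ref{T:existence}), and that its data must evolve by the same conjugation because its $\Psi$ solves the $t$-linear system; the uniqueness in Theorem~\ref{T:inv} then pins down $m(x,t,\lambda)\in L_-^\epsilon$, hence $\Psi\in L_+^\epsilon$ and $b(x,t)$, uniquely. I expect the main obstacle to be the $t$-equation itself: showing that $(\partial_t\Psi)\Psi^{-1}$ is genuinely a Laurent polynomial of the correct bidegree lying in $\mathcal L_+^\epsilon$, and doing so compatibly with the non-split factorization $\mathcal L^\epsilon=\mathcal L_+^\epsilon+_{\mathcal S_1}\mathcal L_-^\epsilon$, whose $\mathcal S_1$-ambiguity obstructs a naive dressing argument. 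This demands the fine control of $m(x,t,\cdot)$ near $\lambda=0,\infty$ supplied by the Riemann--Hilbert problem and careful bookkeeping of the $\mathcal S_1$-valued discrepancies already flagged in Lemma~\ref{L:coeff-complete}; justifying the differentiability of the reconstructed $m$ in $t$, needed even to write $\partial_t\Psi$, is a secondary technical point handled by the smooth parameter dependence of the Riemann--Hilbert solution.
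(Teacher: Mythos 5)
Your proposal follows essentially the same route as the paper's proof: apply the direct problem (Theorems \ref{T:existence}, \ref{T:analytic-charact}) to $b_0$, evolve the scattering data by the conjugation $V(\lambda,t)=e^{t\hat J_k}V(\lambda,0)e^{-t\hat J_k}$ exactly as in (\ref{E:cauchy2}), reconstruct $b(x,t)$ and $m'(x,t,\lambda)$ at each time via Theorem \ref{T:inv}, and conclude the zero-curvature equation from the fact that $(\partial_x\Psi)\Psi^{-1}$ and $(\partial_t\Psi)\Psi^{-1}$ lie in $\mathcal L_+^\epsilon$. Your write-up supplies details the paper leaves implicit (persistence of the constraints (\ref{E:determinant})--(\ref{E:ds-sigma2}) under conjugation, the Liouville-type identification of $\mathbf M$, and the uniqueness argument), but the underlying method is identical.
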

\begin{proof}
We first apply Theorem \ref{T:existence} to solve the eigenfunction of (\ref{E:GMV-lax-spectral}) for $a$, $b(x,0)=b_0(x)$ defined by (\ref{E:GMV-lax-bab}) and (\ref{E:b}). Applying  Definition \ref{D:scattering-data}  and
Theorem \ref{T:analytic-charact}, we obtain the scattering data $ V(\lambda,0)$, $\lambda\in\mathbb R\cup Z$ for the potential $b(x,0)$. Define
\begin{equation}\label{E:cauchy2}
\begin{split}
&V(\lambda,t)=e^{t\hat J_k}V(\lambda,0) e^{-t\hat J_k},\textit{ for $\lambda\in\mathbb R\cup Z$}
\end{split}
\end{equation}
So $V(\lambda, t)$ satisfies the assumption of Theorem \ref{T:inv} and there exist uniquely smooth $m'(x,t,\lambda)\in L_-^\epsilon$, $b(x,t)\in \mathfrak P_1$ (or $b(x,t)\in \mathfrak P_2$) satisfying
\begin{gather}
m(x,t,\lambda)=b(x,t)m'(x,t,\lambda),\label{E:cauchy-1}\\
\Psi(x,t,\lambda)=b(x,t)m'(x,t,\lambda)e^{x\hat J_{1,0}+t\hat J_k},\label{E:cauchy-2}
\end{gather}
and
\begin{gather}\label{E:spectral-x}
\frac{\partial \Psi}{\partial x}(x,t,\lambda)=\lambda bab^{-1}\Psi(x,t,\lambda)+\frac \epsilon\lambda\sigma_2(bab^{-1})\Psi(x,t,\lambda).
\end{gather}
So   ${\bf L}=\frac{\partial \Psi}{\partial x}\Psi^{-1}\in\mathcal L_+^\epsilon$. Therefore,
${\bf M}=\frac{\partial \Psi}{\partial t}\Psi^{-1}\in\mathcal L_+^\epsilon$ and $
\left[{\bf L}, {\bf M}\right] =0$. 
\end{proof}
Consequently, we can solve the initial value problem for the GMV equation for $(\alpha,\beta)=(\alpha,-4\epsilon)$ or $(\alpha,\beta)=(4\epsilon, \beta)$.
\begin{corollary}\label{C:cauchy}
Given  $\epsilon>0$, $\beta\in\mathbb R$, and a (generic) function $\vec u_0(x)-\left(\begin{array}{c} 1\\ 0\end{array}\right)\in \mathbb S$,  
the initial value problem of the GMV equation 
\begin{equation}\label{E:GMV-1-2}
\begin{split}
&i\vec{u}_t=(\vec u_x-\vec u(\vec u^*\cdot \vec u_x))_x+4\epsilon\vec u(\vec u^*\cdot J\vec u)+\textbf{A}\vec u,\\
&\vec u^* \vec u=1,\quad \vec u\in\mathbb C^2,\quad \vec u(x,0)=\vec u_0(x),\\
&J=\textrm{diag}\,(
-1,\,1),\quad \textbf{A}=\textrm{diag}\,(
4\epsilon ,\, \beta ),
\end{split}
\end{equation}
admits one family of global solutions. 
\end{corollary}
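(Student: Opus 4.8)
The plan is to read the corollary off from the two structural results already in hand: the global solvability of the twisted flow (Theorem~\ref{T:cauchy}) and its identification with the GMV equation (Theorem~\ref{T:TH-GMV}). First I would package the initial datum $\vec u_0=(u_0,v_0)^T$ into the potential
\[
b_0(x)=\left(\begin{array}{ccc} 1 & 0 & 0 \\ 0 & u_0 & -\bar v_0 \\ 0 & v_0 & \bar u_0 \end{array}\right)\in K_1'.
\]
Since $\vec u_0-(1,0)^T\in\mathbb S$ forces $u_0\to 1$ and $v_0\to 0$, we have $b_0-1\in\mathbb S$, so $b_0$ lies in the $\mathfrak P_1$ boundary class and, by hypothesis, carries generic scattering data. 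Next I would fix $k=2$ and choose $(d_1,d_3)$ subject to the single relation $d_3-d_1=\beta-2\epsilon$ (for instance $d_1=0$, $d_3=\beta-2\epsilon$). Theorem~\ref{T:cauchy} then delivers a unique global-in-$t$ solution $b(x,t)\in\mathfrak P_1$ of the second twisted $\frac{U(3)}{U(1)\times U(2)}$-flow with $b(x,0)=b_0(x)$, together with $m\in L_-^\epsilon$ and $\Psi=m\,e^{x\hat J_{1,0}+t\hat J_2}\in L_+^\epsilon$ solving $[\mathbf L,\mathbf M]=0$.

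Extracting the lower $2\times2$ block of $b(x,t)$ produces $\vec u=(u,v)^T$; because $b(\cdot,t)\in K_1'=1\otimes SU(2)$, the constraint $\vec u^*\vec u=|u|^2+|v|^2=1$ holds automatically. Invoking the $\mathfrak P_1$ branch of Theorem~\ref{T:TH-GMV} then shows that this $\vec u$ solves the GMV equation with
\[
\textbf{A}=\textrm{diag}\,(4\epsilon,\ 2\epsilon+d_3-d_1)=\textrm{diag}\,(4\epsilon,\ \beta),
\]
which is exactly (\ref{E:GMV-1-2}); the initial condition $\vec u(x,0)=\vec u_0(x)$ is inherited from $b(x,0)=b_0(x)$. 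Globality in $t$ is built into the construction, since the scattering data evolves by the explicit linear law $V(\lambda,t)=e^{t\hat J_2}V(\lambda,0)e^{-t\hat J_2}$, defined for every $t\in\mathbb R$, and the inverse reconstruction of Theorem~\ref{T:inv} is available at each time. The one-parameter freedom in $d_1$ (with $d_3=d_1+\beta-2\epsilon$ held fixed) accounts for the ``family'': by Remark~\ref{R:uniqueness} every such choice drives a distinct twisted flow yet projects to the same global GMV solution.

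I expect no analytic obstacle beyond what Theorems~\ref{T:cauchy} and~\ref{T:TH-GMV} already supply; the work is essentially one of bookkeeping. The one point that must be highlighted is that the first diagonal entry of $\textbf{A}$ is \emph{not} free but is pinned to $4\epsilon$ by the loop-algebra structure---precisely the obstruction recorded in Remark~\ref{R:obstruction}, stemming from the commutativity condition (\ref{E:commute})---so only the second slot $\beta$ can be prescribed, which is why the corollary fixes $\textbf{A}=\textrm{diag}(4\epsilon,\beta)$. The persistence of the $\mathfrak P_1$ class, and hence of the Schwartz boundary behaviour $\vec u\to(1,0)^T$, is already part of the conclusion of Theorem~\ref{T:cauchy}, so it needs no separate verification.
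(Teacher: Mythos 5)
Your construction coincides with the paper's own proof: fix $k=2$, impose $d_3-d_1=\beta-2\epsilon$ (the paper realizes this as $(d_1,d_3)=(2\epsilon+\alpha_1,\beta+\alpha_1)$, $\alpha_1\in\mathbb R$, inside $\hat J_2$), and invoke Theorems \ref{T:TH-GMV} and \ref{T:cauchy}; the bookkeeping about $b_0-1\in\mathbb S$, the inherited initial condition, globality in $t$, and the pinning of the first entry of $\textbf{A}$ to $4\epsilon$ (Remark \ref{R:obstruction}) all match the paper.

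Where you part company with the paper is the justification of ``one family''. The paper's proof asserts the opposite of your final claim: it says different $\alpha_1$'s correspond to \emph{different} $b(x,t)$'s ``since the scattering data differ when $t>0$ by (\ref{E:cauchy2})'', i.e.\ the family is meant to consist of genuinely distinct GMV solutions with the same initial datum, whereas you claim, citing Remark \ref{R:uniqueness}, that every admissible choice of $(d_1,d_3)$ projects to the \emph{same} GMV solution. These statements are contradictory, and it is worth recording that the algebra favors your version: shifting $(d_1,d_3)\mapsto(d_1+\alpha_1,d_3+\alpha_1)$ changes $\hat J_2$ by the scalar loop $-i\alpha_1 I$, which cancels identically in the conjugation (\ref{E:cauchy2}); hence $V(\lambda,t)$ is independent of $\alpha_1$, and therefore, by the uniqueness statements of Theorems \ref{T:inv} and \ref{T:cauchy}, so is the reconstructed $b(x,t)$. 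Only $\Psi$ and $\mathbf{M}$ change, by the phase $e^{-i\alpha_1 t}$ and the additive constant $i\alpha_1 I$ respectively, which is the sense in which the twisted flows are ``distinct''. So your argument correctly delivers a global solution of (\ref{E:GMV-1-2}), exactly as the paper's does; but neither your proof nor the paper's mechanism actually produces a family of pairwise distinct solutions, and if the corollary's ``one family'' is read in that stronger sense, the defect lies in the paper's own last sentence rather than in anything your route could repair.
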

\begin{proof}
The solvability follows from setting $\hat J_k$ to be
\begin{gather*}
\hat J_2=i(a^2\lambda^2
-\left(\begin{array}{ccc}2\epsilon+\alpha_1 & 0 & 0 \\ 0 & 2\epsilon+\alpha_1& 0 \\ 0 & 0 & \beta+\alpha_1 \end{array}\right)
+\sigma_2(a^2)(\frac\epsilon\lambda)^{2}),\quad
\alpha_1\in\mathbb R,
\end{gather*}
and applying Theorem \ref{T:TH-GMV} and \ref{T:cauchy}. Different $\alpha_1$'s correspond to different $b(x,t)$'s since the scattering data differ when $t>0$ by (\ref{E:cauchy2}).
\end{proof}

\begin{corollary}\label{C:cauchy-1}
Given $\epsilon>0$, $\alpha\in\mathbb R$, and a (generic) function $\vec u_0(x)-\left(\begin{array}{c} 0\\ 1\end{array}\right)\in \mathbb S$, 
the initial value problem of the GMV equation 
\begin{equation}\label{E:GMV-1-3}
\begin{split}
&i\vec{u}_t=(\vec u_x-\vec u(\vec u^*\cdot \vec u_x))_x+4\epsilon\vec u(\vec u^*\cdot J\vec u)+\textbf{A}\vec u,\\
&\vec u^* \vec u=1,\quad \vec u\in\mathbb C^2,\quad \vec u(x,0)=\vec u_0(x),\\
&J=\textrm{diag}\,(
-1,\,1),\quad \textbf{A}=\textrm{diag}\,(
\alpha ,-4\epsilon ),
\end{split}
\end{equation}
admits  one family of global solutions. 
\end{corollary}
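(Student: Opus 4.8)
The plan is to run the argument of Corollary \ref{C:cauchy} on the $\mathfrak P_2$ branch rather than the $\mathfrak P_1$ branch, choosing the constant part of $\hat J_2$ so that the second case of Theorem \ref{T:TH-GMV} returns the anisotropy matrix $\textbf{A}=\textrm{diag}(\alpha,-4\epsilon)$. First I would encode the initial data: since $\vec u_0-\begin{pmatrix}0\\1\end{pmatrix}\in\mathbb S$, the gauge
\[
b_0(x)=\begin{pmatrix}1&0&0\\0&u_0&-\bar v_0\\0&v_0&\bar u_0\end{pmatrix},\qquad \vec u_0=\begin{pmatrix}u_0\\v_0\end{pmatrix},
\]
satisfies $b_0-\begin{pmatrix}1&0&0\\0&0&-1\\0&1&0\end{pmatrix}\in\mathbb S$, which is exactly the second admissible boundary type appearing in the hypothesis of Theorem \ref{T:cauchy}.

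Next, assuming (as in the statement) that the scattering data of $b_0$ is generic, I would apply Theorem \ref{T:cauchy} with $k=2$ to obtain a unique $m(x,t,\lambda)\in L_-^\epsilon$ and the corresponding potential $b(x,t)=m(x,t,\infty)\in\mathfrak P_2$ solving the second twisted $\frac{U(3)}{U(1)\times U(2)}$-flow with $b(x,0)=b_0(x)$. I would fix this flow by taking
\[
\hat J_2=i\Big(a^2\lambda^2-\textrm{diag}(-2\epsilon+\alpha_1,\,-2\epsilon+\alpha_1,\,\alpha+\alpha_1)+\sigma_2(a^2)\tfrac{\epsilon^2}{\lambda^2}\Big),\qquad \alpha_1\in\mathbb R,
\]
i.e. $d_1=-2\epsilon+\alpha_1$ and $d_3=\alpha+\alpha_1$, so that $d_3-d_1=\alpha+2\epsilon$. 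Feeding $b(x,t)\in\mathfrak P_2$ into the second ($\mathfrak P_2$) case of Theorem \ref{T:TH-GMV} then shows that $\vec u=\begin{pmatrix}u\\v\end{pmatrix}$ satisfies the GMV equation with
\[
\textbf{A}'=\textrm{diag}(-2\epsilon+d_3-d_1,\,-4\epsilon)=\textrm{diag}(\alpha,-4\epsilon),
\]
which is (\ref{E:GMV-1-3}); globality in $t$ and the regularity $\vec u(\cdot,t)-\begin{pmatrix}0\\1\end{pmatrix}\in\mathbb S$ follow from the Schwartz control on $m$ and $b$ already supplied by Theorem \ref{T:cauchy} and the inverse theory of Section \ref{S:inverse-GMV}.

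Because the direct and inverse problems, and Theorem \ref{T:cauchy} itself, were set up for both admissible boundary conditions (the $\mathfrak P_2$ case being reduced to the $\mathfrak P_1$ one by the change of variables in Remark \ref{R:remarks}), no new analytic difficulty arises; the bulk of the work is the bookkeeping of verifying that the $\mathfrak P_2$ branch of Theorem \ref{T:TH-GMV} is the operative one and that $(d_1,d_3)$ have been matched to produce $\textrm{diag}(\alpha,-4\epsilon)$ rather than $\textrm{diag}(4\epsilon,\beta)$. The one delicate point, exactly as in Corollary \ref{C:cauchy}, is the exhibition of the \emph{family}: I would argue that distinct admissible choices of the free constant $\alpha_1$ evolve the scattering data differently through (\ref{E:cauchy2}) for $t>0$ and hence yield distinct solutions $\vec u(x,t)$ sharing the initial profile $\vec u_0$. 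Confirming that the evolved data genuinely depend on this parameter—rather than differing only by a harmless overall factor—is the step I would scrutinize most carefully.
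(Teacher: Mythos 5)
Your proposal is correct and is essentially the paper's own proof: the same choice of $\hat J_2$ with $d_1=-2\epsilon+\alpha_1$, $d_3=\alpha+\alpha_1$, the same application of Theorems \ref{T:TH-GMV} (the $\mathfrak P_2$ branch) and \ref{T:cauchy}, and the same identification of the family via the $\alpha_1$-dependence of the evolved scattering data (\ref{E:cauchy2}). The paper states this in three lines; your version merely spells out the bookkeeping.
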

\begin{proof}
Set $\hat J_k$ to be
\begin{gather*}
\hat J_2=i(a^2\lambda^2
-\left(\begin{array}{ccc}-2\epsilon+\alpha_1 & 0 & 0 \\ 0 & -2\epsilon+\alpha_1& 0 \\ 0 & 0 & \alpha+\alpha_1 \end{array}\right)
+\sigma_2(a^2)(\frac\epsilon\lambda)^{2}),\quad
\alpha_1\in\mathbb R,
\end{gather*}
and apply Theorem \ref{T:TH-GMV} and \ref{T:cauchy}. Different $\alpha_1$'s correspond to different $b(x,t)$'s since the scattering data differ when $t>0$ by (\ref{E:cauchy2}).
\end{proof}


\end{document}